\title{ETH-Hardness of Approximating 2-CSPs and \\ Directed Steiner Network}
\author{
Irit Dinur\thanks{Email: \texttt{irit.dinur@weizmann.ac.il}. This work was supported by BSF grant 2014371.} \vspace{-0.5em}\\
Weizmann Institute of Science
\and
Pasin Manurangsi\thanks{Email: \texttt{pasin@berkeley.edu}. This work was done while the author was visiting Weizmann Institute of Science.} \vspace{-0.5em}\\
UC Berkeley
}
\begin{document}

\maketitle
\thispagestyle{empty}

\begin{abstract}
We study the 2-ary constraint satisfaction problems (2-CSPs), which can be stated as follows: given a constraint graph $G = (V, E)$, an alphabet set $\Sigma$ and, for each edge $\{u, v\} \in E$, a constraint $C_{uv} \subseteq \Sigma \times \Sigma$, the goal is to find an assignment $\sigma: V \to \Sigma$ that satisfies as many constraints as possible, where a constraint $C_{uv}$ is said to be satisfied by $\sigma$ if $(\sigma(u), \sigma(v)) \in C_{uv}$.

While the approximability of 2-CSPs is quite well understood when the alphabet size $|\Sigma|$ is constant (see e.g.~\cite{Rag08}), many problems are still open when $|\Sigma|$ becomes super constant. One open problem that has received significant attention in the literature is whether it is hard to approximate 2-CSPs to within a polynomial factor of both $|\Sigma|$ and $|V|$ (i.e. $(|\Sigma||V|)^{\Omega(1)}$ factor). As a special case of what is now referred to as the Sliding Scale Conjecture, Bellare \etal~\cite{BGLR93} suggested that the answer to this question might be positive. Alas, despite many efforts by researchers to resolve this conjecture (e.g. \cite{RazS97, ArSu, DFKRS11,DHK15,Mos17}), it still remains open to this day.

In this work, we separate between $\card V$ and $\card \Sigma$ and ask a closely related but weaker question: is it hard to approximate 2-CSPs to within a polynomial factor of $|V|$ (but while $\card\Sigma$ may be super-polynomial in $\card V$)? Assuming the exponential time hypothesis (ETH), we answer this question positively. Specifically, we show that, unless ETH fails, no polynomial time algorithm can approximate 2-CSPs to within a factor of $|V|^{1 - 1/(\log |V|)^{1/2 - \rho}}$ for every $\rho > 0$. Note that our ratio is not only polynomial but also almost linear. This is almost optimal since a trivial algorithm yields an $O(|V|)$-approximation for 2-CSPs.

Thanks to a known reduction~\cite{DK99,CFM17} from 2-CSPs to the Directed Steiner Network (DSN) problem (aka Directed Steiner Forest), our result implies an inapproximability result for the latter with polynomial ratio in terms of the number of demand pairs. Specifically, assuming ETH, no polynomial time algorithm can approximate DSN to within a factor of $k^{1/4 - o(1)}$ where $k$ is the number of demand pairs. The ratio is roughly the square root of the best known approximation ratio achieved by polynomial time algorithms due to Chekuri \etal~\cite{CEGS11} and Feldman \etal~\cite{FKN12}, which yield $O(k^{1/2 + \varepsilon})$-approximation for every constant $\varepsilon > 0$.

Additionally, if we assume a stronger assumption that there exists $\varepsilon > 0$ such that no subexponential time algorithm can distinguish a satisfiable 3-CNF formula from one which is not even $(1 - \varepsilon)$-satisfiable (aka Gap-ETH), then, for 2-CSPs, our reduction not only rules out polynomial time (i.e. $(|V||\Sigma|)^{O(1)}$ time) algorithms, but also fixed parameter tractable (FPT) algorithms parameterized by the number of variables $|V|$. These are algorithms with running time $t(|V|) \cdot (|V||\Sigma|)^{O(1)}$ where $t$ can be any function. Similar improvements also apply for DSN parameterized by the number of demand pairs $k$.

\end{abstract}

\newpage
\setcounter{page}{1}

\section{Introduction}

We study the 2-ary constraint satisfaction problems (2-CSPs), which can be stated as follows: given a constraint graph $G = (V, E)$, an alphabet set $\Sigma$ and, for each edge $\{u, v\} \in E$, a constraint $C_{uv} \subseteq \Sigma \times \Sigma$, the goal is to find an assignment $\sigma: V \to \Sigma$ that satisfies as many constraints as possible, where a constraint $C_{uv}$ is said to be satisfied by $\sigma$ if $(\sigma(u), \sigma(v)) \in C_{uv}$. Throughout the paper, we use $k$ to denote the number of variables $|V|$, $n$ to denote the the alphabet size $|\Sigma|$, and $N$ to denote the instance size $nk$.

Constraint satisfaction problems and their inapproximability have been studied extensively since the proof of the PCP theorem in the early 90's \cite{AS,ALMSS}. Most of the effort has been directed towards understanding the approximability of CSPs with constant arity and constant alphabet size, leading to a reasonable if yet incomplete understanding of the landscape \cite{Hastad, Khot-ugc,KKMO, Rag08, AM, Chan}. When the alphabet size grows, the sliding scale conjecture of  \cite{BGLR93} predicts that the hardness of approximation ratio will grow as well, and be at least polynomial\footnote{Througout the paper, we use \emph{polynomial in $x$} (or $\poly(x)$) to refer to $x^c$ for some real number $c > 0$.} in the alphabet size $n$. This has been confirmed for values of $n$ up to $2^{(\log N)^{1-\delta}}$, see \cite{RazS97, ArSu, DFKRS11}. Proving the same for $n$ that is polynomial in $N$ is the so-called polynomial sliding scale conjecture and is still quite open. Before we proceed, let us note that the aforementioned results of \cite{RazS97, ArSu, DFKRS11} work only for arity strictly larger than two and, hence, do not imply inapproximability for 2-CSPs. We will discuss the special case of 2-CSPs in details below.

The polynomial sliding scale conjecture has been approached from different angles. In \cite{DHK15} the authors try to find the smallest arity and alphabet size such that the hardness factor is polynomial in $n$, and in \cite{D16} the conjecture is shown to follow (in some weaker sense) from the Gap-ETH hypothesis, which we discuss in more details later. In this work we focus on yet another angle, which is to separate $n$ and $k$ and ask whether it is hard to approximate constant arity CSPs to within a factor that is polynomial in $k$ (but possibly not polynomial in $n$). Observe here that obtaining \NP-hardness of $\poly(k)$ factor is likely to be as hard as obtaining one with $\poly(N)$; this is because CSPs can be solved exactly in time $n^{O(k)}$, which means that, unless $\NP$ is contained in subexponential time (i.e. $\NP \nsubseteq \bigcap_{\varepsilon > 0} \DTIME(2^{n^{\varepsilon}})$), $\NP$-hard instances of CSPs must have $k = \poly(N)$.

This motivates us to look for hardness of approximation from assumptions stronger than $\P \ne \NP$. Specifically, our result will be based on the Exponential Time Hypothesis (ETH), which states that no subexponential time algorithm can solve 3-SAT (see Conjecture~\ref{conj:eth}). We show that, unless ETH fails, no polynomial time algorithm can approximate 2-CSPs to within an almost linear ratio in $k$, as stated below. This is almost optimal since there is a straightforward $(k/2)$-approximation for any 2-CSP, by simply satisfying all constraints that touch the variable with highest degree.

\begin{theorem}[Main Theorem] \label{thm:eth-hardness}
Assuming ETH, for any constant $\rho > 0$, no algorithm can, given a 2-CSP instance $\Gamma$ with alphabet size $n$ and $k$ variables such that the constraint graph is the complete graph on the $k$ variables, distinguish between the following two cases in polynomial time:
\begin{itemize}
\item (Completeness) $\val(\Gamma) = 1$, and,
\item (Soundness) $\val(\Gamma) < 2^{(\log k)^{1/2 + \rho}}/k$.
\end{itemize}
Here $\val(\Gamma)$ denotes the maximum fraction of edges satisfied by any assignment.
\end{theorem}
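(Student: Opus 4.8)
The plan is to start from ETH and go through the PCP theorem to get a hard $2$-CSP instance, then amplify the soundness gap while keeping the number of variables $k$ small, since the hardness ratio we want is measured only in terms of $k$. Concretely, ETH gives us (via the standard sparsification lemma and the PCP theorem, in its nearly-linear-size form) a $3$-SAT instance, and hence a $2$-CSP instance $\Gamma_0$ on $k_0$ variables with alphabet size $O(1)$, for which distinguishing $\val = 1$ from $\val \le 1 - \varepsilon_0$ requires time $2^{\Omega(k_0)}$. The key tension: to blow the gap up to $1$ versus roughly $2^{(\log k)^{1/2+\rho}}/k$, we need to apply parallel repetition (or a derandomized/structured variant of it), but $\ell$-fold parallel repetition drives the alphabet up to $n^{O(\ell)}$ \emph{and} the number of constraints up to $|E|^{\ell}$, which would push $k$ far too high. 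So the heart of the argument is to perform the gap amplification in a way that keeps the variable count essentially fixed while letting the alphabet — which we are allowed to make super-polynomial — absorb the blow-up.

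The steps, in order: (i) Start from $\Gamma_0$ with constant gap $\varepsilon_0$ on $k_0 = \Theta(n_{\mathrm{3SAT}})$ variables; by moving to the bipartite clause-variable constraint graph and then to a regularized / expanding version, arrange that $\Gamma_0$ is a projection (or at least well-structured) game on a nice graph. (ii) Apply parallel repetition with repetition parameter $\ell$; by the parallel repetition theorem the value drops from $1$ to $2^{-\Omega(\ell)}$ on the soundness side, while completeness stays $1$. (iii) The crucial move: rather than taking the tensored constraint graph (which has $|E|^\ell$ edges hence that many variables), restructure so that the new variable set has size only polynomial — or even almost the same as — $k_0$, by letting each new variable be a tuple of $\ell$ old variables but collapsing the graph to the complete graph on a cleverly chosen small vertex set, with the alphabet now holding $\ell$-tuples of old labels, i.e., $n = n_0^{O(\ell)}$. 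This is where one needs a combinatorial gadget: the right notion of "repeated game on few players" so that the value is still controlled. (iv) Choose $\ell = (\log k_0)^{1/2+\rho}$-ish so that $2^{-\Omega(\ell)}$ becomes $2^{-(\log k)^{1/2+\rho'}}$ while the alphabet size $n = 2^{O(\ell \log k_0)} = 2^{O((\log k_0)^{3/2})}$, which is subexponential in $k_0$ — hence the overall instance still has size subexponential in $k_0$, so a polynomial-time approximation algorithm would contradict ETH. (v) Finally, convert the two-value gap into the stated form: since the complete-graph constraint graph on $k$ vertices with a satisfiable instance has $\val = 1$, and in the soundness case $\val \le 2^{-\Omega(\ell)}$, rescale $\ell$ and pad the number of variables so that $2^{-\Omega(\ell)} \le 2^{(\log k)^{1/2+\rho}}/k$, i.e. push the gap to be almost linear in $k$.

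I expect the main obstacle to be step (iii): naive parallel repetition multiplies the constraint graph, so one must instead use a repetition scheme (for instance, sampling $\ell$ constraints that are "anchored" at a common small set of variables, or using a consistency/agreement test on a derandomized product) that simultaneously (a) keeps the number of variables near-linear in $k_0$, (b) keeps completeness at $1$, and (c) admits a parallel-repetition-style soundness bound of the form $2^{-\Omega(\ell)}$. Getting all three at once — in particular a clean soundness analysis for the "few-player" repeated game — is the technical crux; the alphabet-size and ETH-time bookkeeping in steps (iv)–(v) is then routine. An alternative to (ii)–(iii) that I would also consider is to use a PCP with a sub-constant soundness error and small alphabet directly (e.g. the low-degree-test / manifold-vs-point constructions), and then fold the extra coordinates into the alphabet; but those constructions natively have more than two queries, so one still needs a final step turning the resulting $q$-query PCP into a $2$-CSP without blowing up the variable count, which again lands on the same combinatorial difficulty.
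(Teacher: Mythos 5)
Your high-level architecture matches the paper's: start from ETH, apply the quasi-linear-size PCP to get a constant-gap 3-SAT instance, and then run a subexponential-time reduction that keeps the number of variables $k$ tiny while letting the alphabet absorb the blow-up, with some kind of derandomized-product/consistency structure providing the gap. But there is a genuine gap, and it is exactly the step you flag as the ``technical crux'' and leave unresolved: the soundness analysis of the few-variable repeated/product game. That analysis is the entire technical content of the theorem. The paper does not use parallel repetition at all; it builds the 2-CSP by taking as variables a deterministically constructed family $\cT$ of $k$ subsets of clauses, each of size $\Theta(\alpha m)$ with $\alpha = 1/\polylog m$ (not $\ell$-tuples of constraints), chosen to form an intersection disperser and to satisfy a uniformity condition, with labels being satisfying partial assignments and constraints being consistency on shared variables. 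Soundness then rests on a new agreement-testing theorem: from any labeling with value noticeably above $(r\ell)^2 k^{1/\ell}/k$ one decodes, via red/blue transitivity of a two-level consistency graph (proved from the disperser property), counting of red-filled $\ell$-walks, and majority decoding, a global assignment satisfying $1-\varepsilon$ of the clauses. None of this machinery, nor any substitute for it, appears in your proposal.

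Moreover, your quantitative bookkeeping in steps (ii), (iv) and (v) does not work even granting a repetition-style bound. The target soundness $2^{(\log k)^{1/2+\rho}}/k$ is roughly $2^{-\log k}$, so a bound of the form $2^{-\Omega(\ell)}$ would require $\ell = \Theta(\log k)$ rounds, not $\ell \approx (\log k)^{1/2+\rho}$; with your choice of $\ell$ you only get a $2^{(\log k)^{1/2+\rho}}$-factor gap, which is $k^{o(1)}$, far from almost-linear in $k$, and ``padding the number of variables'' cannot repair this (adding satisfiable dummy constraints only raises the value). More fundamentally, once you subsample only $k$ of the $k_0^{\ell}$ tuples to keep the variable count small, the parallel repetition theorem gives no guarantee whatsoever for the subsampled game; the provable loss in this regime is of a different shape, namely $k^{1/\ell}\cdot(1/\alpha)^{O(\ell)}$ as in the paper's agreement theorem, which is why the optimal choice is $\ell \approx \sqrt{\log k}$ and why the lower-order factor $2^{(\log k)^{1/2+\rho}}$ appears in the statement. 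So the proposal identifies the right strategic tension but supplies neither the construction's key combinatorial properties (intersection disperser plus uniformity) nor the decoding argument that turns local consistency into a good global assignment.
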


To paint a full picture of how our result stands in comparison to previous results, let us state what is know about the approximability of 2-CSPs; due to the vast literature regarding 2-CSPs, we will focus only the regime of large alphabets which is most relevant to our setting. In terms of \NP-hardness, the best known inapproximability ratio is $(\log N)^c$ for every constant $c > 0$; this follows from Moshkovitz-Raz PCP~\cite{MR10} and the Parallel Repetition Theorem for the low soundness regime~\cite{DS14}. Assuming a slightly weaker assumption that $\NP$ is not contained in quasipolynomial time (i.e. $\NP \nsubseteq \bigcup_{c > 0} \DTIME(n^{(\log n)^c})$), 2-CSP is hard to approximate to within a factor of $2^{(\log N)^{1 - \delta}}$ for every constant $\delta > 0$; this can be proved by applying Raz's original Parallel Repetition Theorem~\cite{Raz98} to the PCP Theorem. In \cite{D16}, the author observed that running time for parallel repetition can be reduced by looking at unordered sets instead of ordered tuples. This observation implies that\footnote{In \cite{D16}, only the Gap-ETH-hardness result is stated. However, the ETH-hardness result follows rather easily.}, assuming ETH, no polynomial time $N^{1/(\log\log\log N)^c}$-approximation algorithm exists for 2-CSPs for some constant $c > 0$. Moreover, under Gap-ETH (which will be stated shortly), it was shown that, for every sufficiently small $\varepsilon > 0$, an $N^{\varepsilon}$-approximation algorithm must run in time $N^{\Omega(\exp(1/\varepsilon))}$. Note that, while this latest result comes close to the polynomial sliding scale conjecture, it does not quite resolve the conjecture yet. In particular, even the weak form of the conjecture postulates that there exists $\delta > 0$ for which no polynomial time algorithm can approximate 2-CSPs to within $N^\delta$ factor of the optimum. This statement does not follow from the result of~\cite{D16}. Nevertheless, the Gap-ETH-hardness of~\cite{D16} does imply that, for any $f = o(1)$, no polynomial time algorithm can approximate 2-CSPs to within a factor of $N^{f(N)}$.

In all hardness results mentioned above, the constructions give 2-CSP instances in which the alphabet size $n$ is smaller than the number of variables $k$. In other words, even if we aim for an inapproximability ratio in terms of $k$ instead of $N$, we still get the same ratios as stated above. Thus, our result is the first hardness of approximation for 2-CSPs with $\poly(k)$ factor. Note again that our result rules out any polynomial time algorithm and not just $N^{O(\exp(1/\varepsilon))}$-time algorithm ruled out by~\cite{D16}. Moreover, our ratio is almost linear in $k$ whereas the result of~\cite{D16} only holds for $\varepsilon$ that is sufficiently small depending on the parameters of the Gap-ETH Hypothesis.




An interesting feature of our reduction is that it produces 2-CSP instances with the alphabet size $n$ that is much larger than $k$. This is reminiscence of the setting of 2-CSPs parameterized by the number of variables $k$. In this setting, the algorithm's running time is allowed to depend not only polynomially on $N$ but also on any function of $k$ (i.e. $g(k) \cdot \poly(N)$ running time for some function $g$); such algorithm is called a \emph{fixed parameter tractable (FPT)} algorithm parameterized by $k$. The question here is whether this added running time can help us approximate the problem beyond the $O(k)$ factor achieved by the straightforward algorithm. We show that, even in this parameterized setting, the trivial algorithm is still essentially optimal (up to lower order terms). This result holds under the Gap Exponential Time Hypothesis (Gap-ETH), a strengthening of ETH which states that, for some $\varepsilon > 0$, even distinguishing between a satisfiable 3-CNF formula and one which is not even $(1 - \varepsilon)$-satisfiable cannot be done in subexponential time (see Conjecture~\ref{conj:gap-eth}), as stated below.

\begin{theorem} \label{thm:gap-eth-hardness}
Assuming Gap-ETH, for any constant $\rho > 0$ and any function $g$, no algorithm can, given a 2-CSP instance $\Gamma$ with alphabet size $n$ and $k$ variables such that the constraint graph is the complete graph on the $k$ variables, distinguish between the following two cases in $g(k) \cdot (nk)^{O(1)}$ time:
\begin{itemize}
\item (Completeness) $\val(\Gamma) = 1$, and,
\item (Soundness) $\val(\Gamma) < 2^{(\log k)^{1/2 + \rho}}/k$.
\end{itemize}
\end{theorem}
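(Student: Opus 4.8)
The plan is to obtain Theorem~\ref{thm:gap-eth-hardness} from exactly the reduction behind the Main Theorem (Theorem~\ref{thm:eth-hardness}), changing only the hard problem we start from and the way the running time is accounted for. The reduction underlying Theorem~\ref{thm:eth-hardness} turns an unsatisfiable-vs-satisfiable $3$-SAT instance into a complete-graph $2$-CSP with the stated completeness and soundness; but before it can do anything useful it must first create a constant gap, and the only route from plain $3$-SAT to a constant gap available under ETH is the PCP theorem, whose gap-amplification step blows the number of variables up by a polylogarithmic factor. That blow-up is exactly what limits the Main Theorem to ruling out polynomial-time (rather than FPT) algorithms. Under Gap-ETH the gap comes for free: there is an absolute $\varepsilon_0 > 0$ such that no $2^{o(n')}$-time algorithm distinguishes a satisfiable $3$-CNF $\phi$ on $n'$ variables from one that is not even $(1-\varepsilon_0)$-satisfiable, and, after sparsification and the routine clause/variable-incidence transformation, we get a $2$-prover projection game (a.k.a. Label Cover) with $O(n')$ variables, constant alphabet, completeness $1$, and soundness $1-\Omega(\varepsilon_0)$, with no blow-up whatsoever. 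Feeding this game into the amplification machinery of the Main Theorem in place of the PCP is what upgrades the conclusion to FPT strength.

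Concretely, I would isolate from the Main Theorem's proof a \emph{parameterized} reduction: for a trade-off parameter $t$ it maps the game above to a complete-graph $2$-CSP $\Gamma_t$ with $k = k(t)$ variables and alphabet size $n = n(t,n')$ such that $\val(\Gamma_t) = 1$ when $\phi$ is satisfiable, $\val(\Gamma_t) < 2^{(\log k)^{1/2+\rho}}/k$ when $\phi$ is at most $(1-\varepsilon_0)$-satisfiable, and $\Gamma_t$ is produced in time roughly $\poly(n,k)$, where both $\log n$ and the running time are $2^{o(n')}$ and shrink (relative to $n'$) as $t \to \infty$. Internally this is the familiar two-step combination: first collapse the $O(n')$ game variables into few groups so that each original constraint lies inside a single pair of groups (few variables, alphabet exponential in roughly $n'$ divided by the number of groups), then amplify the gap via the \cite{D16} form of parallel repetition in which questions are \emph{unordered sets}, which drives the soundness down while keeping the number of variables — hence $k$ — from exploding, and which one massages onto a complete constraint graph by folding the otherwise-trivial question pairs into genuine consistency checks. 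The precise balance between the number of groups and the repetition length $t$ is what produces the exponent $1/2+\rho$; re-deriving this balance and verifying that it simultaneously keeps $\log n = o(n')$ and the whole reduction within time $2^{o(n')}$ is the one step I expect to demand real care, and it is the technical core that this theorem shares with the Main Theorem.

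Finally, to get the FPT lower bound, suppose for contradiction that for some constant $\rho > 0$ and some function $g$ there is an algorithm $A$ running in time $g(k)\cdot(nk)^{O(1)}$ that distinguishes the two cases of Theorem~\ref{thm:gap-eth-hardness}. Given $\phi$ on $n'$ variables, I would pick $t = t(n')$ to be any function tending to infinity slowly enough that $g(k(t(n'))) \le 2^{\sqrt{n'}}$ — such a function exists because $g$ is a fixed function and $k(t)\to\infty$ with $t$ — then run the parameterized reduction to produce $\Gamma_{t(n')}$ and call $A$ on it. Because $k = k(t(n'))\to\infty$, the reduction runs in time $2^{o(n')}$, the factor $(nk)^{O(1)}$ is $2^{o(n')}$, and $g(k) \le 2^{\sqrt{n'}} = 2^{o(n')}$, so the whole procedure runs in time $2^{o(n')}$; and since $k\to\infty$ the soundness threshold $2^{(\log k)^{1/2+\rho}}/k$ is smaller than $1$ for all large $n'$, so $A$'s answer correctly tells a satisfiable $\phi$ from a $(1-\varepsilon_0)$-unsatisfiable one. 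This contradicts Gap-ETH. The only point genuinely new relative to the Main Theorem is the choice of $t(n')$: it must go to infinity — which is what forces $k\to\infty$, hence makes $(nk)^{O(1)}$ subexponential and the gap nearly linear in $k$ — yet grow slowly enough to swallow the arbitrary FPT overhead $g$.
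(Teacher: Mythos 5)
Your overall skeleton does coincide with the paper's proof of Theorem~\ref{thm:gap-eth-hardness}: under Gap-ETH the PCP/gap-amplification step is skipped, the same reduction that underlies Theorem~\ref{thm:eth-hardness} is re-run with the number of 2-CSP variables $k$ decoupled from the formula size, and the free parameter is then used to neutralize the FPT overhead $g(k)$. The problem is your account of what that reduction actually is. There is no grouping of the formula's variables into blocks and no parallel repetition (ordered or unordered, \`a la~\cite{D16}) anywhere in this paper: the construction (Definition~\ref{def:construction}) takes as vertices a well-behaved family of $k$ subsets of the \emph{clauses} (an intersection disperser with the uniformity property, Lemma~\ref{lem:well-behaved-set-deterministic}), labels each subset by a satisfying partial assignment, and places a consistency constraint on every pair — which is also why the constraint graph is complete, with no ``folding'' needed — and its soundness is the agreement theorem (Theorem~\ref{thm:agr-main-formal}, packaged as Theorem~\ref{thm:soundness-main}). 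This is not a cosmetic difference: the route you sketch cannot reach soundness close to $1/k$, because each repetition buys only a constant-factor drop in soundness while the number of questions grows multiplicatively, so balancing block count against repetition length gives $k^{o(1)}$-type ratios (essentially the prior results of~\cite{D16,CFM17}), not the $k^{1-o(1)}$ claimed here; overcoming exactly this barrier is the paper's main technical contribution. The actual proof of Theorem~\ref{thm:gap-eth-hardness} is then just a re-instantiation of Theorem~\ref{thm:soundness-main} and Lemma~\ref{lem:well-behaved-set-deterministic} with $\alpha = 1/\log\log k$, $\ell = \sqrt{\log k}$, and $r,h,\mu,\zeta$ depending only on $k$, so the reduction runs in time $\poly(k)\cdot 2^{O(m/\log\log k)}$. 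So your plan is only as strong as the black box ``parameterized reduction'' you invoke; as described, the mechanism inside that box would fail to deliver the stated soundness.

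On the FPT accounting there is also a (smaller) divergence: the paper does not let $k$ grow with the instance. It fixes $k$ to a sufficiently large \emph{constant} depending on $g$, the exponent $D$ of the hypothesized algorithm, and $\delta,\varepsilon,\Delta,\rho$; then $g(k)$ is a constant, the reduction time is $O(2^{\delta m})$, the instance size is at most $2^{\delta m/D}$, and one contradicts Gap-ETH directly. Your choice of a slowly growing $t(n')$ with $g(k(t(n')))\le 2^{\sqrt{n'}}$ can be made to work, but since $g$ is an arbitrary (possibly uncomputable) function you would additionally need to argue that such a $t(\cdot)$ can be effectively chosen so that the contradiction genuinely produces an algorithm; fixing $k$ as a large constant sidesteps this issue, and contrary to your closing remark, $k\to\infty$ is not needed — one only needs $(nk)^{O(1)}$ and the reduction time to be $O(2^{\delta m})$ for the specific Gap-ETH constant $\delta$, which a large constant $k$ already guarantees.
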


To the best of our knowledge, the only previous inapproximability result for parameterized 2-CSPs is from~\cite{CFM17}. There the authors showed that, assuming Gap-ETH, no $k^{o(1)}$-approximation  $g(k) \cdot (nk)^{O(1)}$-time algorithm exists; this is shown via a simple reduction from parameterized inapproximbability of Densest-$k$ Subgraph from~\cite{CCKLMNT17} (which is in turn based on a construction from~\cite{M17}). Our result is a direct improvement over this result.

We end our discussion on 2-CSPs by noting that several approximation algorithms have also been devised for 2-CSPs with large alphabets~\cite{Peleg07,CHK11,KKT16,MM17,CMMV17}. In particular, while our results suggest that the trivial algorithm achieves an essentially optimal ratio in terms of $k$, non-trivial approximation is possible when we measure the ratio in terms of $N$ instead of $k$: specifically, a polynomial time $O(N^{1/3})$-approximation algorithm is known~\cite{CHK11}.

\paragraph{Direct Steiner Network.} As a corollary of our hardness of approximation results for 2-CSPs, we obtain an inapproximability result for Directed Steiner Network with polynomial ratio in terms of the number of demand pairs. In the Directed Steiner Network (DSN) problem (sometimes referred to as the Directed Steiner Forest problem~\cite{FKN12,CDKL17}), we are given an edge-weighed directed graph $G$ and a set $\cD$ of $k$ demand pairs $(s_1, t_1), \dots, (s_k, t_k) \in V \times V$ and the goal is to find a subgraph $H$ of $G$ with minimum weight such that there is a path in $H$ from $s_i$ to $t_i$ for every $i \in [k]$. DSN was first studied in the approximation algorithms context by Charikar \etal~\cite{CCCDGGL99} who gave a polynomial time $\tO(k^{2/3})$-approximation algorithm for the problem. This ratio was later improved to $O(k^{1/2 + \varepsilon})$ for every $\varepsilon > 0$ by Chekuri \etal~\cite{CEGS11}. Later, a different approximation algorithm with similar approximation ratio was proposed by Feldman \etal~\cite{FKN12}.

Algorithms with approximation ratios in terms of the number of vertices $n$ have also been devised~\cite{FKN12,BBMRY13,CDKL17,AB17}. In this case, the best known algorithm is that of Berman \etal~\cite{BBMRY13}, which yields an $O(n^{2/3 + \varepsilon})$-approximation for every constant $\varepsilon > 0$ in polynomial time. Moreover, when the graph is unweighted (i.e. each edge costs the same), Abboud and Bodwin recently gave an improved $O(n^{0.5778})$-approximation algorithm for the problem~\cite{AB17}.

On the hardness side, there exists a known reduction from 2-CSP to DSN that preserves approximation ratio to within polynomial factor\footnote{That is, for any non-decreasing function $\rho$ , if DSN admits $\rho(nk)$-approximation in polynomial time, then 2-CSP also admits $\rho(nk)^c$-approximation polynomial time for some absolute constant $c$.}~\cite{DK99}. Hence, known hardness of approximation of 2-CSPs translate immediately to that of DSN: it is \NP-hard to approximate to within any polylogarithmic ratio~\cite{MR10,DS14}, it is hard to approximate to within $2^{\log^{1 - \varepsilon} n}$ factor for every $\varepsilon > 0$ unless $\NP \subseteq \QP$~\cite{Raz98}, and it is Gap-ETH-hard to approximate to within $n^{o(1)}$ factor~\cite{D16}. Note that, since $k$ is always bounded above by $n^2$, all these hardness results also hold when $n$ is replaced by $k$ in the ratios. Recently, this reduction was also used by Chitnis \etal~\cite{CFM17} to rule out $k^{o(1)}$-FPT-approximation algorithm for DSN parameterized by $k$ assuming Gap-ETH. Alas, none of these hardness results achieve ratios that are polynomial in either $n$ or $k$ and it remains open whether DSN is hard to approximate to within a factor that is polynomial in $n$ or in $k$.

By plugging our hardness result for 2-CSPs into the reduction, we immediately get ETH-hardness and Gap-ETH-hardness of approximating DSN to within a factor of $k^{1/4 - o(1)}$ as stated below.

\begin{corollary} \label{cor:eth-hardness-dsn}
Assuming ETH, for any constant $\rho' > 0$, there is no polynomial time $\frac{k^{1/4}}{2^{(\log k)^{1/2 + \rho'}}}$-approximation algorithm for DSN.
\end{corollary}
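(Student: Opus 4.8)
\emph{Proof sketch.}
The plan is to compose Theorem~\ref{thm:eth-hardness} with the known polynomial-time, approximation-preserving reduction from 2-CSP to DSN~\cite{DK99} (see also~\cite{CFM17}). On input a 2-CSP instance $\Gamma$ with $k$ variables, alphabet size $n$, and the complete constraint graph, this reduction produces in $\poly(n,k)$ time an edge-weighted digraph together with a set of $K := \binom{k}{2}$ demand pairs, one pair $(s_{uv},t_{uv})$ for each constraint $\{u,v\}$. The only two features of the reduction I will need are: \emph{(Completeness)} if $\val(\Gamma) = 1$ then, fixing a satisfying assignment $\sigma$ and keeping for each variable only the gadget edges that carry the label $\sigma(v)$, one obtains a feasible solution of weight $W = O(k)$; and \emph{(Soundness)} from any feasible solution $H$ of weight $C$ one can read off, for every variable $v$, a set $S_v \subseteq \Sigma$ of labels ``used'' by $H$ at $v$ with $\sum_v |S_v| = O(C)$, such that for every constraint $\{u,v\}$ there is a pair $(a,b) \in S_u \times S_v$ with $(a,b) \in C_{uv}$ --- the latter being exactly what routing the demand $(s_{uv},t_{uv})$ inside $H$ certifies.

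The heart of the argument is to turn the soundness feature into a lower bound on $C$. Sample an assignment $\sigma$ by drawing each $\sigma(v)$ independently and uniformly from $S_v$; by the soundness feature each constraint $\{u,v\}$ is then satisfied with probability at least $1/(|S_u|\,|S_v|)$, so some assignment satisfies at least $\sum_{u<v} 1/(|S_u|\,|S_v|)$ of the $\binom{k}{2}$ constraints, giving
\[
\val(\Gamma)\cdot\binom{k}{2} \;\ge\; \sum_{u<v}\frac{1}{|S_u|\,|S_v|}.
\]
Write $d_v := |S_v|$ and $D := \sum_v d_v = O(C)$. Cauchy--Schwarz gives $\sum_v 1/d_v \ge k^2/D$, and since $\sum_{u<v} 1/(d_u d_v) = \tfrac12\big((\sum_v 1/d_v)^2 - \sum_v 1/d_v^2\big) \ge \tfrac12\big((k^2/D)^2 - \sum_v 1/d_v\big)$ while $\sum_v 1/d_v \le k$, we obtain $\sum_{u<v} 1/(d_u d_v) \ge \tfrac14(k^2/D)^2$ whenever $D \le k^{3/2}/2$ (and $C = \Omega(k^{3/2})$ already otherwise). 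Plugging in, any feasible DSN solution has weight $C = \Omega\big(\min\{k^{3/2},\, k/\sqrt{\val(\Gamma)}\}\big)$.

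Finally, fix a constant $\rho' > 0$ and invoke Theorem~\ref{thm:eth-hardness} with $\rho := \rho'/2$. In the completeness case the DSN optimum is at most $W = O(k)$; in the soundness case $\val(\Gamma) < 2^{(\log k)^{1/2+\rho}}/k$, so by the previous paragraph the DSN optimum is $\Omega\big(\min\{k^{3/2},\, k/\sqrt{\val(\Gamma)}\}\big)$, which, using $\val(\Gamma) < 2^{(\log k)^{1/2+\rho}}/k$, is at least $\Omega\big(k^{3/2}\cdot 2^{-\frac12(\log k)^{1/2+\rho}}\big)$. Hence the multiplicative gap between the two cases is $\Omega\big(k^{1/2}\cdot 2^{-\frac12(\log k)^{1/2+\rho}}\big)$, and since $K = \binom{k}{2} = \Theta(k^2)$ --- so $k^{1/2} = \Theta(K^{1/4})$ and $\log k = \Theta(\log K)$ --- this gap is at least $\frac{K^{1/4}}{2^{(\log K)^{1/2+\rho'}}}$ once $K$ is large enough; the passage from $\rho$ to $\rho'$ absorbs the constant $\tfrac12$ in the exponent, the hidden $\Omega(\cdot)$ constant, and the $O(1)$ slack between $\log k$ and $\log K$. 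Consequently a polynomial-time $\frac{K^{1/4}}{2^{(\log K)^{1/2+\rho'}}}$-approximation algorithm for DSN, composed with the reduction, would distinguish the two cases of Theorem~\ref{thm:eth-hardness} in polynomial time, contradicting ETH. The main obstacle is really the soundness step --- extracting from a weight-$C$ DSN solution the per-variable label sets of total size $O(C)$ with the stated cross-constraint property, and then squeezing the quadratic gap amplification out of them through the random-assignment/convexity bound; the completeness direction and the bookkeeping that trades the quadratic blow-up in the number of demand pairs and the various constants against $\rho$ are routine.
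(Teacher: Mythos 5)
Your proposal is correct and follows essentially the same route as the paper: compose Theorem~\ref{thm:eth-hardness} (with a constant $\rho < \rho'$) with the Dodis--Khanna-style reduction from 2-CSP to DSN, and absorb the quadratic loss in the number of demand pairs together with the square-root loss in the gap into the lower-order term, which is exactly how the paper obtains the $k^{1/4}$ exponent. The only difference is that where the paper invokes the reduction's guarantees as a black box (Lemma~\ref{lem:red-dsn}, quoted from~\cite{CFM17}), you inline a derivation of its soundness via the random-assignment/Cauchy--Schwarz argument on the per-variable label sets, which matches how that cited lemma is actually proved, and your parameter bookkeeping (completeness cost $O(k)$ versus the lemma's normalized cost $1$ is only a scaling difference) is consistent with the paper's.
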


\begin{corollary} \label{cor:gap-eth-hardness-dsn}
Assuming Gap-ETH, for any constant $\rho' > 0$ and any function $g$, there is no $g(k) \cdot (nk)^{O(1)}$-time $\frac{k^{1/4}}{2^{(\log k)^{1/2 + \rho'}}}$-approximation algorithm for DSN.
\end{corollary}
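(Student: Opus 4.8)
The plan is to combine Theorem~\ref{thm:gap-eth-hardness} with the approximation-preserving reduction from 2-CSP to DSN of Dodis and Khanna~\cite{DK99} (in the form used in~\cite{CFM17}), carefully tracking how the parameters transform; Corollary~\ref{cor:eth-hardness-dsn} then follows by the identical argument, invoking Theorem~\ref{thm:eth-hardness} in place of Theorem~\ref{thm:gap-eth-hardness}. First I would take the hard instance $\Gamma$ produced by Theorem~\ref{thm:gap-eth-hardness}: a 2-CSP on the complete graph $K_k$ with alphabet size $n$, having $\val(\Gamma) = 1$ in the completeness case and $\val(\Gamma) < 2^{(\log k)^{1/2 + \rho}}/k$ in the soundness case. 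The reduction then produces, in polynomial time, a DSN instance $(G, \cD)$ with one demand pair for each of the $\binom{k}{2}$ edges of $K_k$, so $k' := |\cD| = \Theta(k^2)$, and with $|V(G)| = \poly(n, k)$.

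Next I would recall the shape of the reduction and read off its two key properties. Each variable $i$ has a ``label-selection'' gadget in which making a label $\alpha \in \Sigma$ available costs one unit, while all other edges (including the constraint-checking gadgets) are free, and the demand pair for an edge $\{i,j\}$ of $K_k$ can be routed precisely when the chosen subgraph has made available some label $\alpha$ of $i$ and some label $\beta$ of $j$ with $(\alpha, \beta) \in C_{ij}$. Hence in the completeness case an optimal consistent assignment makes one label available per variable, giving $\mathrm{OPT}(G, \cD) \le C$ for some $C = \Theta(k)$. In the soundness direction, consider a feasible subgraph of cost $M$ that makes $d_i \ge 1$ labels available for variable $i$, so $\sum_i d_i \le M$; assigning each variable a uniformly random available label satisfies the constraint on edge $\{i,j\}$ with probability at least $1/(d_i d_j)$, hence
\[
\val(\Gamma) \;\ge\; \frac{1}{\binom{k}{2}} \sum_{i < j} \frac{1}{d_i d_j} \;\ge\; \Omega\!\left( \frac{1}{k^2} \Big( \sum_i \frac{1}{d_i} \Big)^{\!2} \right) \;\ge\; \Omega\!\left( \frac{k^2}{M^2} \right),
\]
where the middle step drops the lower-order term $\sum_i 1/d_i^2 \le \sum_i 1/d_i$ (valid in the relevant range $M \le k^2/2$) and the last step is the AM--HM inequality $\sum_i 1/d_i \ge k^2/\sum_i d_i \ge k^2/M$. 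Consequently, in the soundness case $\mathrm{OPT}(G,\cD) = \Omega\big(k \cdot \sqrt{k / 2^{(\log k)^{1/2 + \rho}}}\big)$, so the DSN instance has a gap of $\Omega\big(\sqrt{k} \,/\, 2^{(\log k)^{1/2 + \rho}/2}\big)$ between the two cases.

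Finally I would convert this to the stated ratio. Writing everything in terms of $k' = \Theta(k^2)$ — so $\sqrt{k} = \Theta((k')^{1/4})$ and $(\log k)^{1/2 + \rho} = \Theta((\log k')^{1/2 + \rho})$ since $\log k = \Theta(\log k')$ — the DSN gap is at least $c \cdot (k')^{1/4} / 2^{C (\log k')^{1/2 + \rho}}$ for absolute constants $c, C > 0$. Given the target $\rho' > 0$, set $\rho = \rho'/2$; then for all large $k'$ the denominator $2^{(\log k')^{1/2 + \rho'}}$ dominates $2^{C (\log k')^{1/2 + \rho}}/c$, so the gap is at least $(k')^{1/4}/2^{(\log k')^{1/2 + \rho'}}$. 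Therefore a $g(k') \cdot |V(G)|^{O(1)}$-time $\tfrac{(k')^{1/4}}{2^{(\log k')^{1/2 + \rho'}}}$-approximation algorithm for DSN would distinguish the two DSN cases, and composing with the polynomial-time reduction (and using $k' = \Theta(k^2)$, $|V(G)| = \poly(n,k)$) this distinguishes the completeness and soundness cases of $\Gamma$ in time $g(\Theta(k^2)) \cdot (nk)^{O(1)}$, contradicting Theorem~\ref{thm:gap-eth-hardness}. The only genuine work is verifying the two properties of the reduction invoked above — in particular the quadratic loss in the soundness analysis and the fact that every valid $s_i$--$t_i$ path really encodes a constraint-satisfying pair of labels — and I expect this to be the main obstacle only in a bookkeeping sense, since it is exactly the content of the reduction of~\cite{DK99, CFM17}.
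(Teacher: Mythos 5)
Your proposal is correct and follows essentially the same route as the paper: compose the hard 2-CSP instance from Theorem~\ref{thm:gap-eth-hardness} with the Dodis--Khanna reduction (stated in the paper as Lemma~\ref{lem:red-dsn}), losing a square in both the number of demand pairs and the gap, and absorbing lower-order factors by choosing $\rho < \rho'$. The only difference is that you re-derive the completeness/soundness guarantees of the reduction via a random-label-decoding argument, whereas the paper invokes them as a black box from~\cite{CFM17}; your sketch matches the cited guarantee up to the normalization of the completeness cost.
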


In other words, if one wants a polynomial time approximation algorithm with ratio depending only on $k$ and not on $n$, then the algorithms of Chekuri \etal~\cite{CEGS11} and Feldman \etal~\cite{FKN12} are roughly within a square of the optimal algorithm.
To the best of our knowledge, these are the first inapproximability results of DSN whose ratios are polynomial in terms of $k$. Again, Corollary~\ref{cor:gap-eth-hardness-dsn} is a direct improvement over the FPT inapproximability result from~\cite{CFM17} which, under the same assumption, rules out only $k^{o(1)}$-factor FPT-approximation algorithm. 

\subsubsection*{Agreement tests}
Our main result is proved through an agreement testing argument. In agreement testing there is a universe $\cU$, a collection of subsets $S_1,\ldots,S_k\subseteq \cU$, and for each subset $S_i$ we are given a local function $f_{S_i} : S_i \to \{0,1\}$. A pair of subsets are said to {\em agree} if their local functions agree on every element in the intersection. The goal is, given a non-negligible fraction of agreeing pairs, to deduce the existence of a global function $g:\cU\to\{0,1\}$ that (approximately) coincides with many of the local functions. For a more complete description see \cite{DK17}.

Agreement tests capture a natural local to global statement and are present in essentially all PCPs, for example they appear explicitly in the line vs. line and plane vs. plane low degree tests \cite{RuSu96,ArSu,RazS97}. Our reduction is based on a combinatorial agreement test, where the universe is $[n]$ and the subsets $S_1,\ldots,S_k$ have $\Omega(n)$ elements each and are ``in general position'', namely they behave like subsets chosen independently at random. A convenient feature about this setting is that every pair of subsets intersect.

Since we are aiming for a large gap, the agreement test must work (i.e., yield a global function) with a very small fraction of agreeing pairs, which in our case is close to $1/k$.

In this small agreement regime the idea, as pioneered in the work of Raz-Safra~\cite{RazS97}, is to zero in on a sub-collection of subsets that is (almost) perfectly consistent. From this sub-collection it is easy to recover a global function and show that it coincides almost perfectly with the local functions in the sub-collection. A major difference between our combinatorial setting and the algebraic setting of Raz-Safra is the lack of ``distance'' in our case: we can not assume that two distinct local functions differ on many points (in contrast, this is a key feature of low degree polynomials). We overcome this by considering different ``strengths'' of agreement, depending on the fraction of points on which the two subsets agree. This notion too is present in several previous works on combinatorial agreement tests \cite{ImpagliazzoKW12,DN17}.
\inote{would be nice to formulate and prove an agreement theorem in the full version}

\paragraph{Hardness of Approximation through Subexponential Time Reductions.} Our result is one of the many results in recent years that show hardness of approximation via subexponential time reductions~\cite{AIM14,BKW15,Rub16-focs,DFS16,D16,BKRW17,MR17,M17,Rub-itcs,Rub16-focs,Rub17,AR17,CCKLMNT17,CLM17,Rub18,BGKM18}. These results are often based on the Exponential Time Hypothesis (ETH) and its variants. Proposed by Impagliazzo and Paturi~\cite{IP01}, ETH can be formally stated as follows:

\begin{conjecture}[Exponential Time Hypothesis (ETH)~\cite{IP01}] \label{conj:eth}
There exists a constant $\delta > 0$ such that no algorithm can decide whether any given 3-CNF formula is satisfiable in time $O(2^{\delta m})$ where $m$ denotes the number of clauses\footnote{The original conjecture states the lower bound as exponential in terms of the number of variables not clauses. However, thanks to the sparsification lemma~\cite{IPZ01}, it is by now known that the two versions are equivalent.}.
\end{conjecture}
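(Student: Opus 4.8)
Strictly speaking, ETH as stated is a hypothesis rather than a theorem: it implies $\P \ne \NP$ (and much more), so no unconditional proof can exist, and the statement is offered only as a working assumption. The genuine mathematical content attached to it is the claim made in the footnote --- that the clause-count formulation above is \emph{equivalent} to Impagliazzo and Paturi's original variable-count formulation, which asserts that there is a constant $\delta' > 0$ such that no algorithm decides satisfiability of an $n$-variable 3-CNF formula in time $O(2^{\delta' n})$. That equivalence is what I would actually prove, and the plan is as follows.

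The easy direction is that the variable version implies the clause version. Here I would note that we may assume every variable of the input occurs in some clause (else delete it), so $n \le 3m$; hence an $O(2^{\delta' n})$-time algorithm also runs in time $O(2^{3\delta' m})$, and taking $\delta = \delta'/3$ shows that an $O(2^{\delta m})$-time algorithm would refute the variable version. The substantive direction is the converse: a subexponential-in-$m$ algorithm must yield a subexponential-in-$n$ one. This is where the Sparsification Lemma of Impagliazzo, Paturi and Zane enters. I would invoke it in the form: for every $\varepsilon > 0$ and every width $k$ there is a constant $C = C(k,\varepsilon)$ and an algorithm that, on a $k$-CNF formula $\phi$ with $n$ variables, runs in time $2^{\varepsilon n}\poly(n)$ and outputs $k$-CNF formulas $\phi_1,\dots,\phi_t$ with $t \le 2^{\varepsilon n}$, each on the same $n$ variables and with at most $Cn$ clauses, such that $\phi$ is satisfiable iff some $\phi_i$ is. Given a purported algorithm $A$ that decides 3-SAT in time $O(2^{\delta m})$ for \emph{every} $\delta > 0$, and a target $\delta' > 0$, I would run sparsification with $\varepsilon = \delta'/3$, then run $A$ with parameter $\delta = \varepsilon/C(3,\varepsilon)$ on each $\phi_i$; each call costs $O(2^{\delta \cdot Cn}) = O(2^{\varepsilon n})$, and summing over the $\le 2^{\varepsilon n}$ instances together with the sparsification cost gives total running time $2^{O(\varepsilon n)}\poly(n) = 2^{O(\delta' n)}$, contradicting the variable version after adjusting constants.

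The main obstacle is the Sparsification Lemma itself. I would prove it by the branching argument in its streamlined Calabro--Impagliazzo--Paturi form: repeatedly search for a ``heart'' --- a set of at most $k-1$ literals contained in at least $t$ of the current clauses, where the threshold $t$ is chosen as a function of $k$ and $\varepsilon$ --- and branch into two subproblems, one in which all literals of the heart are falsified (which shortens those $\ge t$ clauses, reducing width) and one in which at least one is satisfied (a further bounded case split, which removes clauses); when no heart exists, the formula already has at most $Cn$ clauses. The crux is the recursion analysis: one assigns to each node of the branching tree a potential combining the number of still-free variables with the number of ``wide'' clauses, shows every branch decreases this potential by a controlled amount, and concludes simultaneously that the tree has at most $2^{\varepsilon n}$ leaves and that each leaf formula is linear-size. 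Choosing the threshold $t$ and the potential function so that the leaf count and the clause bound both come out correct at once is the delicate bookkeeping; everything else in the argument is routine.
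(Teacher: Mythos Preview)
You correctly recognize that the paper gives no proof here: Conjecture~\ref{conj:eth} is stated purely as a working hypothesis, with the footnote merely citing the sparsification lemma for the equivalence of the clause-count and variable-count formulations. There is thus nothing in the paper to compare against, and your choice to justify the footnote's equivalence claim is the right one.

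Your overall plan is sound and the sparsification argument is correctly sketched, but you have the two directions of the equivalence swapped. From $n \le 3m$ one concludes that any $O(2^{\delta' n})$-time algorithm is also an $O(2^{3\delta' m})$-time algorithm; contrapositively, this is the implication \emph{Clause-ETH $\Rightarrow$ Variable-ETH}, which is therefore the trivial direction. The sparsification lemma is needed for the opposite implication, \emph{Variable-ETH $\Rightarrow$ Clause-ETH}: it is what lets one turn an $O(2^{\delta m})$-time algorithm into an $O(2^{\delta' n})$-time one by first reducing to sparse instances with $m = O(n)$. Your second paragraph carries out exactly this reduction and is mathematically correct, but you mislabel it as ``the converse'' of Variable $\Rightarrow$ Clause; and the final sentence of your ``easy'' paragraph --- that an $O(2^{\delta m})$-time algorithm would refute the variable version --- does not follow from $n \le 3m$, since $m$ can be far larger than $n$. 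Swap the labels and fix that sentence, and the argument stands.
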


A crucial ingredient in most, but not all\footnote{The exceptions are \cite{Rub16-focs,AR17,Rub18,CLM17,Chen18} in which gaps are not created via the PCP Theorem.}, reductions in this line of work is a nearly-linear size PCP Theorem. For the purpose of our work, the PCP Theorem can be viewed as a polynomial time transformation of a 3-SAT instance $\tPhi$ to another 3-SAT instance $\Phi$ that creates a gap between the YES and NO cases. Specifically, if $\tPhi$ is satisfiable, $\Phi$ remains satisfiable. On the other hand, if $\tPhi$ is unsatisfiable, then $\Phi$ is not only unsatisfiable but it is also not even $(1 - \varepsilon)$-satisfiable for some constant $\varepsilon > 0$ (i.e. no assignment satisfies $(1 - \varepsilon)$ fraction of clauses). The ``nearly-linear size'' part refers to the size of the new instance $\Phi$ compared to that of $\tPhi$. Currently, the best known dependency in this form of the PCP Theorem between the two sizes is quasi-linear (i.e. with a polylogarithmic blow-up), as stated below.

\begin{theorem}[Quasi-Linear Size PCP~\cite{BS08,D07}] \label{thm:d-pcp}
For some constants $\varepsilon, \Delta, c > 0$, there is a polynomial time algorithm that, given any 3-CNF formula $\tPhi$ with $m$ clauses, produces another 3-CNF formula $\Phi$ with $O(m \log^c m)$ clauses such that
\begin{itemize}
\item (Completeness) if $\val(\tPhi) = 1$, then $\val(\Phi) = 1$, and,
\item (Soundness) if $\val(\tPhi) < 1$, then $\val(\Phi) < 1 - \varepsilon$, and,
\item (Bounded Degree) each variable in $\Phi$ appears in at most $\Delta$ clauses.
\end{itemize}
\end{theorem}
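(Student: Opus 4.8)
The plan is to reduce the statement to the existence of a \emph{quasi-linear length, constant-query, constant-soundness} PCP for 3-SAT, and then to massage such a PCP into a bounded-degree gap instance of 3-SAT by routine gadgetry. Concretely, I would first invoke (a self-contained construction of) a PCP verifier $V$ for 3-SAT that, on an input formula $\tPhi$ with $m$ clauses, tosses $O(\log m)$ coins, reads $q = O(1)$ bits of an alleged proof $\pi$ of length $\ell = m \cdot \log^{O(1)} m$, has perfect completeness, and soundness error $1 - \varepsilon_0$ for some absolute constant $\varepsilon_0 > 0$. Given such a $V$, one produces a $q$-ary CSP $\Gamma_0$ over the Boolean alphabet with one constraint per coin sequence (so $\ell \cdot \log^{O(1)} m = m\log^{O(1)} m$ constraints over $\ell$ variables), whose value is $1$ in the completeness case and $< 1 - \varepsilon_0$ in the soundness case; then each constraint, being a predicate on $\{0,1\}^q$ with $q = O(1)$, can be rewritten as $O(2^q) = O(1)$ clauses of width exactly $3$ using $O(1)$ fresh auxiliary variables, in the standard way that preserves perfect completeness and degrades the soundness gap by only a constant factor. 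This yields a 3-CNF $\Phi'$ with $m\log^{O(1)} m$ clauses and a constant gap, but possibly with variables of large degree.

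To get the bounded-degree property I would apply the usual \emph{expander replacement}: a variable $x$ occurring in $d$ clauses is split into $d$ private copies $x^{(1)},\dots,x^{(d)}$, one used in each occurrence, and these copies are tied together by equality constraints placed along the edges of a constant-degree expander on $d$ vertices; expanding each equality constraint into width-$3$ clauses again costs only an $O(1)$ factor. The number of new clauses is $O\!\big(\sum_v \deg(v)\big) = O(m\log^{O(1)} m)$, so the size stays quasi-linear; perfect completeness is immediate; and in the soundness case the expander mixing lemma shows that cheating (assigning copies of the same original variable inconsistently) is penalized enough that the gap is preserved up to a constant factor. Rescaling $\varepsilon$ by the product of these $O(1)$ losses gives the final constant $\varepsilon$, together with a degree bound $\Delta = O(1)$, proving the theorem.

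The substantive content is therefore the quasi-linear-length constant-query PCP, and here I would follow the Ben-Sasson–Sudan / Dinur route \cite{BS08,D07}. Step one is an \emph{efficient arithmetization}: reduce $\tPhi$ to an algebraic constraint satisfaction instance over a field $\mathbb{F}$ of size $\log^{O(1)} m$ with $m\log^{O(1)} m$ constraints. The naive sumcheck-based arithmetization of \cite{ALMSS} blows the size up polynomially because it works over a field of size $\poly(m)$; to keep the field (hence the proof) of polylogarithmic overhead one embeds the clause–variable incidence structure of $\tPhi$ into a highly structured routing network (a de Bruijn / wrapped-butterfly graph), so that the global consistency of an assignment becomes a conjunction of \emph{local}, low-degree algebraic constraints that can be encoded over a small field. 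Step two is a PCP verifier that checks this algebraic instance by testing that a purported low-degree (Reed–Muller / Reed–Solomon) encoding of a satisfying assignment is close to a legal codeword and satisfies the constraints; using near-linear-size Reed–Solomon proximity testers this gives proof length $m\log^{O(1)}m$ but query complexity $\log^{O(1)} m$. Step three reduces the query complexity to $O(1)$ by \emph{proof composition}: compose $V$ with an inner \emph{robust} PCP of proximity applied to each of $V$'s local views, which have size only $\log^{O(1)} m$, so the inner proofs add only $\log^{O(1)} m \cdot (\text{number of views}) = m\log^{O(1)} m$ to the total length while driving the number of queries down to a constant and keeping soundness constant.

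The main obstacle is exactly this quasi-linear outer construction — specifically getting both (i) the arithmetization and (ii) the low-degree/RS proximity test to incur only polylogarithmic blow-up, and doing so with the \emph{robustness} that makes composition preserve soundness without re-inflating the length. The classical low-degree testing analyses (e.g.\ \cite{RazS97}) give the soundness one needs, but routing the 3-SAT instance so that consistency is captured by genuinely low-degree local constraints over a small field, and wiring the composition so that all three parameters (length $m\log^{O(1)} m$, queries $O(1)$, soundness $\Omega(1)$) hold simultaneously, is the delicate part; once that PCP is in hand, the arity reduction and expanderization above are entirely standard and lose only constant factors.
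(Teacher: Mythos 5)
This statement is not proved in the paper at all: Theorem~\ref{thm:d-pcp} is imported as a black box from~\cite{BS08,D07}, so there is no internal proof to compare your sketch against; what you are really attempting is a summary of those two works. As such a summary, your overall route (quasi-linear-length PCP, then arity reduction to 3-CNF, then expander-based degree reduction) is the right shape, and the arity-reduction and expanderization steps are indeed standard and lose only constant factors.

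The genuine gap is in your step three, which is exactly the crux that forces the second citation. A single composition of the quasi-linear outer verifier with a constant-query robust PCP of proximity applied to its local views does \emph{not} give constant queries, constant soundness, and length $m\log^{O(1)}m$ simultaneously. Any known constant-query inner PCPP for claims of size $\log^{O(1)}m$ (e.g.\ the Hadamard-based one) has proof length exponential in the claim size, i.e.\ $2^{\log^{O(1)}m}$ per local view, which destroys quasi-linearity; if instead you recurse with the quasi-linear algebraic PCPP, a constant number of rounds leaves $\mathrm{poly}(\log^{(k)}m)$ queries, and, more fundamentally, the quasi-linear-length robust verifiers of~\cite{BS08} only have robustness (equivalently, gap) $1/\log^{O(1)}m$, so composition alone yields soundness error $1-1/\log^{O(1)}m$ rather than $1-\Omega(1)$ --- this is precisely why pure composition was only known to give length $m\cdot 2^{(\log m)^{\varepsilon}}$ at constant query complexity. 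The missing ingredient is Dinur's gap amplification~\cite{D07}: take the quasi-linear-size instance with gap $1/\log^{O(1)}m$ and apply $O(\log\log m)$ amplification rounds, each incurring a constant-factor size blow-up and each including the degree-reduction/expanderization preprocessing (which, incidentally, hands you the bounded-degree property of the final 3-CNF directly). Two smaller points: ``tosses $O(\log m)$ coins'' must be sharpened to $\log m + O(\log\log m)$ random bits, otherwise the number of constraints is only bounded by $\mathrm{poly}(m)$ rather than $m\log^{O(1)}m$; and the soundness analysis of the expander-replacement step needs only edge expansion of the consistency gadget, not the mixing lemma.
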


The aforementioned ETH-hardness of approximation proofs typically proceed in two steps. First, the PCP Theorem is invoked to reduce a 3-SAT instance $\tPhi$ of size $m$ to an instance of the gap version of 3-SAT $\Phi$ of size $m' = O(m \log^c m)$. Second, the gap version of 3-SAT is reduced in subexponential time to the problem at hand. As long as the reduction takes time $2^{o(m'/\log^c m')} = 2^{o(m)}$, we can obtain hardness of approximation result for the latter problem. This is in contrast to proving \NP-hardness of approximation for which a polynomial time reduction is required.

Another related but stronger version of ETH that we will also employ is the Gap Exponential Time Hypothesis (Gap-ETH), which states that even the gap version of 3-SAT cannot be solved in subexponential time:

\begin{conjecture}[Gap Exponential Time Hypothesis (Gap-ETH)~\cite{D16,MR16}] \label{conj:gap-eth}
There exist constants $\delta, \varepsilon, \Delta > 0$ such that no algorithm can, given any 3-CNF formula $\Phi$ such that each of its variable appears in at most $\Delta$ clauses\footnote{This bounded degree assumption can be assumed without loss of generality; see~\cite{MR16} for more details.}, distinguish between the following two cases\footnote{Note that when $\Phi$ satisfies neither case (i.e. $1 - \varepsilon \leqs \val(\Phi) < 1$), the algorithm is allowed to output anything.} in time $O(2^{\delta m})$ time where $m$ denotes the number of clauses:
\begin{itemize}
\item (Completeness) $\val(\Phi) = 1$.
\item (Soundness) $\val(\Phi) < 1 - \varepsilon$.
\end{itemize}
\end{conjecture}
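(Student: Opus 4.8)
The final statement is a \emph{conjecture}: no unconditional proof is known, and what follows is an account of the only natural attack, of what it actually delivers with the tools in this paper, and of where it breaks down. The plan is to derive Gap-ETH from ETH (Conjecture~\ref{conj:eth}) by composing with a gap-producing PCP of small size. First I would take a $3$-CNF formula $\tilde\Phi$ with $m$ clauses as in ETH and, if needed, apply the sparsification lemma so that its number of variables is $O(m)$ and every variable occurs a bounded number of times. Next I would run a PCP reduction on $\tilde\Phi$ to get a $3$-CNF $\Phi$ with $m'$ clauses, bounded variable degree $\Delta$, the property $\val(\tilde\Phi)=1 \Rightarrow \val(\Phi)=1$, and $\val(\tilde\Phi)<1 \Rightarrow \val(\Phi) < 1-\varepsilon$ for a constant $\varepsilon$. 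Finally, if some algorithm distinguished $\val(\Phi)=1$ from $\val(\Phi)<1-\varepsilon$ in time $2^{\delta' m'}$, then chaining the reduction with that algorithm decides satisfiability of $\tilde\Phi$ in time $\mathrm{poly}(m)+2^{\delta' m'}$, which would contradict ETH as long as $2^{\delta' m'}=2^{o(m)}$.

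The whole argument therefore hinges on the blow-up $m'/m$ of the PCP. If one had a \emph{linear-size} PCP, i.e.\ $m'=O(m)$ with a constant gap $\varepsilon$ and bounded degree, then $\delta' m' = O(\delta' m)$ and picking $\delta'$ a small enough constant (depending on the blow-up constant and on the $\delta$ of ETH) gives exactly Conjecture~\ref{conj:gap-eth}. But no such PCP is available; the strongest known is the quasi-linear construction of Theorem~\ref{thm:d-pcp}, which has $m'=O(m\log^c m)$. Plugging that in, the composed algorithm runs in $2^{O(\delta' m\log^c m)}$ time, which undercuts ETH's $2^{\Omega(m)}$ barrier only when $\delta'=o(1/\log^c m)$. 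Hence ETH together with Theorem~\ref{thm:d-pcp} yields only the \emph{weakened} conclusion that gap-$3$-SAT admits no $2^{o(m/\mathrm{polylog}\,m)}$-time algorithm, not the clean $2^{\delta m}$ lower bound that Gap-ETH asserts --- which is exactly why \cite{D16,MR16} state it as a hypothesis rather than a theorem.

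I therefore expect the main obstacle to be precisely the one that the area has not yet surmounted: constructing a PCP of genuinely linear size with constant soundness gap and bounded degree. My ``plan'' thus reduces the statement to the linear-size PCP question; short of that, Gap-ETH would also follow from various sibling hypotheses that avoid the PCP route (a direct linear-blow-up gap amplification, or hardness of a problem provably Gap-ETH-equivalent), but each of those is itself open. The honest deliverable here is the conditional implication ``linear-size PCP $+$ ETH $\Rightarrow$ Gap-ETH'' together with the unconditional weakened form above.
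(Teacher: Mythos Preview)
Your proposal is correct: the statement is a conjecture, not a theorem, and the paper does not prove it either---it explicitly states Gap-ETH as a hypothesis and remarks (in the paragraph following the conjecture) that it would follow from ETH given a linear-size PCP, while noting that no such PCP is currently known. Your write-up fleshes out exactly this observation and correctly identifies the obstacle; it is in full agreement with the paper's own treatment.
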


By starting with Gap-ETH instead of ETH, there is no need to apply the PCP Theorem and hence a polylogarithmic loss in the size of the 3-SAT instance does not occur. As demonstrated in previous works, this allows one to improve the ratio in hardness of approximation results~\cite{D16,MR16,M17} and, more importantly, it can be used to prove inapproximability results for some parameterized problems~\cite{BEKP15,CCKLMNT17,CFM17}\footnote{While~\cite{BEKP15} states that the assumption is the existence of a linear-size PCP, Gap-ETH clearly suffices there.}, which are not known to be hard to approximate under ETH. Specifically, for many parameterized problems, the reduction from the gap version of 3-SAT to the problem has size $2^{m'/f(k)}$ for some function $f$ that grows to infinity with $k$ (i.e. $f = \omega(1)$), where $m'$ is the number of clauses in the 3-CNF formula and $k$ is the parameter of the problem. For simplicity, let us focus on the case where $f(k) = k$. If one wishes to derive a meaningful result starting form ETH, $2^{m'/k}$ must be subexponential in terms of $m$, the number of clauses in the original (no-gap) 3-CNF formula. This means that the term $k$ must dominate the $\log^c m$ factor blow-up from the PCP Theorem. However, since FPT algorithms are allowed to have running time of the form $g(k)$ for any function $g$, we can pick $g$ to be $2^{2^k}$. In this case, the algorithm runs in superexponential time in terms of $m$ and we cannot deduce anything regarding the algorithm. On the other hand, if we start from Gap-ETH, we can pick $k$ to be a large constant independent of $m$, which indeed yields hardness of the form claimed in Theorem~\ref{thm:gap-eth-hardness} and Corollary~\ref{cor:gap-eth-hardness-dsn}.

Finally, we remark that Gap-ETH would follow from ETH if a linear-size (constant-query) PCP exists. While constructing short PCPs has long been an active area of research~\cite{BGHSV06,BS08,D07,MR10,BKKMS16}, no linear-size PCP is yet known. On the other hand, there are some supporting evidences for the hypothesis. For instance, it is known that the natural relaxation of 3-SAT in the Sum-of-Squares hierarchy cannot refute Gap-ETH~\cite{Gri01,Sch08}. Moreover, Applebaum recently showed that the hypothesis follows from certain cryptographic assumptions~\cite{App17}. For a more in-depth discussion on Gap-ETH, please refer to~\cite{D16}.

\paragraph{Organization of the Paper.} In the next section, we describe our reduction and give an overview of the proof. Then, in Section~\ref{sec:prelim}, we define additional notions and state some preliminaries. We proceed to provide the full proof of our main agreement theorem in Section~\ref{sec:soundness}. Using this agreement theorem, we deduce the soundness of our reduction in Section~\ref{sec:soundness-csp}. We then plug in the parameters and prove the inapproximability results for 2-CSPs in Section~\ref{sec:2csp-hardness}. In Section~\ref{sec:dsn}, we show how the hardness of approximation result for 2-CSPs imply inapproximability for DSN as well. Finally, we conclude our work with some discussions and open questions in Section~\ref{sec:conclusion}.

\section{Proof Overview} \label{sec:overview}

Like other (Gap-)ETH-hardness of approximation results, our proof is based on a subexponential time reduction from the gap version of 3-SAT to our problem of interest, 2-CSPs. 
Before we describe our reduction, let us define more notations for 2-CSPs and 3-SAT, to facilitate our explanation.

{\bf 2-CSPs.} For notational convenience, we will modify the definition of 2-CSPs slightly so that each variable is allowed to have different alphabets; this definition is clearly equivalent to the more common definition used above. Specifically, an instance $\Gamma$ of 2-CSP now consists of (1) a constraint graph $G = (V, E)$, (2) for each vertex (or variable) $v \in V$, an alphabet set $\Sigma_v$, and, (3) for each edge $\{u, v\} \in E$, a constraint $C_{uv} \subseteq \Sigma_u \times \Sigma_v$. Additionally, to avoid confusion with 3-SAT, we refrain from using the word \emph{assignment} for 2-CSPs and instead use \emph{labeling}, i.e., a labeling of $\Gamma$ is a tuple $\sigma = (\sigma_v)_{v \in V}$ such that $\sigma_v \in \Sigma_v$ for all $v \in V$. An edge $\{u, v\} \in E$ is said to be \emph{satisfied} by a labeling $\sigma$ if $(\sigma_u, \sigma_v) \in \Sigma_u \times \Sigma_v$. The value of a labeling $\sigma$, denoted by $\val(\sigma)$, is defined as the fraction of edges that it satisfies, i.e., $|\{\{u, v\} \in E \mid (\sigma_u, \sigma_v) \in C_{uv}\}| / |E|$. The goal of 2-CSPs is to find $\sigma$ with maximum value; we denote the such optimal value by $\val(\Gamma)$, i.e., $\val(\Gamma) = \max_{\sigma} \val(\sigma)$.

{\bf 3-SAT.} An instance $\Phi$ of 3-SAT consists of a variable set $\cX$ and a clause set $\cC$ where each clause is a disjunction of at most three literals. For any assignment $\psi: \cX \to \{0, 1\}$, $\val(\psi)$ denotes the fraction of clauses satisfied by $\psi$. The goal is to find an assignment $\psi$ that satisfies as many clauses as possible; let $\val(\Phi) = \max_{\psi} \val(\psi)$ denote the fraction of clauses satisfied by such assignment. For each $C \in \cC$, we use $\var(C)$ to denote the set of variables whose literals appear in $C$. We extend this notation naturally to sets of clauses, i.e., for every $T \subseteq \cC$, $\var(T) = \bigcup_{C \in T} \var(C)$.

\subsection*{Our Construction}
Before we state our reduction, let us again reiterate the objective of our reduction. Roughly speaking, given a 3-SAT stance $\Phi = (\cX, \cC)$, we would like to produce a 2-CSP instance $\Gamma_{\Phi}$ such that
\begin{itemize}
\item (Completeness) If $\val(\Phi) = 1$, then $\val(\Gamma_\Phi) = 1$,
\item (Soundness) If $\val(\Phi) < 1 - \varepsilon$, then $\val(\Gamma_\Phi) < k^{o(1)}/k$ where $k$ is number of variables of $\Gamma_\Phi$,
\item (Reduction Time) The time it takes to produce $\Gamma_{\Phi}$ should be $2^{o(m)}$ where $m = |\cC|$,
\end{itemize}
where $\varepsilon > 0$ is some absolute constant.

Observe that, when plugging a reduction with these properties to Gap-ETH, we directly arrive at the claimed $k^{1 - o(1)}$ inapproximability for 2-CSPs. However, for ETH, since we start with a decision version of 3-SAT without any gap, we have to first invoke the PCP theorem to produce an instance of the gap version of 3-SAT before we can apply our reduction. Since the shortest known PCP has a polylogarithmic blow-up in the size (see Theorem~\ref{thm:d-pcp}), the running time lower bound for gap 3-SAT will not be exponential anymore, rather it will be of the form $2^{\Omega(m/\polylog m)}$ instead. Hence, our reduction will need to produce $\Gamma_\Phi$ in $2^{o(m/\polylog m)}$ time. As we shall see later in Section~\ref{sec:2csp-hardness}, this will also be possible with appropriate settings of parameters.

With the desired properties in place, we now move on to state our reduction. In addition to a 3-CNF formula $\Phi$, the reduction also takes in a collection $\cT$ of subsets of clauses of $\Phi$. For now, the readers should think of the subsets in $\cT$ as random subsets of $\cC$ where each element is included in each subset independently at random with probability $\alpha$, which will be specified later. As we will see below, we only need two simple properties that the subsets in $\cT$ are ``well-behaved'' enough and we will later give a deterministic construction of such well-behaved subsets. With this in mind, our reduction can be formally described as follows.

\begin{definition}[The Reduction] \label{def:construction}
Given a 3-CNF formula $\Phi = (\cX, \cC)$ and a collection $\cT$ of subsets of $\cC$, we define a 2-CSP instance $\Gamma_{\Phi, \cT} = (G = (V, E), \Sigma, \{C_{uv}\}_{\{u, v\} \in E})$ as follows:
\begin{itemize}
\item The graph $G$ is the complete graph where the vertex set is $\cT$, i.e., $V = \cT$ and $E = \binom{\cT}{2}$.
\item For each $T \in \cT$, the alphabet set $\Sigma_{\cT}$ is the set of all partial assignments to $\var(T)$ that satisfies every clause in $T$, i.e., $\Sigma_T = \{\psi_T: \var(T) \rightarrow \{0, 1\} \mid \forall C \in T, \psi_T \text{ satisfies } C\}$.
\item For every $T_1 \ne T_2 \in \cT$, $(\psi_{T_1}, \psi_{T_2})$ is included in $C_{T_1T_2}$ if and only if they are consistent, i.e., $C_{T_1T_2} = \{(\psi_{T_1}, \psi_{T_2}) \in \Sigma_{T_1} \times \Sigma_{T_2} \mid \forall x \in \var(T_1) \cap \var(T_2), \psi_{T_1}(x) = \psi_{T_2}(x)\}$.
\end{itemize}
\end{definition}

Let us now examine the properties of the reduction. The number of vertices in $\Gamma_{\Phi, \cT}$ is $k = |\cT|$. For the purpose of the proof overview, $\alpha$ should be thought of as $1/\polylog m$ whereas $k$ should be thought of as much larger than $1/\alpha$ (e.g. $k = \exp(1/\alpha))$. For such value of $k$, all random sets in $\cT$ will have size $O(\alpha m)$ w.h.p., meaning that the reduction time is $2^{m/\polylog m}$ as desired.

Moreover, when $\Phi$ is satisfiable, it is not hard to see that $\val(\Gamma_{\Phi, \cT}) = 1$; more specifically, if $\psi: \cX \to \{0, 1\}$ is the assignment that satisfies every clause of $\Phi$, then we can label each vertex $T \in \cT$ of $\Gamma_{\Phi, \cT}$ by $\psi|_{\var(T)}$, the restriction of $\psi$ on $\var(T)$. Since $\psi$ satisfies all the clauses, $\psi|_{\var(T)}$ satisfies all clauses in $T$, meaning that this is a valid labeling. Moreover, since these are restrictions of the same global assignment $\psi$, they are all consistent and every edge is satisfied.

Hence, we are only left to show that, if $\val(\Phi) < 1 - \varepsilon$, then $\val(\Gamma_{\Phi, \cT}) < k^{o(1)}/k$; this is indeed our main technical contribution. We will show this by contrapositive: assuming that $\val(\Gamma_{\Phi, \cT}) \geqs k^{o(1)}/k$, we will ``decode'' back an assignment to $\Phi$ that satisfies $1 - \varepsilon$ fraction of clauses.

\subsection{Soundness Analysis as an Agreement Theorem}

Our task at hand can be viewed as agreement testing. Informally, in agreement testing, the input is a collection $\{f_S\}_{S \in \cS}$  of local functions $f_S: S \to \{0, 1\}$ where $\cS$ is a collection of subsets of some universe $\cU$ such that, for many pairs $S_1$ and $S_2$, $f_{S_1}$ and $f_{S_2}$ agree, i.e., $f_{S_1}(x) = f_{S_2}(x)$ for all $x \in S_1 \cap S_2$. An agreement theorem says that there must be a global function $g: \cU \to \{0, 1\}$ that coincides (exactly or approximately) with many of the local functions, and thus explains the pairwise ``local'' agreements. In our case, a labeling $\sigma = \{\sigma_T\}_{T \in \cT}$ with high value is exactly a collection of functions $\sigma_T: \var(T) \to \{0, 1\}$ such that, for many pairs of $T_1$ and $T_2$, $\sigma_{T_1}$ and $\sigma_{T_2}$ agrees. The heart of our soundness proof is an agreement theorem that recovers a global function $\psi: \cX \to \{0, 1\}$ that approximately coincides with many of the local functions $\sigma_T$'s and thus satisfies $1 - \varepsilon$ fraction of clauses of $\Phi$. To discuss the agreement theorem in more details, let us define several additional notations, starting with those for (approximate) agreements of a pair of functions:

\begin{definition}
For any universe $\cU$, let $f_{S_1}: S_1 \to \{0, 1\}$ and $f_{S_2}: S_2 \to \{0, 1\}$ be any two functions whose domains $S_1, S_2$ are subsets of $\cU$. We use the following notations for (dis)agreements of these two functions:
\begin{itemize}
\item Let $\disa(f_{S_1}, f_{S_2})$ denote the number of $x \in S_1 \cap S_2$ that $f_{S_1}$ and $f_{S_2}$ disagree on, i.e., $\disa(f_{S_1}, f_{S_2}) = |\{x \in S_1 \cap S_2 \mid f_{S_1}(x) \ne f_{S_2}(x)\}|$.
\item For any $\zeta \geqs 0$, we say that $f_{S_1}$ and $f_{S_2}$ are \emph{$\zeta$-consistent} if $\disa(f_{S_1}, f_{S_2}) \leqs \zeta|\cU|$, and we say that the two functions are \emph{$\zeta$-inconsistent} otherwise. For $\zeta = 0$, we sometimes drop 0 and refer to these simply as \emph{consistent} and \emph{inconsistent} (instead of 0-consistent and 0-inconsistent).
\item We use $f_{S_1} \overset{\zeta}{\approx} f_{S_2}$ and $f_{S_1} \overset{\zeta}{\napprox} f_{S_2}$ as shorthands for $\zeta$-consistency and $\zeta$-inconsistency respectively. Again, for $\zeta = 0$, we may drop 0 from the notations and simply use $f_{S_1} \approx f_{S_2}$ and $f_{S_1} \napprox f_{S_2}$.
\end{itemize}
\end{definition}

Next, we define the notion of agreement probability for any collection of functions:

\begin{definition}
For any $\zeta \geqs 0$ and any collection $\cF = \{f_S\}_{S \in \cS}$ of functions, the $\zeta$-agreement probability, denoted by $\agr_{\zeta}(\cF)$ is the probability that $f_{S}$ is $\zeta$-consistent with $f_{S'}$ where $S$ and $S'$ are chosen independently uniformly at random from $\cS$, i.e., $\agr_\zeta(\cF) = \Pr_{S, S' \in \cS}[f_S \overset{\zeta}{\approx} f_{S'}]$. When $\zeta = 0$, we will drop 0 from the notation and simply use $\agr(\cF)$.
\end{definition}

Our main agreement theorem, which works when each $S \in \cS$ is a large ``random'' subset, says that, if $\agr(\cF)$ is noticeably large, then there exists a global function that is approximately consistent with many of the local functions in $\cF$. This is stated more precisely (but still informally) below.

\begin{theorem}[Informal; See Theorem~\ref{thm:agr-main-formal}] \label{thm:agr-informal}
Let $\cS$ be a collection of $k$ independent random $\alpha n$-element subsets of $[n]$. The following holds with high probability: for any $\beta > 0$ and any collection of functions $\cF = \{f_S\}_{S \in \cS}$ such that $\delta \triangleq \agr(\cF) \geqs k^{o_{\beta,\alpha}(1)}/k$, there exist a function $g: [n] \to \{0, 1\}$ and a subcollection $\cS'$ of size $\delta k^{1 - o_{\beta, \alpha}(1)}$ such that $g \overset{\beta}{\approx} f_{S'}$ for all $S' \in \cS'$.
\end{theorem}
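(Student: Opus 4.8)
The plan is to follow the Raz--Safra paradigm of ``zeroing in'' on an (almost) perfectly consistent sub-collection, but adapted to the distance-free combinatorial setting by tracking several strengths of agreement simultaneously. First I would fix the collection $\cF=\{f_S\}_{S\in\cS}$ with $\delta=\agr(\cF)\ge k^{o(1)}/k$ and build the \emph{agreement graph} $H$ on vertex set $\cS$, putting an edge between $S_1$ and $S_2$ whenever $f_{S_1}\approx f_{S_2}$ (exact consistency on the intersection). By hypothesis $H$ has $\ge \delta\binom{k}{2}$ edges, i.e.\ average degree $\gtrsim \delta k$. The high-level goal is to locate a vertex $S^\star$ together with a large set of neighbors $\cS'$ that are pairwise consistent (not merely consistent with $S^\star$), because from such a ``sunflower-like'' cluster one reads off the global function: define $g$ on each coordinate $x\in[n]$ by majority vote over those $S'\in\cS'$ with $x\in S'$ (or by the common value, if the cluster is exactly consistent). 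Since the sets are random $\alpha n$-subsets in general position, every coordinate $x$ is covered by a $\approx\alpha$ fraction of the sets in any large subcollection, so the vote is well-defined and $g$ will $\beta$-approximate every $f_{S'}$ in the cluster.

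The technical heart is producing this consistent cluster, and here is where the random/general-position structure of $\cS$ does the work and where I expect the main obstacle to lie. The key geometric fact I would establish (a Chernoff-plus-union-bound statement, proved once and for all about $\cS$, hence the ``with high probability over $\cS$'' in the theorem) is a \emph{transitivity-up-to-loss} lemma: if $f_{S_1}\approx f_{S_2}$ and $f_{S_2}\approx f_{S_3}$ then $f_{S_1}\overset{\zeta}{\approx}f_{S_3}$ for a small $\zeta$ depending on $\alpha$, because $S_1\cap S_3$ is mostly covered by $S_1\cap S_2\cap S_3$ (each of $|S_1\cap S_3|$ and $|S_1\cap S_2\cap S_3|$ concentrates around $\alpha^2 n$ and $\alpha^3 n$ respectively, so the ``uncontrolled'' part is an $\alpha$-fraction of the intersection). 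Iterating this along a path of length $\ell$ in $H$ gives consistency up to roughly $\ell\zeta$. This is exactly the substitute for ``distance'': instead of arguing two local functions are equal or far, I argue that short chains of exact-agreement edges force approximate agreement, and I must carefully control how the error $\zeta$ accumulates so that, over the $\mathrm{polylog}$-many steps I will take, it stays below $\beta$. Getting the quantitative bookkeeping right — so that the final loss is $\beta$ and the final cluster has size $\delta k^{1-o(1)}$ — is the crux; this forces the $k^{o_{\beta,\alpha}(1)}$ slack in the hypothesis and the $1-o_{\beta,\alpha}(1)$ in the conclusion.

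With transitivity in hand, I would extract the cluster by a standard density/neighborhood-argument. Because $H$ has average degree $\gtrsim\delta k$, a counting argument (e.g.\ looking at the number of cherries $S_1$--$S_2$--$S_3$, or repeatedly passing to the subgraph induced on a common neighborhood) produces a vertex $S^\star$ whose neighborhood $N(S^\star)$ contains a subset $\cS'$ of size $\delta k^{1-o(1)}$ that is \emph{internally dense} in $H$ — every two members are joined by a short path through $N(S^\star)$, hence by transitivity are $O(\beta)$-consistent. Concretely I would iterate ``restrict to the common agreement-neighborhood of a random edge'' a bounded (depending on $\alpha,\beta$) number of times, each step shrinking the set by at most a $\mathrm{poly}(1/\delta)=k^{o(1)}$ factor and increasing the pairwise agreement error by at most $\zeta$, until the surviving set is a near-clique in the appropriate $\overset{O(\beta)}{\approx}$ sense. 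Finally I define $g$ by the majority/plurality vote over $\cS'$ as above; using the general-position covering property once more, the fraction of coordinates where $g$ disagrees with a fixed $f_{S'}$ is bounded by the average pairwise disagreement within $\cS'$, which is $O(\beta)$ — after rescaling $\beta$ by a constant at the outset, this yields $g\overset{\beta}{\approx}f_{S'}$ for all $S'\in\cS'$, completing the proof. The place I would be most careful is the interaction between the number of peeling iterations (which must be large enough to reach a near-clique) and the accumulated error (which must remain $<\beta$) — balancing these is what pins down all the $o_{\beta,\alpha}(1)$ terms in the statement.
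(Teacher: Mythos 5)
Your plan shares the paper's skeleton (graded agreement strengths, extracting a near-consistent cluster, majority decoding using the covering property of random sets — the last step is essentially the paper's decoding lemma), but the two central steps have genuine gaps. First, the ``transitivity-up-to-loss'' lemma as you state it is false, and your own size estimates show it: $|S_1\cap S_2\cap S_3|\approx\alpha^3 n$ is only an $\alpha$-fraction of $|S_1\cap S_3|\approx\alpha^2 n$, so a single common neighbor $S_2$ with exact agreements $f_{S_1}\approx f_{S_2}\approx f_{S_3}$ controls $f_{S_1}$ versus $f_{S_3}$ on only an $\alpha$-fraction of their intersection; the uncontrolled part is a $(1-\alpha)$-fraction, i.e.\ essentially everything when $\alpha$ is small. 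Chaining along a single path of length $\ell$ makes this worse, not better, since you only control the $(\ell+1)$-wise intersection of size about $\alpha^{\ell+1}n$; the error does not accumulate as $\ell\zeta$. What is true — and what the paper proves — is the contrapositive counting statement: if $f_{S_1},f_{S_2}$ disagree on more than $\zeta n$ points, then they have at most $O(\ln(1/\zeta)/\alpha)$ common exact-agreement neighbors, and more generally at most $(r\ell)^{O(\ell)}$ ``red-filled'' $\ell$-walks between them, because one needs the \emph{union over many disjoint} intermediate walks of the corresponding intersections to cover almost all coordinates (the intersection-disperser property), with an induction over walk length to pass from disjoint walks to all walks. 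No single path forces approximate agreement.

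Second, even with the corrected transitivity, your extraction step (counting cherries, or repeatedly passing to the common neighborhood of an edge) is exactly the triangle argument, which requires the number of 2-walks $k d^2$ (with $d\approx\delta k$) to dominate the bound $qk^2$ on bad triangles, i.e.\ $\delta\gg k^{-1/2+o(1)}$. In the regime the theorem is actually about, $\delta=k^{o(1)}/k$, we have $d=k^{o(1)}$ and this counting collapses; relatedly, your claim that each peeling step loses only a $\poly(1/\delta)=k^{o(1)}$ factor is off, since $1/\delta=k^{1-o(1)}$ there. The paper overcomes this barrier by counting red-filled $\ell$-walks with $\ell\approx\sqrt{\log k}$, so that $kd^{\ell}\gg qk^2$ even for $d=k^{o(1)}$; this yields only two blue-neighborhoods in which a $1/\ell^2$ fraction of pairs are $\zeta$-consistent, and it then runs a \emph{second} round on that subcollection via the generalized two-level consistency graph (blue $=\zeta$-consistent, red $=\zeta'$-inconsistent), whose transitivity now comes from the uniformity of the subsets rather than the disperser property, before applying the triangle argument and the majority decoding. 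Without the $\ell$-walk machinery and this two-stage structure, the proposal cannot reach $\delta$ near $1/k$.
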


To see that Theorem~\ref{thm:agr-informal} implies our soundness, let us view a labeling $\sigma = \{\sigma_T\}_{T \in \cT}$ as a collection $\cF = \{f_S\}_{S \in \cS}$ where $\cS = \{\var(T) \mid T \in \cT\}$ and $f_{\var(T)}$ is simply $\sigma_{T}$. Now, when $\val(\sigma)$ is large, $\agr(\cF)$ is large as well. Moreover, while the sets $S \in \cS$ are not random subsets of variables but rather variable sets of random subsets of clauses, it turns out that these sets are ``well-behaved'' enough for us to apply Theorem~\ref{thm:agr-informal}. This yields a global function $\psi: \cX \to \{0, 1\}$ that are $\beta$-consistent with many $\sigma_T$'s. Note that, if instead of $\beta$-consistency we had exact consistency, then we would have been done because $\psi$ must satisfy all clauses that appear in any $T$ such that $\psi$ is consistent with $\sigma_T$; since there are many such $T$'s and these are random sets, $\psi$ indeed satisfies almost all clauses. A simple counting argument shows that this remains true even with approximate consistency, provided that most clauses appear in at least a certain fraction of such $T$'s (an assumption which holds for random subsets). Hence, the soundness of our reduction follows from Theorem~\ref{thm:agr-informal}, and we devote the rest of this section to outline an overview of its proof.

{\bf Optimality of the parameters of Theorem~\ref{thm:agr-informal}.} Before we proceed to the overview, we would like to note that the size of the subcollection $\cS'$ in Theorem~\ref{thm:agr-informal} is nearly optimal. This is because, we can partition $\cS$ into $1/\delta$ subcollections $\cS_1, \dots, \cS_{1/\delta}$ each of size $\delta k$ and, for each $i \in [1/\delta]$, randomly select a global function $g_i: [n] \to \{0, 1\}$ and let each $f_S$ be the restriction of $g_i$ to $S$ for each $S \in \cS_i$. In this way, we have $\agr(\cF) \geqs \delta k$ and any global function can be (approximately) consistent with at most $\delta k$ local functions. This means that $\cS'$ can be of size at most $\delta k$ in this case and, up to a $k^{o_{\beta, \alpha}(1)}$ multiplicative factor, Theorem~\ref{thm:agr-informal} yields almost a largest possible $\cS'$.

\subsection{A Simplified Proof: $\delta \geqs k^{o(1)}/k^{1/2}$ Regime}

We now sketch the proof of Theorem~\ref{thm:agr-informal}. Before we describe how we can find $g$ when $\delta \geqs k^{o_{\beta, \alpha}(1)}/k$, let us sketch the proof assuming a stronger assumption that $\delta \geqs \Theta_{\alpha, \beta}(1)/k^{1/2}$. Note that this simplified proof already implies a $k^{1/2 - o(1)}$ factor ETH-hardness of approximating 2-CSPs. In the next subsection, we will then proceed to refine the arguments to handle smaller values of $\delta$.

Let us consider the \emph{consistency graph} of $\cF$. This is the graph $G^{\cF}$ whose vertex set is $\cS$ and there is an edge between $S_1$ and $S_2$ if and only if $f_{S_1}$ and $f_{S_2}$ are consistent. Note that the number of edges in $G^{\cF}$ is equal to $\frac{k^2 \delta - k}{2}$, where the subtraction of $k$ comes from the fact that $\delta = \agr(\cF)$ includes the agreement of each set and itself (whereas $G^{\cF}$ does not).

Previous works on agreement testers exploit particular structures of the consistency graph to decode a global function. One such property that is relevant to our proof is the notion of \emph{almost transitivity} defined by Raz and Safra in the analysis of their test~\cite{RazS97}. More specifically, a graph $G = (V, E)$ is said to be $q$-transitive for some $q > 0$ if, for every non-edge $\{u, v\}$ (i.e. $\{u, v\} \in \binom{V}{2} \setminus E$), $u$ and $v$ can share at most $q$ common neighbors\footnote{In~\cite{RazS97}, the transitivity parameter $q$ is used to denote the \emph{fraction} of vertices that are neighbors of both $u$ and $v$ rather than the \emph{number} of such vertices as defined here. However, the latter notion will be more convenient for us.}. Raz and Safra showed that their consistency graph is $(k^{1 - \Omega(1)})$-transitive where $k$ denotes the number of vertices of the graph. They then proved a generic theorem regarding $(k^{1 - \Omega(1)})$-transitive graphs that, for any such graph, its vertex set can be partitioned so that the subgraph induced by each partition is a clique and that the number of edges between different partitions is small. Since a sufficiently large clique corresponds to a global function in their setting, they can then immediately deduce that their result.

Observe that, in our setting, a large clique also corresponds to a global function that is consistent with many local functions. In particular, suppose that there exists $\cS' \subseteq \cS$ of size sufficiently large such that $\cS$ induces a clique in $G^{\cF}$. Since $f_{S'}$'s are perfectly consistent with each other for all $S' \in \cS'$, there is a global function $g: [n] \to \{0, 1\}$ that is consistent with all such $f_{S'}$'s. Hence, if we could show that our consistency graph $G^{\cF}$ is $(k^{1 - \Omega(1)})$-transitive, then we could use the same argument as Raz and Safra's to deduce our desired result. Alas, our graph $G^{\cF}$ does not necessarily satisfy this transitivity property; for instance, consider any two sets $S_1, S_2 \in \cS$ and let $f_{S_1}, f_{S_2}$ be such that they disagree on only one variable, i.e., there is a unique $x \in S_1 \cap S_2$ such that $f_{S_1}(x) \ne f_{S_2}(x)$. It is possible that, for every $S \in \cS$ that does not contain $x$, $f_S$ agrees with both $f_{S_1}$ and $f_{S_2}$; in other words, every such $S$ can be a common neighbor of $S_1$ and $S_2$. Since each variable $x$ appears roughly in only $\Theta(\alpha)$ fraction of the sets, there can be as many as $(1 - \Theta(\alpha))k = (1 - o(1))k$ common neighbors of $S_1$ and $S_2$ even when there is no edge between $S_1$ and $S_2$!

Fortunately for us, a weaker statement holds: if $f_{S_1}$ and $f_{S_2}$ disagree on more than $\zeta n$ variables (instead of just one variable as above), then $S_1$ and $S_2$ have at most $O(\ln(1/\zeta)/\alpha)$ common neighbors in $G^{\cF}$. Here $\zeta$ should be thought of as $\beta^2$ times a small constant which will be specified later. To see why this statement holds, observe that, since every $S \in \cS$ is a random subset that includes each clause $x \in [n]$ with probability $\alpha$, Chernoff bound implies that, for every subcollection $\tcS \subseteq \cS$ of size $\Omega(\ln(1/\zeta)/\alpha)$, $\bigcup_{S \in \tcS} S$ contains all but $O(\zeta)$ fraction of variables. Let $\tcS_{S_1, S_2} \subseteq \cS$ denote the set of common neighbors of $S_1$ and $S_2$. It is easy to see that $S_1$ and $S_2$ can only disagree on variables that do not appear in $\bigcup_{S \in \tcS_{S_1, S_2}} S$. If $\tcS_{S_1, S_2}$ is of size $\Omega(\ln(1/\zeta)/\alpha)$, then $\bigcup_{S \in \tcS_{S_1, S_2}} S$ contains all but $O(\zeta)$ fraction of variables, which means that $S_1$ and $S_2$ disagrees only on $O(\zeta)$ fraction of variables. By selecting the constant appropriately inside $O(\cdot)$, we arrive at the claim statement.

In other words, while the transitive property does not hold for every edge, it holds for the edges $\{S_1, S_2\}$ where $f_{S_1}$ and $f_{S_2}$ are $\zeta$-inconsistent. This motivates us to define a two-level consistency graph, where the edges with $\zeta$-inconsistent are referred to as the \emph{red} edges whereas the original edges in $G^{\cF}$ is now referred to as the \emph{blue} edges. We define this formally below.

\begin{definition}[Red/blue Graph]
A red-blue graph is an undirected graph $G = (V, E = E_r \cup E_b)$ where its edge set $E$ is partitioned into two sets $E_r$, the set of red edges, and $E_b$, the set of blue edges. We use the prefixes ``blue-'' and ``red-'' to refer to the quantities of the graph $(V, E_b)$ and $(V, E_r)$ respectively; for instance, $u$ is said to be a blue-neighbor of $v$ if $\{u, v\} \in E_b$.
\end{definition}

\begin{definition}[Two-Level Consistency Graph] \label{def:two-lev}
Given a collection of functions $\cF = \{f_S\}_{S \in \cS}$ and a real number $0 \leqs \zeta \leqs 1$, the two-level consistency graph $G^{\cF, \zeta} = (V^{\cF, \zeta}, E^{\cF, \zeta}_r \cup E^{\cF, \zeta}_b)$ is a red-blue graph defined as follows.
\begin{itemize}
\item The vertex set $V^{\cF, \zeta}$ is simply $\cS$.
\item The blue edges are the consistent pairs $\{S_1, S_2\}$, i.e., $E_b = \{\{S_1, S_2\} \in \binom{\cS}{2} \mid f_{S_1} \approx f_{S_2}\}$.
\item The red edges are the $\zeta$-inconsistent pairs $\{S_1, S_2\}$, i.e., $E_r = \{\{S_1, S_2\} \in \binom{\cS}{2} \mid f_{S_1} \overset{\zeta}{\napprox} f_{S_2}\}$.
\end{itemize}
\end{definition}

Note that $S_1,S_2$ constitute neither a blue nor a red edge when $0 < \disa(f_{S_1}, f_{S_2}) \leqs \zeta n$.

Now, the transitivity property we argue above can be stated as follows: for every red-edge $\{S_1, S_2\}$ of $G^{\cF, \zeta}$, there are at most $O(\ln(1/\zeta)/\alpha)$ different $S$'s such that both $\{S, S_1\}$ and $\{S, S_2\}$ are blue edges. For brevity, let us call any red-blue graph $G = (V, E_r \cup E_b)$ \emph{$q$-red/blue-transitive} if, for every red edge $\{u, v\} \in E_r$, $u$ and $v$ have at most $q$ common blue-neighbors. We will now argue that in any $q$-red/blue-transitive of average blue-degree $d$, there exists a subset $U \subseteq V$ of size $\Omega(d)$ such that only $O(qk/d^2)$ fraction of pairs of vertices in $U$ are red edges.

Before we prove this, let us state why this is useful for decoding the desired global function $g$. Observe that such a subset $U$ of vertices in the two-level consistency graph translates to a subcollection $\cS' \subseteq \cS$ such that, for all but $O(qk/d^2)$ fraction of pairs of sets $S_1, S_2 \subseteq \cS'$, $\{S_1, S_2\}$ does not form a red edge. Recall from definition of red edges that, for such $S_1, S_2$, $f_{S_1}$ and $f_{S_2}$ disagrees on at most $\zeta n$ variables. In other words, $\cS'$ is similar to a clique in the (not two-level) consistency graph, except that (1) $O(qk/d^2)$ fraction of pairs $\{S_1, S_2\}$ are allowed to disagree on as many variables as they like, and (2) even for the rest of pairs, the guarantee now is that they agree on all but at most $\zeta n$ variables, instead of total agreement as in the previous case of clique. Fortunately, this still suffices to find $g$ that is $O(\sqrt{qk/d^2 + \zeta})$-consistent with $\Omega(d)$ functions.
One way construct such a global function is to simply assign each $g(x)$ according to the majority of $f_S(x)$ for all $S \in \cS'$ such that $x \in S$. (This is formalized in Section~\ref{sec:maj-decode}.) Note that in our case $q = O(\ln(1/\zeta)/\alpha)$ and $d = \Omega(\delta k)$. Hence, if we pick $\zeta \ll \beta^2$ and $\delta \gg (q^{1/2}/\beta)/k^{1/2} = O_{\beta, \alpha}(1)/k^{1/2}$, we indeed get a global function $g$ that is $\beta$-consistent with $\Omega(\delta k)$ local functions.

We now move on to sketch how one can find such an ``almost non-red subgraph''. For simplicity, let us assume that every vertex has the same blue-degree (i.e. $(V, E_b)$ is $d$-regular). Let us count the number of \emph{red-blue-blue triangle} (or \emph{rbb triangle}), which is a 3-tuple $(u, v, w)$ of vertices in $V$ such that $\{u, v\}, \{v, w\}$ are blue edges whereas $\{u, w\}$ is a red edge. An illustration of a rbb triangle can be found in Figure~\ref{fig:triangle}. The red/blue transitivity can be used to bound the number of rbb triangles as follows. For each $(u^*, w^*) \in V^2$, since the graph is $q$-red/blue-transitive there are at most $q$ rbb triangle with $u = u^*$ and $w = w^*$. Hence, in total, there can be at most $qk^2$ rbb triangles. As a result, there exists $v^* \in V$ such that the number of rbb triangles $(u, v, w)$ such that $v = v^*$ is at most $qk$. Let us now consider the set $U = N_b(v^*)$ that consists of all blue-neighbors of $v^*$. There can be at most $qk$ red edges with both endpoints in $N_b(v^*)$ because each such edge corresponds to a rbb triangle with $v = v^*$. From our assumption that every vertex has blue degree $d$, we indeed have that $|U| = d$ and that the fraction of pairs of vertices in $U$ that are linked by red edges is $O(qk/d^2)$ as desired. This completes our overview for the case $\delta \geqs \Theta_{\beta,\alpha}(1) / k^{1/2}$.
\inote{this last part may be shortened or skipped perhaps}

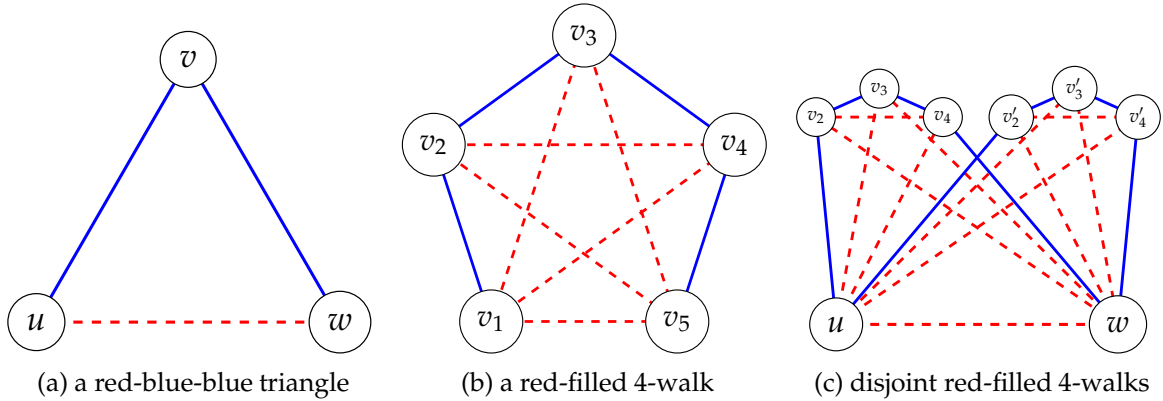
\begin{figure}[h!]
    \centering
    \begin{subfigure}[b]{0.3\textwidth}
        \resizebox{\textwidth}{!}{\begin{tikzpicture}
\draw [line width=1,color=blue] (-1.7320508076, -1.0000000000) -- (0.0000000000, 2.0000000000);
\draw [line width=1,color=red,dashed] (-1.7320508076, -1.0000000000) -- (1.7320508076, -1.0000000000);
\draw [line width=1,color=blue] (0.0000000000, 2.0000000000) -- (1.7320508076, -1.0000000000);
\node[circle,draw=black, fill=white,minimum size=15,scale=1] at (-1.7320508076, -1.0000000000) {$u$};
\node[circle,draw=black, fill=white,minimum size=15,scale=1] at (0.0000000000, 2.0000000000) {$v$};
\node[circle,draw=black, fill=white,minimum size=15,scale=1] at (1.7320508076, -1.0000000000) {$w$};
\end{tikzpicture}}
        \caption{a red-blue-blue triangle}
        \label{fig:triangle}
    \end{subfigure}
    ~
    \begin{subfigure}[b]{0.3\textwidth}
    	\resizebox{\textwidth}{!}{\begin{tikzpicture}
\draw [line width=1,color=blue] (-1.1755705046, -1.6180339887) -- (-1.9021130326, 0.6180339887);
\draw [line width=1,color=red,dashed] (-1.1755705046, -1.6180339887) -- (0.0000000000, 2.0000000000);
\draw [line width=1,color=red,dashed] (-1.1755705046, -1.6180339887) -- (1.9021130326, 0.6180339887);
\draw [line width=1,color=red,dashed] (-1.1755705046, -1.6180339887) -- (1.1755705046, -1.6180339887);
\draw [line width=1,color=blue] (-1.9021130326, 0.6180339887) -- (0.0000000000, 2.0000000000);
\draw [line width=1,color=red,dashed] (-1.9021130326, 0.6180339887) -- (1.9021130326, 0.6180339887);
\draw [line width=1,color=red,dashed] (-1.9021130326, 0.6180339887) -- (1.1755705046, -1.6180339887);
\draw [line width=1,color=blue] (0.0000000000, 2.0000000000) -- (1.9021130326, 0.6180339887);
\draw [line width=1,color=red,dashed] (0.0000000000, 2.0000000000) -- (1.1755705046, -1.6180339887);
\draw [line width=1,color=blue] (1.9021130326, 0.6180339887) -- (1.1755705046, -1.6180339887);
\node[circle,draw=black, fill=white,minimum size=15,scale=1] at (-1.1755705046, -1.6180339887) {$v_1$};
\node[circle,draw=black, fill=white,minimum size=15,scale=1] at (-1.9021130326, 0.6180339887) {$v_2$};
\node[circle,draw=black, fill=white,minimum size=15,scale=1] at (0.0000000000, 2.0000000000) {$v_3$};
\node[circle,draw=black, fill=white,minimum size=15,scale=1] at (1.9021130326, 0.6180339887) {$v_4$};
\node[circle,draw=black, fill=white,minimum size=15,scale=1] at (1.1755705046, -1.6180339887) {$v_5$};
\end{tikzpicture}}
    	\caption{a red-filled $4$-walk}
    	\label{fig:4-walks}
    \end{subfigure}
    ~
    \begin{subfigure}[b]{0.3\textwidth}
    	\resizebox{\textwidth}{!}{\begin{tikzpicture}
\draw [line width=1,color=blue] (-1.7320508076, -1.0000000000) -- (-1.9794228634, 1.5500000000);
\draw [line width=1,color=red,dashed] (-1.7320508076, -1.0000000000) -- (-1.2000000000, 1.9000000000);
\draw [line width=1,color=red,dashed] (-1.7320508076, -1.0000000000) -- (-0.4205771366, 1.5500000000);
\draw [line width=1,color=red,dashed] (-1.7320508076, -1.0000000000) -- (1.7320508076, -1.0000000000);
\draw [line width=1,color=blue] (-1.9794228634, 1.5500000000) -- (-1.2000000000, 1.9000000000);
\draw [line width=1,color=red,dashed] (-1.9794228634, 1.5500000000) -- (-0.4205771366, 1.5500000000);
\draw [line width=1,color=red,dashed] (-1.9794228634, 1.5500000000) -- (1.7320508076, -1.0000000000);
\draw [line width=1,color=blue] (-1.2000000000, 1.9000000000) -- (-0.4205771366, 1.5500000000);
\draw [line width=1,color=red,dashed] (-1.2000000000, 1.9000000000) -- (1.7320508076, -1.0000000000);
\draw [line width=1,color=blue] (-0.4205771366, 1.5500000000) -- (1.7320508076, -1.0000000000);
\draw [line width=1,color=blue] (-1.7320508076, -1.0000000000) -- (0.4205771366, 1.5500000000);
\draw [line width=1,color=red,dashed] (-1.7320508076, -1.0000000000) -- (1.2000000000, 1.9000000000);
\draw [line width=1,color=red,dashed] (-1.7320508076, -1.0000000000) -- (1.9794228634, 1.5500000000);
\draw [line width=1,color=red,dashed] (-1.7320508076, -1.0000000000) -- (1.7320508076, -1.0000000000);
\draw [line width=1,color=blue] (0.4205771366, 1.5500000000) -- (1.2000000000, 1.9000000000);
\draw [line width=1,color=red,dashed] (0.4205771366, 1.5500000000) -- (1.9794228634, 1.5500000000);
\draw [line width=1,color=red,dashed] (0.4205771366, 1.5500000000) -- (1.7320508076, -1.0000000000);
\draw [line width=1,color=blue] (1.2000000000, 1.9000000000) -- (1.9794228634, 1.5500000000);
\draw [line width=1,color=red,dashed] (1.2000000000, 1.9000000000) -- (1.7320508076, -1.0000000000);
\draw [line width=1,color=blue] (1.9794228634, 1.5500000000) -- (1.7320508076, -1.0000000000);
\node[circle,draw=black, fill=white,minimum size=15,scale=1] at (-1.7320508076, -1.0000000000) {$u$};
\node[circle,draw=black, fill=white,minimum size=15,scale=1] at (1.7320508076, -1.0000000000) {$w$};
\node[circle,draw=black, fill=white,minimum size=15,scale=0.6] at (-1.9794228634, 1.5500000000) {$v_2$};
\node[circle,draw=black, fill=white,minimum size=15,scale=0.6] at (-1.2000000000, 1.9000000000) {$v_3$};
\node[circle,draw=black, fill=white,minimum size=15,scale=0.6] at (-0.4205771366, 1.5500000000) {$v_4$};
\node[circle,draw=black, fill=white,minimum size=15,scale=0.6] at (0.4205771366, 1.5500000000) {$v'_2$};
\node[circle,draw=black, fill=white,minimum size=15,scale=0.6] at (1.2000000000, 1.9000000000) {$v'_3$};
\node[circle,draw=black, fill=white,minimum size=15,scale=0.6] at (1.9794228634, 1.5500000000) {$v'_4$};
\end{tikzpicture}}
    	\caption{disjoint red-filled $4$-walks}
    	\label{fig:disjoint-4-walks}
    \end{subfigure}
    \caption{Illustrations of red-filled walks. The red edges are represented by red dashed lines whereas the blue edges are represented by blue solid lines. Figure~\ref{fig:triangle} and Figure~\ref{fig:4-walks} demonstrate a red-filled 2 walk (aka rbb triangle) and a red-filled 4-walk respectively. Figure~\ref{fig:disjoint-4-walks} shows two disjoint red-filled 4-walks.}
\end{figure}

\subsection{Towards $\delta = k^{o(1)}/k$ Regime}

To handle smaller $\delta$, we need to first understand why the approach above fails to work for $\delta \leqs 1/k^{1/2}$. To do so, note that the above proof sketch can be summarized into three main steps:
\begin{enumerate}[(1)]
\item Show that the two-level consistency graph $G^{\cF}$ is $q$-red/blue-transitive for some $q = k^{o(1)}$.
\item Use red/blue transitivity to find a large subgraph of $G^{\cF}$ with few induced red edges. \label{step:finding-almost-clique}
\item Decode a global function from such an ``almost non-red subgraph''.
\end{enumerate}

The reason that we need $\delta \gg 1/k^{1/2}$, or equivalently $d \gg k^{1/2}$, lies in Step~\ref{step:finding-almost-clique}. Although not stated as such earlier, our argument in this step can be described as follows. We consider all length-2 blue-walks, i.e., all $(u, v, w) \in V^3$ such that $\{u, v\}$ and $\{v, w\}$ are both blue edges, and, using the red/blue transitivity of the graph, we argue that, for almost of all these walks, $\{u, w\}$ is not a red edge (i.e. $(u, v, w)$ is not a rbb triangle), which then allows us to find an almost non-red subgraph. For this argument to work, we need the number of length-2 blue-walks to far exceed the number of rbb triangles. The former is $kd^2$ whereas the latter is bounded above by $k^2q$ in $q$-red/blue-transitive graphs. This means that we need $kd^2 \gg k^2q$, which implies that $d \gg k^{1/2}$.

To overcome this limitation, we instead consider all length-$\ell$ blue-walks for $\ell > 2$ and we will define a ``rbb-triangle-like'' structure on these walks. Our goal is again to show that this structure appears rarely in random length-$\ell$ blue-walks and we will then use this to find a subgraph that allows us to decode a good assignment for $\Phi$. Observe that the number of length-$\ell$ blue walks is $kd^{\ell}$. We also hope that the number of ``rbb-triangle-like'' structures is still small; in particular, we will still get a similar bound $k^{2 + o(1)}$ for such generalized structure, similar to our previous bound for the red-blue-blue triangles. When this is the case, we need $kd^\ell \geqs k^{2 + o(1)}$, meaning that when $\ell = \omega(1)$ it suffices to select $d = k^{o(1)}$, which yields $k^{1 - o(1)}$ factor inapproximability as desired. To facilitate our discussion, let us define notations for $\ell$-walks here.

\begin{definition}[$\ell$-Walks]
For any red/blue graph $G = (V, E_r \cup E_b)$ and any integer $\ell \geqs 2$, an \emph{$\ell$-blue-walk} in $G$ is an $(\ell + 1)$-tuple of vertices $(v_1, v_2, \dots, v_{\ell + 1}) \in V^{\ell + 1}$ such that every pair of consecutive vertices are joined by a blue edge, i.e., $\{v_i, v_{i + 1}\} \in E_b$ for every $i \in [\ell]$. For brevity, we sometimes refer to $\ell$-blue walks simply as \emph{$\ell$-walks}. We use $\cW^G_\ell$ to denote the set of all $\ell$-walks in $G$.
\end{definition}

Note here that a vertex can appears multiple times in a single $\ell$-walk.

One detail we have yet to specify in the proof is the structure that generalizes the rbb triangle for $\ell$-walks where $\ell > 2$. Like before, this structure will enforce the two end points of the walk to be joined by a red edge, i.e., $\{v_1, v_{\ell + 1}\} \in E_r$. Additionally, we require every pair of non-consecutive vertices to be joined by a red edge. We call such a walk a \emph{red-filled $\ell$-walk} (see Figure~\ref{fig:4-walks}):

\begin{definition}[Red-Filled $\ell$-Walks]
For any red/blue graph $G = (V, E_r \cup E_b)$, a \emph{red-filled $\ell$-walk} is an $\ell$-walk $(v_1, v_2, \dots, v_{\ell + 1})$ such that every pair of non-consecutive vertices is joined by a red edge, i.e., $\{v_i, v_j\} \in E_r$ for every $i, j \in [\ell + 1]$ such that $j > i + 1$. Let $\tcW^G_\ell$ denote the set of all red-filled $\ell$-walks in $G$. Moreover, for every $u, v \in V$, let $\tcW^G_\ell(u, v)$ denote the set of all red-filled $\ell$-walks from $u$ to $v$, i.e., $\cW^G_\ell(u, v) = \{(v_1, \dots, v_{\ell + 1}) \in \tcW^G_\ell \mid v_1 = u \wedge v_{\ell + 1} = v\}$.
\end{definition}

As mentioned earlier, we will need a generalized transitivity property that works not only for rbb triangles but also for our new structure, i.e. the red-filled $\ell$-walks. This can be defined analogously to $q$-red/blue transitivity as follows.

\begin{definition}[$(q, \ell)$-Red/Blue Transitivity]
For any positive integers $q, \ell \in \N$, a red/blue graph $G = (V, E_r \cup E_b)$ is said to be \emph{$(q, \ell)$-red/blue-transitive} if, for every pair of vertices $u, v \in V$ that are joined by a red edge, there exists at most $q$ red-filled $\ell$-walks starting at $u$ and ending at $v$, i.e., $|\tcW^G_\ell(u, v)| \leqs q$.
\end{definition}

Using a similar argument to before, we can show that, when $\cS$ consists of random subsets where each element is included in a subset with probability $\alpha$, the two-level agreement graph is $(q, \ell)$-red/blue transitive for some parameter $q$ that is a function of only $\alpha$ and $\ell$. When $1/\alpha$ and $\ell$ are small enough in terms of $k$, $q$ can made to be $k^{o(1)}$. (The full proof can be found in Section~\ref{subsec:transitive-1}.)

Once this is proved, it is not hard (using a similar argument as before) to show that, when $d \gg (kq)^{1/\ell}$, most $\ell$-walks are not red-filled, i.e., $|\cW^G_\ell| \gg |\tcW^G_\ell|$. Even with this, it is still unclear how we can get back a ``clique-like'' subgraph; in the case of $\ell = 2$ above, this implies that a blue-neighborhood induces few red edges, but the argument does not seem to generalize to larger $\ell$. Fortunately, it is still quite easy to find a large subgraph that a non-trivial fraction of pairs of vertices do \emph{not} form red edges; specifically, we will find two subsets $U_1, U_2 \subseteq V$ each of size $d$ such that for at least $1/\ell^2$ fraction of $(u_1, u_2) \in U_1 \times U_2$, $\{u_1, u_2\}$ is not a red edge. To find such sets, observe that, if $|\cW^G_\ell| \geqs 2|\tcW^G_\ell|$, then for a random $(v_1, \dots, v_{\ell + 1}) \in \cW^G_\ell$ the probability that there exists non-consecutive vertex $v_i, v_j$ in the walk that are joined by a red edge is at least $1/2$. Since there are less than $\ell^2/2$ such $i, j$, union bound implies that there must be non-consecutive $i^*, j^*$ such that the probability that $v_{i^*}, v_{j^*}$ are not joined by a red edge is at least $1/\ell^2$. Let us assume without loss of generality that $i^* < j^*$; since they are not consecutive, we have $i^* + 1 < j^*$.

Let us consider $v_{i^* + 1}, v_{j^* - 1}$. By a simple averaging argument, there must be $u^*$ and $w^*$ such that, conditioning on $v_{i^* + 1} = u^*$ and $v_{j^* + 1} = w^*$, the probability that $\{v_{i^*}, v_{j^*}\} \notin E_r$ is at least $1/\ell^2$. However, this conditional probability is exactly equal to fraction of $(u_1, u_2) \in N_b(u^*) \times N_b(w^*)$ that $u_1$ and $u_2$ are not joined by a red edge. Recall again that $N_b(v)$ is used to denote the set of all blue-neighbors of $v$. Thus, $U_1 = N_b(u^*)$ and $U_2 = N_b(w^*)$ are the sets with desired property.

We are still not done yet since we have to use these sets to decode back the global function $g$. This is still not obvious: the guarantee we have for our sets $U_1, U_2$ is rather weak since we only know that at least $1/\ell^2$ of the pairs of vertices from the two sets do not form red edges. This is in contrast to the $\ell = 2$ case where we have a subgraph such that almost all induced edges are \emph{not} red. 

To see how to overcome this barrier, recall that a pair $S_1, S_2$ that does not form a red edge corresponds to $f_{S_1} \overset{\zeta}{\approx} f_{S_2}$. As a thought experiment, let us think of the following scenario: if instead of just $\zeta$-consistency, these pairs satisfy (exact) consistency, then we can consider the collection $\tcF = \{f_{S}\}_{S \in \tU}$ where $\tU = U_1 \cup U_2$. This is a collection of $\Theta(d)$ local functions such that $\agr(\tcF) \geqs \Omega(1/\ell^2)$. Thus, when $d \gg \ell^4$, we are in the regime where $\agr(\tcF) \gg 1/d^{1/2}$, meaning that we can apply our earlier argument (for the $\delta \geqs k^{o(1)}/k^{1/2}$ regime) to recover $g$!

The approach in the previous paragraph of course does not work directly because we only know that $\Omega(1/\ell^2)$ fraction of the pairs $\{S_1, S_2\} \subseteq \tU$ are $\zeta$-consistent, not exactly consistent. However, we can still try to mimic the proof in the regime $\delta \geqs k^{o(1)}/k^{1/2}$ and define a red/blue graph in such a way that such $\zeta$-consistent pairs are now blue edges. Naturally, the red edges will now be the $\zeta'$-inconsistent pairs for some $\zeta' > \zeta$. In other words, we consider the \emph{generalized two-level consistency graph} defined as follows.

\begin{definition}[Generalized Two-Level Consistency Graph]
Given a collection of functions $\cF = \{f_S\}_{S \in \cS}$ and two real numbers $0 \leqs \zeta \leqs \zeta' \leqs 1$, the generalized two-level consistency graph $G^{\cF, \zeta, \zeta'} = (V^{\cF, \zeta, \zeta'}, E^{\cF, \zeta, \zeta'}_r \cup E^{\cF, \zeta, \zeta'}_b)$ is a red/blue graph defined as follows.
\begin{itemize}
\item The vertex set $V^{\cF, \zeta, \zeta'}$ is simply $\cS$.
\item The blue edges are the $\zeta$-consistent pairs $\{S_1, S_2\}$, i.e., $E^{\cF, \zeta, \zeta'}_b = \{\{S_1, S_2\} \in \binom{\cS}{2} \mid f_{S_1} \overset{\zeta}{\approx} f_{S_2}\}$.
\item The red edges are the $\zeta'$-inconsistent pairs $\{S_1, S_2\}$, i.e., $E^{\cF, \zeta, \zeta'}_r = \{\{S_1, S_2\} \in \binom{\cS}{2} \mid f_{S_1} \overset{\zeta'}{\napprox} f_{S_2}\}$.
\end{itemize}
\end{definition}

As its name suggests, the generalized two-level consistency graph is a generalization of the two-level consistency graph from Definition~\ref{def:two-lev}; namely $G^{\cF, 0, \zeta}$ in the more general definition coincides with $G^{\cF, \zeta}$ in the original definition.

Now, it is not hard to show that when $\zeta' \gg \zeta / \alpha$, the graph $G^{\cF, \zeta, \zeta'}$ is again $q$-red/blue transitive for some $q$ that depends only on $\alpha$ and $\zeta$. This means that we can apply our argument from the $\delta \geqs 1/k^{1/2 - o(1)}$ regime on the graph $G^{\tcF,\zeta,\zeta'}$, which yields a subset $U \subseteq \tU$ such that almost all pairs $\{S_1, S_2\} \subseteq U$ are $\zeta'$-consistent. By selecting the parameters appropriately, such an almost non-red subgraph once again gives us the desired global function. This wraps up our proof overview.

{\bf Changes from Previous Version.} The proof presented in this manuscript differs slightly from that in the conference version of this work~\cite{DM18}. Specifically, while both versions follow the same approach to find $U_1, U_2$, they diverge afterwards. In~\cite{DM18}, instead of reapplying the argument of the regime $\delta \gg 1/k^{1/2}$ on $U_1, U_2$ as we do in this version, we resorted to the K\H{o}v\'{a}ri-S\'{o}s-Tur\'{a}n (KST) Theorem~\cite{KST54}, which roughly states that every dense bipartite graph has a biclique (complete bipartite subgraph) of logarithmic size. By applying this theorem to the bipartite graph between $U_1$ and $U_2$ where there is an edge between $u_1 \in U_1$ and $u_2 \in U_2$ if and only if $\{u_1, u_2\}$ is not a red edge, we arrived at subsets $V_1 \subseteq U_1$, $V_2 \subseteq U_2$ of reasonable large size such that for all $(u_1, u_2) \in V_1 \times V_2$, $u_1$ and $u_2$ are not joined by a red edge. Finally, we observed that such a ``non-red biclique'' can also be used to decode a global function, by taking the majority of either $V_1$ or $V_2$ side.

A disadvantage of the proof in~\cite{DM18} is that the agreement theorem there finds a global function that approximately agrees with only $\Omega_{\beta, \alpha}(\log (\delta k))$ of the local functions whereas Theorem~\ref{thm:agr-informal} finds one that approximately agrees with $\delta k^{1 - o_{\beta, \alpha}(1)}$ local functions, which, as discussed earlier, is nearly optimal. This loss in~\cite{DM18} also affects the low order term in the inapproximability ratio: while the ETH-hardness in the current manuscript has inapproximability ratio $k/2^{(\log k)^{1/2 + \rho}}$ for any $\rho > 0$, the ratio in~\cite{DM18} is only $k/2^{(\log k)^{1 - \gamma}}$ for some (small) $\gamma > 0$.

\section{Preliminaries} \label{sec:prelim}



\subsection{Well-Behaved Subsets}

We next define two properties of collections of subsets, which will be needed in our soundness analysis. First, recall that, in our proof overview for the weaker $k^{1/2 - o(1)}$ factor hardness, we need the following to show the red/blue transitivity of the consistency graph: for any $r$ subsets from the collection, their union must contain almost all clauses. Here $r$ is a positive integer that effects the red/blue transitivity parameter. Collections with this property are sometimes called \emph{dispersers}. For walks with larger length, we need a stronger property that any union of $r$ intersections of $\ell$ subsets are large. We call such collections \emph{intersection dispersers}:

\begin{definition}[Intersection Disperser]
For any universe $\cU$, a collection $\cS$ of subsets of $\cU$ is an \emph{$(r, \ell, \eta)$-intersection disperser} if, for any $r$ disjoint subcollections $\cS^{1}, \dots, \cS^{r} \subseteq \cS$ each of size at most $\ell$, we have
\begin{align*}
\left|\bigcup_{i=1}^{r}\left(\bigcap_{S \in \cS^{i}} S\right)\right| \geqs (1 - \eta) |\cU|.
\end{align*}
\end{definition}

Note that in the definition we require $\cS^{1}, \dots, \cS^{r}$ to be disjoint. This is necessary because otherwise we can include a common set $S \in \cS$ into all the subcollections. In this case, the union will be contained in $S$ and hence will not cover almost all the universe.

Another property we need is that any sufficiently large subcollection $\tcS$ of $\cS$ is ``sufficiently uniform''. This is used when we decode a good assignment from an almost non-red subgraph. More specifically, the uniformity condition requires that almost all clauses appear in not too small number of subsets in $\tcS$, as formalized below.

\begin{definition}[Uniformity]
For any universe $\cU$, a collection $\tcS$ of subsets of $\cU$ is $(\gamma, \mu)$-uniform if, for at least $(1 - \mu)$ fraction of elements $u \in \cU$, $u$ appears in at least $\gamma$ fraction of the subsets in $\tcS$. In other words, $\tcS$ is $(\gamma, \mu)$-uniform if and only if $|\{u \in \cU \mid \Pr_{S \in \tcS}[u \in S] \geqs \gamma\}| \geqs (1 - \mu)|\cU|$.
\end{definition}

Using standard concentration bounds, it is not hard to show that, when $m$ is sufficiently large, a collection of random subsets where each element is included in each subset independently with probability $\alpha$ is an $(1/O(\alpha^\ell), \ell, O(1))$-disperser and every subcollection of size $\Omega(1/\alpha)$ is $(\alpha, O(1))$ uniform. The exact parameter dependencies are shown in the lemma below.

\begin{lemma}[Deterministic Construction of Well-Behaved Subsets] \label{lem:well-behaved-set-deterministic}
For any $0 < \alpha, \mu, \eta < 1$ and any $k, \ell \in \N$, let $m_0$ be $1000(\log k \log(1/\mu)/(\alpha\mu^2) + \ell \log(1/\eta)\log k / (\alpha^\ell\eta) + 1/\alpha + 1)$. For any integer $m \geqs m_0$ and $m$-element set $\cU$, there is a collection $\cT$ of subsets of $\cU$ with the following properties.
\begin{itemize}
\item (Size) Every subset in $\cT$ has size at most $2\alpha m$.
\item (Intersection Disperser) $\cT$ is a $(\lceil \ln(2/\eta)/\alpha^\ell\rceil, \ell, \eta)$-disperser.
\item (Uniformity) Any subcollection $\tcT \subseteq \cT$ of size $\lceil 8\ln(2/\mu)/\alpha \rceil$ is $(\alpha/2, \mu)$-uniform.
\end{itemize}
Moreover, such a collection $\cT$ can be deterministically constructed in time $\poly(m)2^{O(m_0 k^2)}$.
\end{lemma}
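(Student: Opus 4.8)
The plan is to obtain the collection by the probabilistic method and then to derandomize, paying attention that the derandomization must run in time depending on $m$ only polynomially.

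First I would fix an $m$-element universe $\cU$ (with $m \geqs m_0$) and consider the random collection $\cT = \{T_1, \dots, T_k\}$ in which every element of $\cU$ is put into every $T_j$ independently with probability $\alpha$; this is described by $mk$ independent bits. Each of the three desired properties fails only if some ``bad event'' occurs: (i) for some $j$, $|T_j| > 2\alpha m$; (ii) for some choice of $r \triangleq \lceil \ln(2/\eta)/\alpha^\ell \rceil$ pairwise-disjoint subcollections $\cS^1, \dots, \cS^r \subseteq \cT$ each of size at most $\ell$, the set $\bigcup_i \bigcap_{T \in \cS^i} T$ misses more than $\eta m$ elements; or (iii) for some subcollection $\tcT \subseteq \cT$ of size $\lceil 8\ln(2/\mu)/\alpha \rceil$, more than $\mu m$ elements lie in fewer than $(\alpha/2)|\tcT|$ of the subsets of $\tcT$. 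Each bad event is a large-deviation event for a sum of $m$ indicators that are independent across the elements of $\cU$, so it is controlled by a Chernoff bound: for (i), each $|T_j|$ concentrates around its mean $\alpha m$; for (ii), the choice of $r$ guarantees that a fixed element is missed with probability at most $\prod_i (1-\alpha^{|\cS^i|}) \leqs (1-\alpha^\ell)^r \leqs \eta/2$, and the misses are independent over elements; for (iii), the choice of the subcollection size guarantees that a fixed element lies in at least half of its expected number $\alpha|\tcT| \geqs 8\ln(2/\mu)$ of the subsets with probability at least $1-\mu/2$, again independently over elements. Union-bounding over the at most $k$, at most $(r+1)^k$, and at most $\binom{k}{\lceil 8\ln(2/\mu)/\alpha\rceil}$ events of the three kinds respectively, the total failure probability is below $1$; this is exactly what the form of $m_0$ — in particular the terms $\log k\log(1/\mu)/(\alpha\mu^2)$ and $\ell\log(1/\eta)\log k/(\alpha^\ell\eta)$, with the constant $1000$ for slack — is engineered to guarantee. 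Hence a good collection exists.

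To make this constructive I would run the method of conditional expectations over the $mk$ defining bits, maintaining the potential $\Phi = \sum_{B} \Pr[B \mid \text{bits fixed so far}]$, the sum ranging over all bad events $B$. Two things need to be checked. First, each conditional probability $\Pr[B \mid \cdot]$ is \emph{exactly} computable in $\poly(m)$ time: since $B$ is a threshold event on a sum of indicators that are independent across the elements of $\cU$, and a single bit affects only one such indicator, the conditional law of that sum is a sum of independent (non-identical) Bernoullis whose parameters each take one of the forms $0$, $1$, or (a bounded product reducing to) $\alpha$, so its tail is computed by a one-dimensional dynamic program. Second, the number of bad events is $k + (r+1)^k + \binom{k}{\lceil 8\ln(2/\mu)/\alpha\rceil}$, which a crude estimate (using that $\log(r+1)$ is forced to be $O(m_0 k)$ by the definition of $m_0$) bounds by $2^{O(m_0 k^2)}$. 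Now process the bits one at a time, always fixing the current bit to the value that does not increase $\Phi$; starting from $\Phi < 1$ we end with $\Phi$ equal to the integer number of bad events that actually occurred, which must therefore be $0$. The running time is $\poly(m)$ (per conditional probability) times $O(mk)$ (for the bits) times the number of bad events, i.e. $\poly(m) \cdot 2^{O(m_0 k^2)}$, as claimed.

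The step I expect to demand the most care is not conceptual but bookkeeping: verifying that the constants baked into $m_0$ really do close all three union bounds simultaneously over the whole range $0 < \alpha, \mu, \eta < 1$ and all $\ell$, the delicate regimes being when one of $\mu, \eta$ is very close to $1$ or $\alpha^\ell$ is extremely small (in the latter case property~(ii) may even become vacuous because $r$ exceeds $k$). The dominant contribution is from property~(ii) — there are up to $(r+1)^k$ disjoint-subcollection tuples, each only at Chernoff-deviation $\Omega(\eta m)$ — which is precisely what forces the $\alpha^{-\ell}$ and $\eta^{-1}$ dependence in $m_0$.
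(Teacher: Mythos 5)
Your overall plan (probabilistic existence via Chernoff plus union bound, then derandomize) is reasonable, but the union bound for the intersection-disperser property does not close with the stated $m_0$, and this is a genuine gap rather than a constants issue. You enumerate the type-(ii) bad events as the at most $(r+1)^k$ ways of assigning the $k$ sets to the $r$ disjoint subcollections, so your argument needs roughly $\eta m / 6 \geq k \ln (r+1)$. But $m_0$ depends on $k$ only logarithmically: its disperser term is $1000\,\ell \log(1/\eta)\log k/(\alpha^\ell \eta)$. Concretely, take $\alpha = \mu = \eta = 1/2$, $\ell = 2$ and $k$ large, and $m = m_0 = 1000(24\log k + 3)$: then $r = 6$, your count gives $7^k$ events, each controlled only by $e^{-\eta m/6} = k^{-O(1)}$, and $7^k \cdot k^{-O(1)} \gg 1$. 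So neither the existence claim nor the starting condition $\Phi < 1$ for your conditional-expectations procedure is established when $m$ is near $m_0$ — exactly the regime the lemma must cover (and exactly where the paper uses the probabilistic statement, namely on blocks of size $\Theta(m_0)$). The fix is to count the disperser events the other way, as the paper does: each of the $r$ subcollections is one of at most $(2k)^{\ell}$ possibilities, so there are at most $(2k)^{\ell r}$ events, and $\ell r \log(2k) = O(\ell \ln(2/\eta)\log(2k)/\alpha^{\ell})$ is what the second term of $m_0$ is built to absorb. Your closing remark attributes the $\alpha^{-\ell}, \eta^{-1}$ dependence of $m_0$ to the $(r+1)^k$ count, but that count would force an additional dependence linear in $k$, which $m_0$ does not have; with $(2k)^{\ell r}$ the bookkeeping works (your caveat that the constants are delicate when $\mu$ or $\eta$ is close to $1$ applies equally to the paper's own calculation and is a separate, minor matter).

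On the derandomization itself, your route genuinely differs from the paper's: the paper partitions $\cU$ into blocks of size between $m_0$ and $2m_0$, exhaustively searches each block for a good collection (there are $2^{O(k m_0)}$ candidates, each checked against $2^{O(k^2)}$ disjoint subcollections), and takes coordinatewise unions, observing that size, disperser and uniformity all compose across blocks; this is purely combinatorial and gives $\poly(m)2^{O(m_0 k^2)}$ time with no arithmetic. Your method of conditional expectations is viable once the event count is repaired (note that when the disperser condition is non-vacuous one has $r \leq k$, so even $(r+1)^k \leq (k+1)^k = 2^{O(m_0 k^2)}$ events, which keeps the runtime in range), but it needs one ingredient you gloss over: the per-element success probabilities for the disperser events are quantities such as $1 - \prod_i (1 - \alpha^{t_i})$ conditioned on fixed bits, not just $0$, $1$ or $\alpha$, so ``exact'' evaluation of the Poisson-binomial tails requires either rational $\alpha$ with controlled bit-length or standard pessimistic estimators in place of exact conditional probabilities. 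These are fixable, but as written the existence step is the part that actually fails.
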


Since all techniques involved in the proof of Lemma~\ref{lem:well-behaved-set-deterministic} are standard, we defer it to Appendix~\ref{app:well-behaved-sets}. Let us turn our focus back to our main technical contribution: the agreement testing theorem.

\section{The Main Agreement Theorem} \label{sec:soundness}

The main goal of this section is to prove the following agreement theorem, which is the formal version of Theorem~\ref{thm:agr-informal} and is also the main technical contribution of this work.

\begin{theorem} \label{thm:agr-main-formal}
For any $0 < \eta, \zeta, \gamma, \mu < 1$ and $r, \ell, k, h, n, d \in \N$ such that $\ell \geqs 2$, let $\cS$ be any collection of $k$ subsets of $[n]$ such that $\cS$ is $(r, \ell, \zeta)$-intersection disperser and every subcollection $\tcS \subseteq \cS$ of size $h$ is $(\gamma, \mu)$-uniform, and let $\cF = \{f_S\}_{S \in \cS}$ be any collection of functions. If $\delta \triangleq \agr(\cF) \geqs \frac{10 + 64(r\ell)^2 k^{1/\ell}}{k}$, then there exists a subcollection $\cS' \subseteq \cS$ of size at least $\frac{\delta k}{256 \ell^2}$ and a function $g: [n] \to \{0, 1\}$ such that $g \overset{\beta}{\approx} f_S$ for all $S \in \cS'$ where $$ \beta = 2\sqrt{\frac{65536 h \ell^6}{\delta k} + \mu + 2 \zeta / \gamma}.$$
\end{theorem}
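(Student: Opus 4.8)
The plan is to follow the three-step blueprint laid out in the proof overview, but carried out in a way that handles the hardest regime $\delta \approx k^{o(1)}/k$ directly. First, I would establish a \emph{generalized transitivity} lemma: if $\cS$ is an $(r,\ell,\zeta)$-intersection disperser, then for appropriate parameters the generalized two-level consistency graph $G^{\cF,\zeta_1,\zeta_2}$ (built from $\cF$, with blue edges $= \zeta_1$-consistent pairs and red edges $= \zeta_2$-inconsistent pairs) is $(q,\ell)$-red/blue-transitive with $q = O((r\ell)^2)$ or so. The point is that a red-filled $\ell$-walk from $u$ to $w$ forces $f_u$ and $f_w$ to be $\zeta_2$-inconsistent while being ``explained'' by a short chain of blue (nearly consistent) edges; if there were too many such walks, one could extract $r$ disjoint subcollections of size $\le \ell$ whose intersections union to nearly all of $[n]$, forcing $f_u\approx f_w$ on almost all coordinates, contradicting $\zeta_2$-inconsistency. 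I would also need the disperser to give, for the special case $\ell=2$ (i.e.\ ordinary red/blue transitivity $q = O(r)$), the base case used inside the recursion.

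Second, I would do the walk-counting argument. Set $d$ roughly to be the average blue-degree of $G^{\cF,\zeta,\zeta'}$, which is $\Theta(\delta k)$ since $\agr(\cF)=\delta$. The number of $\ell$-blue-walks is $\ge k d^\ell / (\text{const})$, while $(q,\ell)$-transitivity bounds the number of red-filled $\ell$-walks by $k^2 q$. The hypothesis $\delta k \ge 10 + 64(r\ell)^2 k^{1/\ell}$ is exactly what makes $kd^\ell \ge 2 k^2 q$, so at least half of all $\ell$-walks are not red-filled. Then, by the union-bound/averaging argument from the overview, there are non-consecutive positions $i^\ast<j^\ast$ and vertices $u^\ast,w^\ast$ such that, conditioned on $v_{i^\ast+1}=u^\ast, v_{j^\ast-1}=w^\ast$, the pair $(v_{i^\ast},v_{j^\ast})$ fails to be a red edge with probability $\ge 1/\ell^2$; i.e.\ the sets $U_1=N_b(u^\ast)$, $U_2=N_b(w^\ast)$ (each of size $\ge$ the minimum blue-degree, which I would arrange to be $\Omega(\delta k/\ell^2)$ by first deleting low-degree vertices) satisfy: at least a $1/\ell^2$ fraction of pairs in $U_1\times U_2$ are \emph{not} $\zeta'$-inconsistent, hence are $\zeta'$-consistent.

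Third, I would recurse on $\tU = U_1\cup U_2$. Consider $\tcF=\{f_S\}_{S\in\tU}$: this is a collection of $\Theta(\delta k/\ell^2)$ functions with $\agr_{\zeta'}(\tcF) \ge \Omega(1/\ell^2)$, and critically $1/\ell^2$ is \emph{constant-order} in the size of $\tU$, so we are now in the easy regime $\agr \gg 1/|\tU|^{1/2}$. Here I apply the $\ell=2$ version of the machinery to the graph $G^{\tcF,\zeta',\zeta''}$ with $\zeta''\gg\zeta'/\alpha$: by $O(r)$-red/blue transitivity and the rbb-triangle count, some blue-neighborhood $U\subseteq\tU$ of size $\Omega(|\tU|/\ell^2)$ has only an $O(rk/|\tU|^2)$ fraction of internal red edges, i.e.\ all but a tiny fraction of pairs in $U$ are $\zeta''$-consistent. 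Finally I invoke the majority-decoding lemma (Section~\ref{sec:maj-decode}) together with $(\gamma,\mu)$-uniformity of the size-$h$ subcollections: define $g(x)=\mathrm{maj}\{f_S(x): S\in\cS',\,x\in S\}$ for $\cS'=U$; uniformity ensures all but a $\mu$ fraction of coordinates are covered by an $\ge\gamma$ fraction of sets in $\cS'$, and for those coordinates majority-decoding from an almost-$\zeta''$-consistent family gives $\disa(g,f_S)\le O(\sqrt{\text{red-fraction}} + \zeta''/\gamma)|[n]|$ for most $S\in\cS'$. Chasing the constants (choosing $\zeta$-ladder $\zeta < \zeta' < \zeta''$ geometrically with ratio $\sim 1/\alpha$, all bounded by the $\zeta$ of the intersection-disperser hypothesis, and tracking sizes through the two neighborhood-extractions which each cost a $\Theta(\ell^2)$ factor, i.e.\ $|\cS'|\ge \delta k/(256\ell^2)$) yields the stated $\beta = 2\sqrt{65536\, h\ell^6/(\delta k) + \mu + 2\zeta/\gamma}$.

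The main obstacle I anticipate is the generalized transitivity lemma for $\ell$-walks — making the reduction from ``many red-filled $\ell$-walks between $u,w$'' to ``a violation of the intersection-disperser property'' work cleanly. The subtlety is that the $r$ subcollections extracted must be genuinely \emph{disjoint} (otherwise a shared set trivially bounds the union), so one needs to greedily carve out disjoint size-$\le\ell$ pieces from a large family of walks while keeping enough of them, and then argue that consistency along each blue path plus $\zeta_1$-consistency of each edge propagates to near-consistency of $f_u,f_w$ on the union of the intersections. A secondary nuisance is bookkeeping the three-level ladder $\zeta<\zeta'<\zeta''$ and the two successive $\ell^2$-factor losses so that the final size bound and the final $\beta$ come out as stated; this is routine but constant-sensitive, so I would set up the parameters symbolically first and only plug in numbers at the very end.
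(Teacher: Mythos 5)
Your architecture is the same as the paper's (disperser-based transitivity for red-filled $\ell$-walks, walk-counting to extract two blue-neighborhoods $U_1,U_2$ with a $1/\ell^2$ fraction of non-red pairs, a second $\ell_0=2$ round on their union using the uniformity-based transitivity, then majority decoding plus a Markov step to pass from an expectation bound to ``$\beta$-consistent for all $S\in\cS'$''). But there is a genuine gap in your first stage. You propose to prove the disperser-based $(q,\ell)$-transitivity for the \emph{generalized} graph $G^{\cF,\zeta_1,\zeta_2}$ whose blue edges are only $\zeta_1$-consistent pairs, with a threshold ladder of ratio $\sim 1/\alpha$. The propagation step you describe (``consistency along each blue path \dots\ propagates to near-consistency of $f_u,f_w$'') is lossless only when blue means \emph{exact} consistency. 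With $\zeta_1$-consistent blue edges, the disagreement of $f_u$ and $f_w$ inside the union of the $r$ path-intersections can only be bounded by summing the edge-disagreements over all $r$ disjoint walks of length $\ell$, i.e.\ by roughly $r\ell\,\zeta_1 n$; so to derive a contradiction with a red edge you must take $\zeta_2\gtrsim(r\ell+1)\zeta_1$, not $\zeta_1/\alpha$ (note also that $\alpha$ is not even a parameter of the statement; the only available quantity is $\gamma$). Carried through the second stage, the final error becomes $\sqrt{\cdots+\mu+2(r\ell+1)\zeta/\gamma}$, which does not give the claimed $\beta=2\sqrt{65536h\ell^6/(\delta k)+\mu+2\zeta/\gamma}$ and is vacuous in the intended parameter regime, where $r\ell\zeta\gg 1$. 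The fix is what the paper does: at the first level take blue edges to be \emph{exactly} consistent pairs (nothing is lost, since $\delta=\agr(\cF)$ already counts exact agreements), so that agreement propagates with no error along blue paths and the disperser directly forces $\zeta$-consistency of the endpoints, contradicting $\zeta$-inconsistency; only the second-level graph uses approximate blue edges, and there transitivity comes from uniformity, which is exactly where the single $\mu+2\zeta/\gamma$ term in $\beta$ originates.

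A secondary issue: your claimed transitivity bound $q=O((r\ell)^2)$ is too strong; what the disperser argument gives (and what the hypothesis $\delta k\geqs 10+64(r\ell)^2k^{1/\ell}$ is calibrated against, since $d^\ell\gtrsim qk$ needs only $q\leqs(r\ell)^{2\ell}$-type bounds) is $q=(r\ell)^{2(\ell-1)}$. Your greedy ``carve out disjoint walks'' idea is the right mechanism, but by itself it does not bound $q$: you also need an induction on the walk length showing that any fixed internal vertex lies on few red-filled walks between a given pair (using the bounds for shorter walks), so that each greedy selection eliminates only about $B_z/r$ walks and more than $q$ walks would force $r$ disjoint ones. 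With the exact-consistency first level and that inductive count in place, the rest of your outline (the $1/\ell^2$ extraction, the $\ell_0=2$ round with $h$-transitivity from $(\gamma,\mu)$-uniformity, majority decoding, and the final Markov step accounting for the factors $2$ in both $\beta$ and the size bound $\delta k/(256\ell^2)$) matches the paper's proof.
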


While the parameters of the theorem can be confusing, when each subset in $\cS$ is a random $\alpha n$-size subset of $[n]$, the parameters we are interested in are as follows: $\mu$ and $\eta$ both go to $0$ as $n$ goes to infinity, $h$ and $\gamma$ depend only on $\alpha$, and, $r$ is $O(1/\alpha^\ell)$. Since we want the requirement on soundness as weak as possible, we want to minimize $(r\ell)^2k^{1/\ell} = 2^{O_\alpha(\ell + (\log k) / \ell)}$. Hence, our best choice is to let $\ell = \sqrt{\log k}$, which indeed yields the $k/2^{(\log k)^{1/2 + \rho}}$ ratio inapproximability for 2-CSPs.

To prove this theorem, we follow the general outline as stated in the proof overview section. In particular, the proof contains five main steps, as elaborated below.
\begin{enumerate}[(1)]
\item First, we will show that when $\cS$ is an intersection disperser with appropriate parameters, then the two-level consistency graph $G^{\cF, \zeta}$ satisfies $(q, \ell)$-red/blue transitivity for certain $q, \ell$. \label{step:transitivity-1}
\item Second, we argue that, for any red/blue transitive graphs that contains sufficiently many blue edges, we can find a large subset $\tU$ of vertices such that a reasonably large fraction of pairs $\{S_1, S_2\} \subseteq \tU$ are non-red. This is done by counting red-filled $\ell$-walks for an appropriate $\ell$. \label{step:subgraph-1}
\item We then focus on $\tcF = \{f_S\}_{\tU}$ and show, using a uniformity condition of $\cS$, that the generalized two-level consistency graph $G^{\tcF, \zeta, \zeta'}$ is red/blue transitive with certain parameters. \label{step:transitivity-2}
\item Next, counting rbb triangles reveals a large ``almost non-red subgraph'' in the graph $G^{\tcF, \zeta, \zeta'}$. \label{step:subgraph-2}
\item Finally, we decode a global function from this almost non-red subgraph. \label{step:decode}
\end{enumerate}

This section is organized as follows. In Subsection~\ref{sec:transitive}, we show transitivity properties of the two-level and generalized two-level consistency graphs, i.e., Steps~\ref{step:transitivity-1} and~\ref{step:transitivity-2}. Subsection~\ref{sec:subgraph} contains a structural lemma regarding an existence of a large subgraph with certain non-red density in red/blue transitive graphs; this lemma is at the heart of Steps~\ref{step:subgraph-1} and~\ref{step:subgraph-2}. Next, in Subsection~\ref{sec:maj-decode}, we prove Step~\ref{step:decode}. Finally, in Subsection~\ref{sec:agr-proof}, we put these parts together and prove Theorem~\ref{thm:agr-main-formal}.

\subsection{Red/Blue-Transitivity of (Generalized) Two-Level Consistency Graph} \label{sec:transitive}


\subsubsection{Red/Blue-Transitivity from Intersection Disperser}
\label{subsec:transitive-1}

The first step in our proof is to show that the two-level consistency graph $G^{\cF, \zeta}$ is red/blue-transitive, assuming that $\cS$ is an intersection disperser. Specifically, our main lemma is the following:

\begin{lemma} \label{lem:transitivity}
If $\cS$ is an $(r, \ell, \zeta)$-intersection disperser, then, for any $\cF = \{f_S\}_{S \in \cS}$, $G^{\cF, \zeta}$ is $((r\ell)^{2(\ell - 1)}, \ell)$-red/blue-transitive.
\end{lemma}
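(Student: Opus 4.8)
The goal is to bound $|\tcW^{G^{\cF,\zeta}}_\ell(u,v)|$ by $(r\ell)^{2(\ell-1)}$ whenever $\{u,v\}$ is a red edge, i.e.\ whenever $f_u \overset{\zeta}{\napprox} f_v$. Fix such a red edge $\{u,v\}$ and suppose, for contradiction, that there were more than $(r\ell)^{2(\ell-1)}$ red-filled $\ell$-walks from $u$ to $v$. Write each such walk as $(v_1=u,v_2,\dots,v_\ell,v_{\ell+1}=v)$, so the intermediate vertices $(v_2,\dots,v_\ell)$ form a tuple in $\cS^{\ell-1}$. The first step I would take is to extract, from this large family of walks, a large \emph{sunflower-free} or rather \emph{pairwise-far} sub-family: using a greedy/pigeonhole argument on each of the $\ell-1$ coordinates, if there are more than $(r\ell)^{2(\ell-1)} = ((r\ell)^2)^{\ell-1}$ walks then I can find $r\ell$ walks that are ``spread out'' — more precisely, I want a collection of $r\ell$ red-filled walks such that after grouping I get $r$ walks that are mutually vertex-disjoint on their interiors (the factor $\ell$ absorbing the fact that each of the $\ell$ positions in one walk might coincide with a position in another). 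This is the combinatorial core and the place I expect to spend the most care: turning ``$(r\ell)^{2(\ell-1)}$ walks exist'' into ``$r$ interior-disjoint red-filled walks exist.''

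**Using the disperser.** Once I have $r$ red-filled $\ell$-walks $W^1,\dots,W^r$ that are pairwise disjoint on interiors, consider for each $i$ the subcollection $\cS^i = \{v^i_2,\dots,v^i_\ell\} \cup \{u\}$ or perhaps just the interior plus one endpoint — I need to be slightly careful which vertices to throw into $\cS^i$ so that (a) the $\cS^i$ are genuinely disjoint and (b) each has size at most $\ell$. Since $\cS$ is an $(r,\ell,\zeta)$-intersection disperser, $\left|\bigcup_{i=1}^r \bigcap_{S\in\cS^i} S\right| \geqs (1-\zeta)n$. Now here is the punchline: on a red-filled walk, $u$ is a blue-neighbor of $v_2$ and red-filled-ness forces $u$ red-connected to all of $v_3,\dots,v_{\ell+1}$; I want to argue that $f_u$ and $f_v$ must actually agree on every element of $\bigcap_{S\in\cS^i} S$ — because within the intersection of all sets along a walk, consecutive blue edges propagate equality of function values (blue = consistent, i.e.\ $0$-consistent, so $f_{v_j}$ and $f_{v_{j+1}}$ agree \emph{exactly} on $v_j\cap v_{j+1}$, hence in particular on the total intersection $\bigcap_{S\in\cS^i}S$). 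Thus $f_u = f_v$ on $\bigcap_{S\in\cS^i}S$ for each $i$, hence on the union $\bigcup_i \bigcap_{S\in\cS^i}S$, which has size $\geqs(1-\zeta)n$; so $\disa(f_u,f_v) \leqs \zeta n$, contradicting that $\{u,v\}$ is a red edge. This forces $|\tcW^{G^{\cF,\zeta}}_\ell(u,v)| \leqs (r\ell)^{2(\ell-1)}$.

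**The main obstacle and how I would handle it.** The delicate point is genuinely the extraction of $r$ interior-disjoint red-filled walks from $(r\ell)^{2(\ell-1)}+1$ walks. The clean way: build the disjoint family greedily. Having chosen walks $W^1,\dots,W^{j}$ with $j<r$, the set of ``forbidden'' vertices (interiors of chosen walks together with appropriate endpoint bookkeeping) has size at most $r\ell$; I then need to find among the remaining walks one whose interior avoids these $\leqs r\ell$ forbidden vertices. A walk is killed by a forbidden vertex $x$ only if $x$ appears in one of its $\ell-1$ interior positions; for a fixed position and fixed $x$, I claim at most $(r\ell)^{2(\ell-1)}/(r\ell) = (r\ell)^{2\ell-3}$ walks can have $x$ there — this needs the red-filled-ness to bound the number of completions, which in turn may itself recurse on the transitivity-type counting, so I may want to set this up as an induction on $\ell$ rather than a one-shot pigeonhole. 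Concretely I would prove Lemma~\ref{lem:transitivity} by induction on $\ell$: the base case $\ell=2$ recovers the rbb-triangle bound (at most $(r\cdot 2)^{2}$, consistent with the earlier $O(\ln(1/\zeta)/\alpha)$-type argument but now phrased via the disperser parameter $r$), and the inductive step peels off one vertex of the walk, applies the induction hypothesis to count completions through each choice of $v_2$, and multiplies. I also need to double-check the arithmetic that the disjointness budget $r\ell$ and the exponent $2(\ell-1)$ line up — i.e.\ that the greedy procedure really survives $r$ rounds — and that I only ever form subcollections $\cS^i$ of size $\leqs\ell$ so the intersection disperser hypothesis applies verbatim. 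The equality-propagation step (blue edges give exact agreement on pairwise intersections, hence on the full chain intersection) is routine once the disjoint walks are in hand.
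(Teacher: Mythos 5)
Your proposal follows essentially the same route as the paper: a claim that the intersection disperser forbids $r$ interior-disjoint red-filled walks between red-joined endpoints (proved by propagating exact agreement along the blue chains and covering $(1-\zeta)n$ elements with the union of the chain intersections), combined with an induction on walk length that bounds the number of red-filled walks passing through any fixed intermediate vertex, followed by a greedy extraction of disjoint walks. The one detail to pin down is that when invoking the disperser the subcollections $\cS^i$ must be the walk interiors only (not ``interior plus $u$'' or ``plus an endpoint'', which would violate the required disjointness); this costs nothing, since $\disa(f_u,f_v)$ is measured only on $u\cap v$, so agreement on $(u\cap v)\cap\bigcap_{S\in\cS^i}S$ already yields the contradiction.
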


We note here that both in Lemma~\ref{lem:transitivity} and Claim~\ref{claim:disjoint-transitivity} below, the transitivity property holds not only for $\ell$-walks as specified in the statements, but also for $(\ell + 1)$-walks. However, since the latter does not yield any improvement to our main results, we work with only $\ell$-walks, which makes the calculations cleaner.

In other words, we would like to show that, for every $S_1, S_2 \in \cS$ that are joined by a red edge in $G^{\cF, \zeta}$, there are at most $(r\ell)^{2(\ell - 1)}$ red-filled $\ell$-walks from $S_1$ to $S_2$. The intersection disperser does not immediately imply such a bound, due to the requirement in the definition that the subcollections are disjoint. Rather, it only directly implies a bound on number of \emph{disjoint} $\ell$-walks from $S_1$ to $S_2$, where two $\ell$ walks from $S_1$ to $S_2$, $(T_1 = S_1, \dots, T_{\ell + 1} = S_2), (T'_1 = S_1, \dots, T'_{\ell + 1} = S_2) \in \cW^{G^{\cF, \zeta}}_\ell(S_1, S_2)$, are said to be \emph{disjoint} if they do not share any vertex except the starting and ending vertices, i.e., $\{T_2, \dots, T_{\ell}\} \cap \{T'_2, \dots, T'_{\ell}\} = \emptyset$. Note that multiple walks sharing starting and ending vertices are said to be disjoint if they are mutually disjoint. The following claim follows almost immediately from definition of intersection dispersers:

\begin{claim} \label{claim:disjoint-transitivity}
If $\cS$ is an $(r, \ell, \zeta)$-intersection disperser, then, for any $\cF = \{f_S\}_{S \in \cS}$, any integer $2 \leqs p \leqs \ell$ and any $\{S_1, S_2\} \in E^{\cF, \zeta}_r$, there are less than $r$ disjoint $p$-walks from $S_1$ to $S_2$ in $G^{\cF, \zeta}$.
\end{claim}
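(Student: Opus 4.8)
The plan is to argue by contradiction: suppose that $\{S_1, S_2\} \in E^{\cF,\zeta}_r$, yet there exist $r$ pairwise disjoint $p$-walks from $S_1$ to $S_2$ in $G^{\cF,\zeta}$, say the $i$-th walk is $(S_1 = T_1^i, T_2^i, \dots, T_{p+1}^i = S_2)$ for $i \in [r]$. For each $i$, form the subcollection $\cS^i = \{T_1^i, T_2^i, \dots, T_{p+1}^i\} \subseteq \cS$; since the walks are disjoint (sharing only $S_1, S_2$), after discarding the two common endpoints one obtains genuinely disjoint subcollections, and each has size at most $p+1 \leqs \ell + 1$. (A small bookkeeping point: the definition of intersection disperser asks for subcollections of size at most $\ell$, whereas a $p$-walk touches $p+1 \leqs \ell+1$ vertices; I would handle this by noting that in our application $p \leqs \ell - 1$ suffices, or alternatively by the remark in the paper that the transitivity statements also hold for $(\ell+1)$-walks, or simply by absorbing the endpoint $S_2$ into a neighboring vertex's constraint — the cleanest route is to just use $r$ disjoint subcollections $\cS^i := \{T_2^i, \dots, T_p^i\} \cup \{S_1\}$ of size at most $p \leqs \ell$, which still control all the relevant disagreements.)

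The key observation is that along each blue walk, consecutive functions are exactly consistent (blue edges are the $0$-consistent pairs in $G^{\cF,\zeta}$), so $f_{S_1}$ and $f_{S_2}$ must agree on every element of $\bigcap_{S \in \cS^i} S$: indeed, for any $x \in T_1^i \cap T_2^i \cap \cdots \cap T_{p+1}^i$ we have $f_{S_1}(x) = f_{T_1^i}(x) = f_{T_2^i}(x) = \cdots = f_{T_{p+1}^i}(x) = f_{S_2}(x)$ by chaining the consecutive exact agreements. Hence $f_{S_1}$ and $f_{S_2}$ agree on $\bigcup_{i=1}^r \left(\bigcap_{S \in \cS^i} S\right)$. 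Now invoke the intersection disperser property: since the $\cS^i$ are $r$ disjoint subcollections each of size at most $\ell$, this union has size at least $(1 - \zeta)n$. Therefore $\disa(f_{S_1}, f_{S_2}) \leqs \zeta n$, i.e. $f_{S_1} \overset{\zeta}{\approx} f_{S_2}$, contradicting $\{S_1, S_2\} \in E^{\cF,\zeta}_r$. This establishes that there can be at most $r - 1 < r$ disjoint $p$-walks.

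I do not expect any real obstacle here — the statement is essentially a restatement of the definition of intersection disperser, with the only subtlety being the off-by-one in subcollection size (walk vertices vs. the parameter $\ell$), which is purely cosmetic and can be dispatched as indicated above. The genuine work lies in the next step (Lemma~\ref{lem:transitivity}), where one must bootstrap from a bound on the number of \emph{disjoint} $\ell$-walks to a bound on the number of \emph{all} red-filled $\ell$-walks — that is where a combinatorial argument (iteratively extracting maximal disjoint families and using the red-filled structure to bound how the remaining walks can overlap them) will be required, yielding the $(r\ell)^{2(\ell-1)}$ factor.
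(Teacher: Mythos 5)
Your overall strategy is exactly the paper's: chain the exact agreements along the blue edges of each walk, apply the intersection-disperser property to a union of intersections, and conclude $\disa(f_{S_1}, f_{S_2}) \leqs \zeta n$, contradicting $\{S_1, S_2\} \in E^{\cF, \zeta}_r$. However, the way you resolve the ``off-by-one'' — which is the only point in this claim requiring care — does not work. Your designated fix, taking $\cS^i = \{T^i_2, \dots, T^i_p\} \cup \{S_1\}$, puts $S_1$ into every subcollection, so the $\cS^i$ are no longer disjoint, and the definition of an $(r, \ell, \zeta)$-intersection disperser applies only to disjoint subcollections. This is not a formality: as the remark following the definition explains, once a common set lies in all subcollections the union of intersections is contained in that set, here $\bigcup_{i=1}^r \bigcap_{S \in \cS^i} S \subseteq S_1$, which has only $O(\alpha n)$ elements, so no bound of the form $(1 - \zeta)n$ can possibly hold for these subcollections. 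Your fallback options do not rescue this either: restricting to $p \leqs \ell - 1$ is insufficient because the induction in Lemma~\ref{lem:transitivity} invokes the claim at $p = z$ for $z$ up to $\ell$, and ``absorbing $S_2$ into a neighboring vertex's constraint'' is not an argument.

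The correct resolution — and what the paper does — is to drop \emph{both} endpoints: take $\cS^i = \{T^i_2, \dots, T^i_p\}$, of size at most $p - 1 \leqs \ell$ and nonempty since $p \geqs 2$; these are genuinely disjoint across the $r$ walks by the very definition of disjoint walks, so the disperser gives $|T^*| \geqs (1 - \zeta)n$ for $T^* = \bigcup_{i=1}^r \bigcap_{j=2}^{p} T^i_j$. The endpoints are not needed in the subcollections because a disagreement between $f_{S_1}$ and $f_{S_2}$ can only occur at a point of $S_1 \cap S_2$, and any such point lying in $T^*$ lies in every vertex of the corresponding walk, so your chain of blue-edge agreements applies to it and it is a point of agreement. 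Hence all disagreements lie outside $T^*$, giving $\disa(f_{S_1}, f_{S_2}) \leqs n - |T^*| \leqs \zeta n$, the desired contradiction. With this one change your argument is complete and coincides with the paper's proof.
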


\begin{proof}
Suppose for the sake of contradiction that $\cS$ is an $(r, \ell, \zeta)$-intersection disperser but there exist $\cF = \{f_S\}_{S \in \cS}, 2 \leqs p \leqs \ell$ and $\{S_1, S_2\} \in E^{\cF, \zeta}_r$ such that there are at least $r$ disjoint $p$-walks from $S_1$ to $S_2$. Let these walks be $(T_{1, 1} =S_1, T_{1, 2}, \dots, T_{1, p}, T_{1, p + 1} = S_2), \dots, (T_{r, 1} =S_1, T_{r, 2}, \dots, T_{r, p}, T_{r, p + 1} = S_2) \in \cW^{G^{\cF, \zeta}}_p(S_1, S_2)$.

For each $i \in [r]$, consider any $x \in \bigcap_{j = 1}^{p + 1} T_{i, j}$. Apriori this intersection may be empty but since $\cS$ is an intersection disperser this usually does not occur. Since $\{T_{i, j}, T_{i, j + 1}\} \in E^{\cF, \zeta}_b$ for every $j \in [p]$, we have
\begin{align*}
f_{S_1}(x) = f_{T_{i, 1}}(x) = f_{T_{i, 2}}(x) = \cdots = f_{T_{i, p}}(x) = f_{T_{i, p + 1}}(x) = f_{S_2}(x).
\end{align*}

Hence, for every $x \in \bigcup_{i = 1}^r \left(\bigcap_{j = 1}^{p + 1} T_{q, j}\right)$, $f_{S_1}(x) = f_{S_2}(x)$. Let $T^*$ denote $\bigcup_{i = 1}^r \left(\bigcap_{j = 2}^{p} T_{q, j}\right)$. Since $T_{i, 1} = S_1$ and $T_{i, p + 1} = S_2$ for all $i \in [r]$, we have
\begin{align*}
\bigcup_{i = 1}^r \left(\bigcap_{j = 1}^{p + 1} T_{q, j}\right) = (S_1 \cap S_2) \cap T^*.
\end{align*}
In other words, $f_{S_1}$ and $f_{S_2}$ can only disagree on variables outside of $T^*$. However, since $\cS$ is an $(r, \ell, \zeta)$-intersection disperser, we have $|T^*| \geqs (1 - \zeta)n$. Hence, $\disa(S_1, S_2) \leqs \zeta n$, which contradicts with $\{S_1, S_2\} \in E^{\cF, \zeta}_r$.
\end{proof}

Since all 2-walks from $S_1$ to $S_2$ are disjoint, the above claim immediately gives a bound on the number of red-filled 2-walks from $S_1$ to $S_2$. To bound the number of red-filled walks of larger lengths, we will use induction on the length of the walks. Suppose that we have bounded the number of red-filled $i$-walks sharing starting and ending vertices for $i \leqs z - 1$. The key idea in the proof is that we can use this inductive hypothesis to show that, for any $S_1, S_2, S \in \cS$, few $z$-walks from $S_1$ to $S_2$ contain a given $S$. Here we say that a $z$-walk $(T_1 = S_1, \dots, T_z = S_2)$ from $S_1$ to $S_2$ \emph{contains} $S$ if $S \in \{T_2, \dots, T_z\}$. This implies that for a given $z$-walk from $S_1$ to $S_2$ there are only few walks that are not disjoint from it. This allows us to show that, if there are too many $z$-walks, then there must also be many disjoint $z$-walks as well, which would violate Claim~\ref{claim:disjoint-transitivity}. A formal proof of Lemma~\ref{lem:transitivity} based on this intuition is given below.

\begin{proof}[Proof of Lemma~\ref{lem:transitivity}]
For every integer $i$ such that $2 \leqs i \leqs \ell$, let $P(i)$ denote the following statement: for every $S_1, S_2 \in \cS$, $|\tcW^{G^{\cF, \zeta}}_i(S_1, S_2)| \leqs (ri)^{2(i - 1)}$. For convenience, let $B_i = (ri)^{2(i - 1)}$ for every $2 \leqs i \leqs \ell$.

{\bf Base Case.} Since every different $2$-walks from $S_1$ to $S_2$ are disjoint, Claim~\ref{claim:disjoint-transitivity} immediately implies that the number of $2$-walks from $S_1$ to $S_2$ is at most $r \leqs B_2$.

{\bf Inductive Step.}

Suppose that, for some integer $z$ such that $3 \leqs z \leqs \ell$, $P(3), \dots, P(z - 1)$ are true. We will show that $P(z)$ is true. To do so, let us first prove that, for any fixed starting and ending vertices, any vertex cannot appears in too many red-filled $z$-walks, as stated in the following claim.

\begin{claim}
For all $S_1, S_2, S \in \cS$, the number of red-filled $z$-walks from $S_1$ to $S_2$ containing $S$ in $G^{\cF, \zeta}$ is at most $B_z/(zr)$.
\end{claim}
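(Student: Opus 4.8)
The goal is to bound the number of red-filled $z$-walks from $S_1$ to $S_2$ that pass through a fixed vertex $S$. The plan is to split such a walk at the occurrence of $S$: if $S$ first appears at position $j$ (with $2 \leqs j \leqs z$), then the walk decomposes into a red-filled $(j-1)$-walk from $S_1$ to $S$ followed by a red-filled $(z-j+1)$-walk from $S$ to $S_2$. (Here one must be a little careful about the red-filled condition straddling the split point, but a red-filled $z$-walk restricted to any contiguous subwalk is still red-filled, since all the required non-consecutive pairs are a subset of the original ones.) Both pieces have length strictly between $1$ and $z$, so we can apply the inductive hypotheses $P(j-1)$ and $P(z-j+1)$, giving at most $B_{j-1}$ choices for the first piece and $B_{z-j+1}$ for the second.

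Summing over the at most $z$ possible positions $j$ of $S$, the number of red-filled $z$-walks from $S_1$ to $S_2$ through $S$ is at most $\sum_{j=2}^{z} B_{j-1} \cdot B_{z-j+1}$. The main obstacle is purely the bookkeeping: verifying that $\sum_{j=2}^{z} B_{j-1} B_{z-j+1} \leqs B_z/(zr)$ with $B_i = (ri)^{2(i-1)}$. Each term $B_{j-1} B_{z-j+1} = (r(j-1))^{2(j-2)} (r(z-j+1))^{2(z-j)}$; since $j-1, z-j+1 \leqs z-1$ and the exponents $2(j-2) + 2(z-j) = 2(z-2)$ sum up, each term is at most $(r(z-1))^{2(z-2)} = B_{z-1}$. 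Hence the sum is at most $z \cdot B_{z-1} = z (r(z-1))^{2(z-2)}$, and one checks this is at most $B_z/(zr) = (rz)^{2(z-1)}/(zr) = r^{2z-3} z^{2z-3}$ because $r^{2z-3} z^{2z-3} / (z r^{2z-3} z^{2(z-2)}) = z^{2z-3-1-2z+4} = z^{0} = 1$, i.e. the inequality holds with room to spare once $r \geqs 1$ (the $r$ exponents are $2z-4 \leqs 2z-3$). So the bookkeeping works out cleanly, modulo tracking the exact constants.

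With the claim in hand, the inductive step $P(z)$ would be completed as follows. Suppose for contradiction that $|\tcW^{G^{\cF,\zeta}}_z(S_1,S_2)| > B_z$ for some red edge $\{S_1,S_2\}$. Greedily extract a maximal collection of pairwise disjoint red-filled $z$-walks from $S_1$ to $S_2$: pick any such walk, remove all walks sharing an interior vertex with it, and repeat. By the claim, each chosen walk has at most $z$ interior vertices, each of which lies in at most $B_z/(zr)$ red-filled $z$-walks, so removing it kills at most $z \cdot B_z/(zr) = B_z/r$ walks. Since we start with more than $B_z$ walks and each round removes at most $B_z/r$, after fewer than $r$ rounds the pool is not yet exhausted — so we obtain $r$ pairwise disjoint red-filled $z$-walks from $S_1$ to $S_2$. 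These are in particular $r$ disjoint $z$-walks from $S_1$ to $S_2$, contradicting Claim~\ref{claim:disjoint-transitivity} (applied with $p = z \leqs \ell$). Therefore $P(z)$ holds, and by induction $P(\ell)$ holds, which is exactly the statement that $G^{\cF,\zeta}$ is $((r\ell)^{2(\ell-1)}, \ell)$-red/blue-transitive.

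The only genuinely delicate point, beyond the arithmetic, is making sure the "disjoint" notion used in the greedy argument lines up precisely with the hypothesis of Claim~\ref{claim:disjoint-transitivity} (disjointness on interior vertices, with the two walks allowed to revisit $S_1$ or $S_2$ internally) and that a contiguous subwalk of a red-filled walk is still red-filled; both are immediate from the definitions but worth stating explicitly so the induction is airtight.
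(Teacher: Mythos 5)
Your proof is correct and takes essentially the same route as the paper: split each red-filled $z$-walk at the position $j$ where it visits $S$, apply the inductive bounds to the two red-filled subwalks, and sum $B_{j-1}B_{z+1-j}$ over $j$, with the same arithmetic check that the sum is at most $B_z/(zr)$. One small nitpick: for $j=2$ or $j=z$ one of the two pieces is a $1$-walk, so ``$P(j-1)$ and $P(z-j+1)$'' are not both literally inductive hypotheses; the paper treats these two positions as separate cases (equivalently, sets $B_1=1$), which is also what your summation implicitly does and is harmless since a $1$-walk with both endpoints fixed is unique.
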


\begin{subproof}
First, observe that the number of red-filled $z$-walks from $S_1$ to $S_2$ containing $S$ is at most the sum over all positions $2 \leqs j \leqs z$ of the number of $z$-walks from $S_1$ to $S_2$ such that the $j$-th vertex in the walk is $S$. More formally, the number of red-filled $z$-walks from $S_1$ to $S_2$ containing $S$ is
\begin{align*}
|\{(T_1, \dots, T_{z + 1}) \in \tcW^{G^{\cF, \zeta}}_z(S_1, S_2) \mid \exists 2 \leqs j \leqs z, T_j = S_j\}| \leqs \sum_{j=2}^{z} |\{(T_1, \dots, T_{z + 1}) \in \tcW^{G^{\cF, \zeta}}_z(S_1, S_2) \mid T_j = S\}|.
\end{align*}

Now, for each $2 \leqs j \leqs z$, to bound the number of red-filled $z$-walks from $S_1$ to $S_2$ whose $j$-th vertex is $S$, let us consider the following three cases based on the value of $j$:
\begin{enumerate}
\item $3 \leqs j \leqs z - 1$. Observe that, for any such walk $(T_1 = S_1, T_2, \dots, T_j = S, \dots, T_z, T_{z + 1} = S_2)$, the subwalk $(T_1 = S_1, \dots, T_j = S)$ and $(T_j = S, \dots, T_{z + 1} = S_2)$ must be red-filled walks as well. Since the numbers of red-filled $(j - 1)$-walks from $S_1$ to $S$ and red-filled $(z + 1 - j)$-walks from $S$ to $S_2$ are bounded by $B_{j - 1}$ and $B_{z + 1 - j}$ respectively (from the inductive hypothesis), there are at most $B_{j - 1}$ choices of $(T_1 = S_1, \dots, T_j = S)$ and $B_{z + 1 - j}$ choices of $(T_j = S, \dots, T_{z - 1}, T_z = S_2)$. Hence, there are at most $B_{j - 1}B_{z + 1 - j}$ red-filled $z$-walks from $S_1$ to $S_2$ whose $j$-th vertex is $S$.
\item $j = 2$. In this case, the subwalk $(T_j = S, \dots, T_{z + 1} = S_2)$ must be a red-filled $(z - 1)$-walk from $S$ to $S_2$. Hence, the number of red-filled $z$-walks from $S_1$ to $S_2$ where $T_j = S$ is bounded above by $B_{z - 1}$.
\item $j = z$. Similar to the previous case, we also have the bound of $B_{z - 1}$.
\end{enumerate}
For convenience, let $B_1 = 1$. The above argument gives us the following bound for every $2 \leqs j \leqs z$:
\begin{align*}
|\{(T_1, \dots, T_{z + 1}) \in \tcW^{G^{\cF, \zeta}}_z(S_1, S_2) \mid T_j = S\}| &\leqs B_{j - 1}B_{z + 1 - j}.
\end{align*}

Summing this over $j$, we have the following upper bound on the number of red-filled $z$-walks from $S_1$ to $S_2$ containing $S$:
\begin{align*}
\sum_{j=2}^{z} B_{j - 1} B_{z + 1 - j} = \sum_{j=2}^{z} (r(j - 1))^{2(j-2)}(r(z + 1 - j))^{2(z - j)} \leqs \sum_{j=2}^{z} (rz)^{2(z - 2)} \leqs B_z / (zr),
\end{align*}
which concludes the proof of the claim.
\end{subproof}

Having proved the above claim, it is now easy to show that $P(z)$ is true. Suppose for the sake of contradiction that there exists $S_1, S_2 \in \cS$ such that $|\tcW^{G^{\cF, \zeta}}_z(S_1, S_2)| > B_z$. Consider the following procedure of selecting disjoint walks from $\tcW^{G^{\cF, \zeta}}_z(S_1, S_2)$. First, initialize $U = \tcW^{G^{\cF, \zeta}}_z(S_1, S_2)$ and repeat the following process as long as $U \ne \emptyset$: select any $(T_1, \dots, T_{z + 1}) \in U$ and remove every $(T'_1, \dots, T'_{z + 1})$ that is not disjoint with $(T_1, \dots, T_{z + 1})$ from $U$. Observe that, each time a walk $(T_1, \dots, T_{z + 1})$ is selected, the number of walks removed from $U$ is at most $B_z/r$; this is because each removed walk must contain at least one of $T_2, \dots, T_z$, but, from the above claim, each of these vertices are contained in at most $B_z/(zr)$ walks. Since we start with more than $B_z$ walks, at least $r$ walks are picked. These walks are disjoint $z$-walks starting from $S_1$ and $S_2$, which, due to Claim~\ref{claim:disjoint-transitivity}, is a contradiction. Thus, $P(z)$ is true as desired.

Hence, $P(\ell)$ is true, which, by definition, implies that $G_\cF$ is $((r\ell)^{2(\ell - 1)}, \ell)$-red/blue-transitive.
\end{proof}

\subsubsection{Red/Blue-Transitivity from Uniformity}

In Step~\ref{step:transitivity-2} of our proof, we need to show red/blue-transitivity of the generalized two-level consistency graph $G^{\cF, \zeta, \zeta'}$. This is encapsulated in the following lemma.

\begin{lemma} \label{lem:transitivity-2}
If every subcollection $\tcS \subseteq \cS$ of size $r$ is $(\gamma, \mu)$-uniform, then, for any $\zeta \geqs 0$, $\zeta' \geqs \mu + 2 \zeta / \gamma$ and $\cF = \{f_S\}_{S \in \cS}$, the generalized two-level consistency graph $G^{\cF, \zeta, \zeta'}$ is $r$-red/blue transitive. 
\end{lemma}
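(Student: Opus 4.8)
The plan is to argue by contradiction, directly from the definitions of the two relevant properties. Recall that a red/blue graph is $r$-red/blue transitive if every red edge has at most $r$ common blue-neighbors. So suppose some pair $\{S_1,S_2\}$ is a red edge of $G^{\cF,\zeta,\zeta'}$ — meaning $\disa(f_{S_1},f_{S_2}) > \zeta' n$ — but $S_1$ and $S_2$ share at least $r$ common blue-neighbors. Pick any $r$ of them, forming a subcollection $\tcS = \{T_1,\dots,T_r\} \subseteq \cS$. By hypothesis $\tcS$ is $(\gamma,\mu)$-uniform, so for all but a $\mu$-fraction of the elements $x \in [n]$, $x$ lies in at least $\gamma$-fraction of the $T_i$'s, i.e.\ in at least $\gamma r$ of them; call these elements \emph{good}. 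The goal is to deduce that $\disa(f_{S_1},f_{S_2}) \leqs (\mu + 2\zeta/\gamma)\,n \leqs \zeta' n$, contradicting the assumption that $\{S_1,S_2\} \in E^{\cF,\zeta,\zeta'}_r$.

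First I would translate the fact that each $T_i$ is a blue-neighbor of both $S_1$ and $S_2$ into the quantitative statements $\disa(f_{T_i},f_{S_1}) \leqs \zeta n$ and $\disa(f_{T_i},f_{S_2}) \leqs \zeta n$ for every $i$; summing over $i$ gives $\sum_{i=1}^r \disa(f_{T_i},f_{S_1}) \leqs r\zeta n$ and likewise for $S_2$. The key pointwise observation is this: if $x \in S_1 \cap S_2$ and $f_{S_1}(x) \ne f_{S_2}(x)$, then since the local functions are $\{0,1\}$-valued we have $\{f_{S_1}(x),f_{S_2}(x)\} = \{0,1\}$, so for \emph{every} $T_i$ containing $x$ the value $f_{T_i}(x)$ must differ from at least one of $f_{S_1}(x),f_{S_2}(x)$; hence $x$ is counted in $\disa(f_{T_i},f_{S_1})$ or in $\disa(f_{T_i},f_{S_2})$ (note $x \in T_i \cap S_1$ and $x \in T_i \cap S_2$).

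Now I would combine these via double counting. Let $G$ be the set of good elements $x \in S_1 \cap S_2$ with $f_{S_1}(x) \ne f_{S_2}(x)$. Summing the pointwise observation over all $x \in G$ and all $i$ with $x \in T_i$, the left side is at least $\gamma r \cdot |G|$ (each $x \in G$ lies in at least $\gamma r$ of the $T_i$'s) while the right side is at most $\sum_{i=1}^r \big(\disa(f_{T_i},f_{S_1}) + \disa(f_{T_i},f_{S_2})\big) \leqs 2r\zeta n$. Thus $|G| \leqs 2\zeta n/\gamma$. Since there are at most $\mu n$ non-good elements, $\disa(f_{S_1},f_{S_2}) \leqs |G| + \mu n \leqs (\mu + 2\zeta/\gamma)\,n$, and as $\zeta' \geqs \mu + 2\zeta/\gamma$ this contradicts $\disa(f_{S_1},f_{S_2}) > \zeta' n$. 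Therefore every red edge of $G^{\cF,\zeta,\zeta'}$ has fewer than $r$ common blue-neighbors, which in particular gives $r$-red/blue transitivity.

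I do not anticipate a real obstacle here — the argument is a short counting argument. The two points that require care are (i) invoking that the local functions take values in $\{0,1\}$, which is exactly what makes the pointwise dichotomy ``$f_{T_i}(x)$ disagrees with $f_{S_1}(x)$ or with $f_{S_2}(x)$'' automatic, and (ii) keeping the bookkeeping straight so that each disagreeing good point is charged with multiplicity at least $\gamma r$ on the ``coverage'' side while the total budget $2r\zeta n$ comes from summing the per-$T_i$ disagreement bounds against $S_1$ and $S_2$.
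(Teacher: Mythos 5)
Your proposal is correct and follows essentially the same argument as the paper: both take a red edge $\{S_1,S_2\}$ with $r$ common blue-neighbors $T_1,\dots,T_r$, use the pointwise fact that any $T_i$ containing a disagreement point of $f_{S_1},f_{S_2}$ must disagree with one of them, and play the uniformity lower bound (disagreement points that are "good" are covered at least $\gamma r$ times) against the upper bound $2r\zeta n$ coming from the blue edges. The paper phrases this double counting in expectation/probability notation, but the bookkeeping is identical, so there is nothing substantive to add.
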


The proof of the lemma is quite simple. The key observation is that, if $S_1$ and $S_2$ are joined by a red edge and $T$ is a common blue-neighbor in the graph $G^{\cF, \zeta, \zeta'}$, then it means that $T$ only hits a small number (i.e. $2\zeta n$) of the variables on which $f_{S_1}$ and $f_{S_2}$ disagree. In other words, such variables appear less frequently in common blue-neighbors of $S_1$ and $S_2$. If the common-blue neighbor set is of size $r$, this contradicts the fact that the set is $(\gamma, \mu)$-uniform. This intuition is formalized below.

\begin{proof}[Proof of Lemma~\ref{lem:transitivity-2}]
Suppose for the sake of contradiction that $G^{\cF, \zeta, \zeta'}$ is not $r$-red/blue transitive. That is, there exist $S_1, S_2 \in \cS$ that are joined by a red edge such that there are $r$ red-filled $2$-walks (i.e. rbb triangle) from $S_1$ to $S_2$. Suppose that these walks are $(S_1, T_1, S_2), (S_1, T_2, S_2), \cdots, (S_1, T_r, S_2)$.

For every $i \in [r]$, since $(S_1, T_i, S_2)$ is a 2-walk, $\{S_1, T_i\}$ and $\{S_2, T_i\}$ are blue edges. This implies that
\begin{align}
\disa(f_{S_1}, f_{T_i}), \disa(f_{S_2}, f_{T_i}) \leqs \zeta n. \label{eq:tran-1}
\end{align}

On the other hand, we can lower bound $\EX_{i \in [r]}[\disa(f_{S_1}, f_{T_i}) + \disa(f_{S_2}, f_{T_i})]$ as follows. First, let let $\cX^{\disa}$ denote the set of all $x \in S_1 \cap S_2$ such that $f_{S_1}(x) \ne f_{S_2}(x)$; since $\{S_1, S_2\}$ is a red edge, we have $|\cX^{\disa}| > \zeta' n$. We can rearrange $\EX_{i \in [r]}[\disa(f_{S_1}, f_{T_i}) + \disa(f_{S_2}, f_{T_i})]$ as
\begin{align}
&\EX_{i \in [r]}[\disa(f_{S_1}, f_{T_i}) + \disa(f_{S_2}, f_{T_i})] \nonumber \\
&= \sum_{x \in [n]} \left(\Pr_{i \in [r]}[x \in (S_1 \cap T_i ) \wedge f_{S_1}(x) \ne f_{T_i}(x)] + \Pr_{i \in [r]}[x \in (S_2 \cap T_i) \wedge f_{S_2}(x) \ne f_{T_i}(x)]\right) \nonumber \\
&\geqs \sum_{x \in \cX^{\disa}} \left(\Pr_{i \in [r]}[x \in T_i \wedge f_{S_1}(x) \ne f_{T_i}(x)] + \Pr_{i \in [r]}[x \in T_i \wedge f_{S_2}(x) \ne f_{T_i}(x)]\right) \nonumber \\
&\geqs \sum_{x \in \cX^{\disa}} \left(\Pr_{i \in [r]}[x \in T_i \wedge \left(f_{S_1}(x) \ne f_{T_i}(x) \vee f_{S_2}(x) \ne f_{T_i}(x)\right)]\right) \nonumber \\
&= \sum_{x \in \cX^{\disa}} \Pr_{i \in [r]}[x \in T_i] \label{eq:tran-2}
\end{align}
We remark here that the second inequality comes from union bound, whereas the last equality follows from the fact that $(f_{S_1}(x) \ne f_{T_i}(x)) \vee (f_{S_2}(x) \ne f_{T_i}(x))$ is always true when $f_{S_1}(x) \ne f_{S_2}(x)$.

Recall that $\{T_1, \dots, T_r\} \subseteq \cS$ is a subcollection of size $r$ and is thus $(\gamma, \mu)$-uniform. Let $\cX_{\geqs \gamma}$ be the set of all $x \in [n]$ that appears in at least $\gamma$ fraction of $T_i$'s. The $(\gamma, \mu)$-uniformity of $\{T_1, \dots, T_r\}$ implies that $|\cX_{\geqs \gamma}| \geqs (1 - \mu)n$. From this and from $|\cX^{\disa}| > \zeta' n$, we can lower bound the right hand side of (\ref{eq:tran-2}) further as follows:
\begin{align}
\sum_{x \in \cX^{\disa}} \Pr_{T_i \in \cT}[x \in T_i] \geqs \sum_{x \in \cX^{\disa} \cap \cX_{\geqs \gamma}} \Pr_{T_i \in \cT}[x \in T_i] \geqs \gamma |\cX^{\disa} \cap \cX_{\geqs \gamma}| > \gamma (\zeta'  - \mu)n \geqs 2\zeta n \label{eq:tran-3}
\end{align}
where the last inequality comes from our assumption that $\zeta' \geqs \mu + 2 \zeta / \gamma$.

Finally, combining (\ref{eq:tran-1}), (\ref{eq:tran-2}) and (\ref{eq:tran-3}) yields the desired contradiction.
\end{proof}

\subsection{Finding Almost Non-Red Subgraph in Red/Blue-Transitive Graph} \label{sec:subgraph}

Recall that in two steps of our proofs, we need to utilize the red/blue transitivity of the (generalized) two-level consistency graph to find a large subgraph with certain number of non-red pairs:
\begin{itemize}
\item Specifically, in Step~\ref{step:subgraph-1}, we would like to show that, for appropriate values of $q$ and $\ell$, any $(q, \ell)$-red/blue transitive graph with sufficiently many blue edges must contain a sufficiently large subgraph whose significant (i.e. $1/\ell^2$) fraction of pairs of vertices are non-red.
\item Additionally, in Step~\ref{step:subgraph-2}, we need to show that any $o(d^2/k)$-red/blue transitive graph with sufficiently many blue edges must contain a sufficiently large subgraph such that almost all pairs of its vertices are non-red.
\end{itemize}
It turns out that a single lemma stated below suffices for both steps. In particular, the lemma below returns a subgraph such that roughly $1/\binom{\ell_0}{2}$ fraction of pairs of its vertices are non-red. Plugging in $\ell_0 = \ell$ recovers our former objective whereas setting $\ell_0 = 2$ satisfies the latter. 

\begin{lemma} \label{lem:finding-dense-subgraph}
For every $k_0, q_0, \ell_0, d_0 \in \mathbb{N}$ such that $\ell_0 \geqs 2$ and every $k_0$-vertex $(q_0, \ell_0)$-red/blue-transitive graph $G = (V, E_r \cup E_b)$ such that $|E_b| \geqs 2k_0d_0$, there exist subsets of vertices $U_1, U_2 \subseteq V$ each of size at least $d_0$ such that $|\{(u, v) \in U_1 \times U_2 \mid \{u, v\} \notin E_r\}| \geqs |U_1||U_2|(1 - \frac{q_0k_0}{d_0^{\ell_0}}) / \binom{\ell_0}{2}$. Moreover, when $\ell_0 = 2$, the previous statement remains true even with an additional requirement that $U_1 = U_2$.
\end{lemma}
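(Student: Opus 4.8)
The plan is to run a blue-walk counting argument, generalizing the red-blue-blue triangle count sketched in the overview. Let $A$ be the blue adjacency matrix of $G$, so that $|\cW^G_{\ell_0}| = \mathbf 1^\top A^{\ell_0}\mathbf 1$. I would first lower bound the number of $\ell_0$-walks by the Blakley--Roy inequality (equivalently, a convexity estimate for $\mathbf 1^\top A^{\ell}\mathbf 1$ in terms of the degree sequence): $|\cW^G_{\ell_0}| \ge k_0\,\bar d^{\,\ell_0}$, where the average blue-degree is $\bar d = 2|E_b|/k_0 \ge 4d_0$; in particular $|\cW^G_{\ell_0}| \ge k_0 d_0^{\ell_0}$. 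Next I would upper bound the number of red-filled $\ell_0$-walks: since $\ell_0\ge 2$ the two endpoints of any such walk are non-consecutive, hence joined by a red edge, so $(q_0,\ell_0)$-red/blue-transitivity gives $|\tcW^G_{\ell_0}| = \sum_{(u,v):\{u,v\}\in E_r}|\tcW^G_{\ell_0}(u,v)| \le 2|E_r|\,q_0 \le k_0^2 q_0$. Dividing, at most a $q_0k_0/d_0^{\ell_0}$ fraction of $\ell_0$-walks are red-filled, so a uniformly random $\ell_0$-walk $W=(v_1,\dots,v_{\ell_0+1})$ has $\{v_i,v_j\}\notin E_r$ for some non-consecutive pair of positions $i,j$ with probability at least $1 - q_0k_0/d_0^{\ell_0}$.

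There are exactly $\binom{\ell_0}{2}$ non-consecutive pairs of positions in $\{1,\dots,\ell_0+1\}$, so by a union bound there is a fixed such pair $(i^*,j^*)$, with $i^*+1<j^*$, for which $\Pr_W[\{v_{i^*},v_{j^*}\}\notin E_r]\ge (1-q_0k_0/d_0^{\ell_0})/\binom{\ell_0}{2}$. I would then fix, by an averaging argument, every coordinate of $W$ except $v_{i^*}$ and $v_{j^*}$ so that this probability is preserved; conditioned on such a ``walk with two holes'', $v_{i^*}$ and $v_{j^*}$ are independent, $v_{i^*}$ is uniform over the set $U_1$ of common blue-neighbors of the fixed positions $i^*-1$ and $i^*+1$ (just position $i^*+1$ if $i^*=1$), and $v_{j^*}$ is uniform over an analogous set $U_2$. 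Then the preserved probability equals $|\{(u,v)\in U_1\times U_2: \{u,v\}\notin E_r\}|/(|U_1||U_2|)$, which is the desired density. (When $\ell_0=2$ the two holes are forced to be positions $1$ and $3$, which are both filled uniformly from $N_b(v_2)$, so $U_1=U_2$ automatically, as required by the addendum.)

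The one thing this does not yet guarantee is $|U_1|,|U_2|\ge d_0$. To get it I would first discard every $\ell_0$-walk passing through a vertex of blue-degree below $d_0$: all such vertices together lie in fewer than $k_0 d_0 \le |E_b|/2$ blue edges, so they carry only a small constant fraction of the $\ell_0$-walk mass, and after the pruning the relevant neighborhoods are large enough. Concretely, in the $\ell_0=2$ case this is just the following: the number of $2$-walks with a fixed apex $v$ is $\deg_b(v)^2$, the number with low-blue-degree apex is at most $k_0 d_0^2 \le \frac1{16}k_0\bar d^{\,2}\le \frac1{16}|\cW^G_2|$, and there are at most $q_0k_0^2$ red-blue-blue triangles; weighting each apex $v$ with $\deg_b(v)\ge d_0$ by $\deg_b(v)^2$ and averaging, one finds an apex $v^*$ with $\deg_b(v^*)\ge d_0$ such that at most a $q_0k_0/d_0^2$ fraction of the ordered pairs in $N_b(v^*)\times N_b(v^*)$ are red edges; taking $U_1=U_2=N_b(v^*)$ finishes this case.

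The step I expect to be the main obstacle is exactly this extraction for general $\ell_0$: conditioning on a walk-neighbor only makes a coordinate uniform over a (common) blue-neighborhood, and the size of such a neighborhood is not controlled by the blue-degree of any single fixed vertex, so getting both $|U_1|,|U_2|\ge d_0$ and the clean density bound $(1-q_0k_0/d_0^{\ell_0})/\binom{\ell_0}{2}$ out of the conditioning --- rather than something lossier --- is where the real care is needed, and is precisely what the restriction to high-blue-degree vertices (as in the $\ell_0=2$ case) has to accomplish. The walk-counting and union-bound steps are routine.
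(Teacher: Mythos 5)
Your counting skeleton matches the paper's: upper-bound the red-filled $\ell_0$-walks by $q_0k_0^2$ via $(q_0,\ell_0)$-red/blue-transitivity, lower-bound $|\cW^{G}_{\ell_0}|$ by $k_0d_0^{\ell_0}$, and union-bound over the $\binom{\ell_0}{2}$ non-consecutive position pairs. Your $\ell_0=2$ argument (weighting apexes of blue-degree at least $d_0$ by $\deg_b(v)^2$ and using the mediant inequality) is a valid, slightly different way to finish that case. But for general $\ell_0$ the proposal has a genuine gap, and it is exactly the one you flag yourself: your extraction step fixes \emph{every} coordinate of the walk except the two holes, which makes $v_{i^*}$ and $v_{j^*}$ uniform and independent but only over \emph{common} blue-neighborhoods $N_b(v_{i^*-1})\cap N_b(v_{i^*+1})$, and nothing bounds the size of such a common neighborhood from below (it can be a single vertex, or even empty, regardless of any degree pruning). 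So the conclusion $|U_1|,|U_2|\geqs d_0$ does not follow, and this is the actual content of the lemma, not a routine detail. In addition, your plan to work in the original graph and then ``discard every $\ell_0$-walk through a low-blue-degree vertex'' rests on the inference that vertices touching at most $|E_b|/2$ blue edges carry only a small constant fraction of the walk mass; that implication is not justified (walk mass is not distributed proportionally to edges), so even the preprocessing as you describe it is not established.

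The paper's proof closes precisely this hole differently. First it prunes the graph up front: repeatedly delete vertices of blue-degree at most $d_0$, losing fewer than $k_0d_0$ blue edges, so the pruned graph $G'$ still has $|E'_b|\geqs k_0d_0$, remains $(q_0,\ell_0)$-red/blue-transitive, and has minimum blue-degree at least $d_0$; all walk counting (both the lower bound $k_0d_0^{\ell_0}$, obtained directly from the minimum degree with no need for Blakley--Roy, and the upper bound $q_0k_0^2$) is done inside $G'$, which sidesteps your ``small fraction of walk mass'' claim entirely. Second, and crucially, after locating the good non-consecutive pair $(i,j)$ it averages only over the values of the two \emph{inner} neighbours $v_{i+1}$ and $v_{j-1}$, obtaining $u^*,v^*$ and taking $U_1=N_b(u^*)$, $U_2=N_b(v^*)$ \emph{in $G'$}; these are full blue-neighborhoods, hence of size at least $d_0$ by the pruning, and the paper identifies the conditional probability $\Pr[\{v_i,v_j\}\notin E'_r \mid v_{i+1}=u^*, v_{j-1}=v^*]$ with the fraction of non-red pairs in $N_b(u^*)\times N_b(v^*)$ (when $\ell_0=2$ one necessarily has $u^*=v^*$, giving $U_1=U_2$). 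Note that the uniformity worry you raise is aimed at the right place: it is exactly the point where the paper's identification of that conditional law with the uniform product measure on $N_b(u^*)\times N_b(v^*)$ has to be checked (it is clean when the holes are the endpoints of the walk, and is asserted in general), whereas your choice of conditioning makes the law exactly uniform but destroys the size guarantee. In short, the missing idea is to condition on fewer coordinates --- just $v_{i+1}$ and $v_{j-1}$ --- after a global degree pruning, rather than on all coordinates outside the holes.
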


The proof of Lemma~\ref{lem:finding-dense-subgraph} below is exactly as sketched earlier in Subsection~\ref{sec:overview}.

\begin{proof}[Proof of Lemma~\ref{lem:finding-dense-subgraph}]
We start by preprocessing the graph so that every vertex has blue-degree at least $d_0$. In particular, as long as there exists a vertex $v$ whose blue-degree is at most $d_0$, we remove $v$ from $G$. Let $G' = (V', E'_r \cup E'_b)$ be the graph at the end of this process. Note that we remove less than $k_0d_0$ blue edges in total. Since at the beginning $|E_b| \geqs 2k_0d_0$, we have $|E'_b| \geqs k_0d_0$. Observe also that $G'$ remains $(q_0, \ell_0)$-red/blue-transitive.

Since $V'$ is $(q_0, \ell_0)$-red/blue-transitive, we can bound the number of red-filled $\ell_0$-walk as follows.
\begin{align*}
|\tcW^{G'}_{\ell_0}| = \sum_{u, v \in V' \atop \{u, v\} \in E'_r} |\tcW^{G'}_{\ell_0}(u, v)| \leqs \sum_{u, v \in V' \atop \{u, v\} \in E'_r} q_0 \leqs q_0k_0^2.
\end{align*}

Moreover, notice that $|\cW^{G'}_{\ell_0}| \geqs (k_0d_0) \cdot d_0^{\ell_0 - 1} = k_0d_0^{\ell_0}$; this is simply because there are at least $k_0d_0$ choices for $(v_1, v_2)$ (i.e. all blue edges) and, for any $(v_1, \dots, v_{i - 1})$, there are at least $d_0$ choices for $v_i$.

Hence, we have $|\tcW^{G'}_{\ell_0}|/|\cW^{G'}_{\ell_0}| \leqs q_0k_0/d_0^{\ell_0}$. This implies that $1 - q_0k_0/d_0^{\ell_0} \leqs \Pr_{(v_1, \dots, v_{\ell_0 + 1}) \in \cW^{G'}_{\ell_0}}[(v_1, \dots, v_{\ell_0 + 1}) \notin \tcW^{G'}_{\ell_0}]$. This probability can be further rearranged as follows.
\begin{align*}
\Pr_{(v_1, \dots, v_{\ell_0 + 1}) \in \cW^{G'}_{\ell_0}}[(v_1, \dots, v_{\ell_0 + 1}) \notin \tcW^{G'}_{\ell_0}]
&= \Pr_{(v_1, \dots, v_{\ell_0 + 1}) \in \cW^{G'}_{\ell_0}}[\exists i, j \in [\ell_0 + 1] \text{ such that } j > i + 1, \{v_i, v_j\} \notin E'_r] \\
(\text{Union Bound}) &\leqs \sum_{i, j \in [\ell_0 + 1] \atop j > i + 1} \Pr_{(v_1, \dots, v_{\ell_0 + 1}) \in \cW^{G'}_{\ell_0}}[\{v_i, v_j\} \notin E'_r].
\end{align*}

Now, note that the number of pairs of $i, j \in [\ell_0 + 1]$ such that $j > i + 1$ is $\binom{\ell_0 + 1}{2} - \ell_0 = \binom{\ell_0}{2}$. This implies that there exists one such $i, j$ such that $\Pr_{(v_1, \dots, v_{\ell_0 + 1}) \in \cW^{G'}_{\ell_0}}[\{v_i, v_j\} \notin E'_r] \geqs (1 - \frac{q_0k_0}{d_0^{\ell_0}}) / \binom{\ell_0}{2}$. The probability $\Pr_{(v_1, \dots, v_{\ell_0 + 1}) \in \cW^{G'}_{\ell_0}}[\{v_i, v_j\} \notin E'_r]$ can now be bounded as follows.
\begin{align*}
&\Pr_{(v_1, \dots, v_{\ell_0 + 1}) \in \cW^{G'}_{\ell_0}}[\{v_i, v_j\} \notin E'_r] \\
&= \sum_{u, v} \Pr_{(v_1, \dots, v_{\ell_0 + 1}) \in \cW^{G'}_{\ell_0}}[\{v_i, v_j\} \notin E'_r \mid v_{i + 1} = u \wedge v_{j - 1} = v] \Pr_{(v_1, \dots, v_{\ell_0 + 1}) \in \cW^{G'}_{\ell_0}}[v_{i + 1} = u \wedge v_{j - 1} = v] \\
&\leqs \left(\max_{u, v} \Pr_{(v_1, \dots, v_{\ell_0 + 1}) \in \cW^{G'}_{\ell_0}}[\{v_i, v_j\} \notin E'_r \mid v_{i + 1} = u \wedge v_{j - 1} = v]\right)\left(\sum_{u, v} \Pr_{(v_1, \dots, v_{\ell_0 + 1}) \in \cW^{G'}_{\ell_0}}[v_{i + 1} = u \wedge v_{j - 1} = v]\right) \\
&= \max_{u, v} \Pr_{(v_1, \dots, v_{\ell_0 + 1}) \in \cW^{G'}_{\ell_0}}[\{v_i, v_j\} \notin E'_r \mid v_{i + 1} = u \wedge v_{j - 1} = v]
\end{align*}
where the summation and maximization is taken over all $u, v \in V'$ such that $\Pr_{(v_1, \dots, v_{\ell_0 + 1}) \in \cW^{G'}_{\ell_0}}[v_{i + 1} = u \wedge v_{j - 1} = v]$ is non-zero. Hence, we can conclude that there exists $u^*, v^* \in V'$ such that $$\Pr_{(v_1, \dots, v_{\ell_0 + 1}) \in \cW^{G'}_{\ell_0}}[\{v_i, v_j\} \notin E'_r \mid v_{i + 1} = u^* \wedge v_{j - 1} = v^*] \geqs (1 - \frac{q_0k_0}{d_0^{\ell_0}}) / \binom{\ell_0}{2}.$$

The expression on the left is exactly $|\{(u, v) \in N_b(u^*) \times N_b(v^*) \mid \{u, v\} \notin E'_r\}|/(|N_b(u^*)| \cdot |N_b(v^*)|)$. From this and from every vertex in $G'$ has blue-degree at least $d_0$, $U_1 = N_b(u^*), U_2 = N_b(v^*)$ are the desired sets. Finally, observe that, when $\ell = 2$, we must have $i = 1$ and $j = 3$, resulting in $v_{i + 1} = v_{j - 1}$; this implies that $u^* = v^*$ and we have $U_1 = U_2$.
\end{proof}

\subsection{Majority Decoding of an Almost Non-Red Subgraph} \label{sec:maj-decode}

In the last step of our proof, we will decode a global function $g$ from a sufficiently large almost non-red subgraph in the two-level consistency graph $G^{\cF, \zeta, \zeta'}$. Recall that an almost non-red subgraph in $G^{\cF, \zeta, \zeta'}$ simply corresponds to a subcollection $\cS'$ such that, for almost all pairs $(S_1, S_2) \in \cS' \times \cS'$, $f_{S_1}$ is $\zeta'$-consistent with $f_{S_2}$. The main result of this subsection is that, given such $\cS'$, we can find a global function $g$ that approximately agrees with most of the local functions in the subcollection. This is stated more precisely below.

\begin{lemma} \label{lem:maj-decoding}
Let $\cF = \{f_S\}_{S \in \cS'}$ be a collection of functions such that $\agr_{\zeta'}(\cF) \geqs 1 - \kappa$. Then, the function $g: [n] \to \{0, 1\}$ defined by $g(x) = \maj_{S \in \cS' \atop x \in S} (f_{S}(x))$ satisfies $$\EX_{S \in \cS'}\left[\disa(g, f_S)\right] \leqs n \sqrt{\kappa + \zeta'}.$$
\end{lemma}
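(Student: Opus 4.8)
\textbf{Proof plan for Lemma~\ref{lem:maj-decoding}.}

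The plan is to bound the expected disagreement by a second-moment/averaging argument over elements $x\in[n]$. First I would switch the order of summation: writing $\EX_{S\in\cS'}[\disa(g,f_S)] = \sum_{x\in[n]} \Pr_{S\in\cS'}[x\in S \wedge f_S(x)\ne g(x)]$. For a fixed $x$, let $p_x = \Pr_{S\in\cS'}[x\in S]$ and, among the sets containing $x$, let $q_x$ be the fraction on which $f_S(x)$ takes the minority value (so $f_S(x)\ne g(x)$); thus the $x$-th term is $p_x q_x$, and by the definition of majority $q_x\le 1/2$. The key observation is that a random pair $(S_1,S_2)$ both containing $x$ on which $f_{S_1}(x)\ne f_{S_2}(x)$ is a pair that is \emph{not} $\zeta'$-consistent in the weak sense ``they disagree on $x$'', and the probability of drawing such a disagreeing pair is at least $2 q_x(1-q_x)\,p_x^2 \ge q_x p_x^2$ (using $q_x\le 1/2$ so $1-q_x\ge 1/2$). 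Summing over $x$, $\sum_x q_x p_x^2$ lower-bounds the expected number of elements on which a random ordered pair $(S_1,S_2)$ disagrees, which is exactly $\EX_{S_1,S_2}[\disa(f_{S_1},f_{S_2})]$. Since $\agr_{\zeta'}(\cF)\ge 1-\kappa$, with probability at least $1-\kappa$ the pair is $\zeta'$-consistent and contributes at most $\zeta' n$; in the worst case the remaining $\kappa$ fraction contributes at most $n$ each. Hence $\sum_x q_x p_x^2 \le \EX_{S_1,S_2}[\disa(f_{S_1},f_{S_2})] \le (\kappa+\zeta')n$.

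It remains to pass from a bound on $\sum_x q_x p_x^2$ to a bound on $\sum_x q_x p_x$. Here I would apply Cauchy--Schwarz: $\sum_x q_x p_x = \sum_x \sqrt{q_x}\cdot \sqrt{q_x}\,p_x \le \bigl(\sum_x q_x\bigr)^{1/2}\bigl(\sum_x q_x p_x^2\bigr)^{1/2}$. Since $q_x \le 1/2 \le 1$ for every $x$, we have $\sum_x q_x \le n$, so $\sum_x q_x p_x \le \sqrt{n}\cdot\sqrt{(\kappa+\zeta')n} = n\sqrt{\kappa+\zeta'}$, which is exactly the claimed bound on $\EX_{S\in\cS'}[\disa(g,f_S)]$.

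\textbf{Main obstacle.} The bulk of the argument is routine rearrangement; the one step that needs care is the lower bound $\Pr[\text{disagree on }x]\ge q_x p_x^2$ and making sure the tie-breaking convention in the definition of $\maj$ (when exactly half the sets give each value) is handled consistently so that $q_x\le 1/2$ always holds — if ties are broken arbitrarily this is automatic, but it is worth stating. A secondary subtlety is that the pairs $(S_1,S_2)$ in $\agr_{\zeta'}$ are drawn uniformly \emph{with} repetition (so $S_1=S_2$ is allowed), which only helps since such pairs are trivially $\zeta'$-consistent and contribute $0$ disagreement; this does not affect the inequality $\sum_x q_x p_x^2 \le (\kappa+\zeta')n$. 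No genuinely hard step is anticipated.
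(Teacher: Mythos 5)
Your proposal is correct and follows essentially the same route as the paper: both compare the upper bound $\EX_{S_1,S_2}[\disa(f_{S_1},f_{S_2})]\leqs(\kappa+\zeta')n$ against a lower bound in terms of disagreement with the majority function, using the majority property and Cauchy--Schwarz to conclude. The only cosmetic difference is that the paper lower-bounds the pairwise disagreement by $\sum_x\bigl(\Pr_S[x\in S\wedge f_S(x)\ne g(x)]\bigr)^2$ and applies a power-mean inequality, whereas you keep the slightly stronger bound $\sum_x q_xp_x^2$ and redistribute the weights in Cauchy--Schwarz accordingly; both yield the same final bound.
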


\begin{proof}
Recall that $\agr_{\zeta'}(\cF) \geqs 1 - \kappa$ is equivalent to $\Pr_{S_1, S_2 \in \cS'}\left[f_{S_1}(x) \overset{\zeta'}{\approx} f_{S_2}(x)\right] \geqs 1 - \kappa$. Hence,
\begin{align}
\EX_{S_1, S_2 \in \cS'}[\disa(f_{S_1}, f_{S_2})] \leqs \Pr_{S_1, S_2 \in \cS'}[f_{S_1} \overset{\zeta'}{\napprox} f_{S_2}] \cdot n + \Pr_{S_1, S_2 \in \cS'}[f_{S_1} \overset{\zeta'}{\approx} f_{S_2}] \cdot (\zeta' n)
\leqs (\kappa + \zeta') n. \label{ineq:maj-upper}
\end{align}

We can then lower bound the expression on the left hand side as follows.
\begin{align}
\EX_{S_1, S_2 \in \cS'}[\disa(f_{S_1}, f_{S_2})] &= \sum_{x \in [n]} \Pr_{S_1, S_2 \in \cS'}\left[x \in S_1 \wedge x \in S_2 \wedge f_{S_1}(x) \ne f_{S_2}(x)\right] \nonumber \\
&\geqs \sum_{x \in [n]} \Pr_{S_1, S_2 \in \cS'}[x \in S_1 \wedge x \in S_2 \wedge f_{S_1}(x) \ne g(x) \wedge f_{S_2}(x) = g(x)] \nonumber \\
&= \sum_{x \in [n]} \Pr_{S_1 \in \cS'}[x \in S_1 \wedge f_{S_1}(x) \ne g(x)]\Pr_{S_2 \in \cS'}[x \in S_2 \wedge f_{S_2}(x) = g(x)] \nonumber \\
(\text{Since } g(x) = \maj_{S \in \cS' \atop x \in S} (f_{S}(x))) &\geqs \sum_{x \in [n]} \Pr_{S_1 \in \cS'}[x \in S_1 \wedge f_{S_1}(x) \ne g(x)]\Pr_{S_2 \in \cS'}[x \in S_2 \wedge f_{S_2}(x) \ne g(x)] \nonumber \\
&= \sum_{x \in [n]} \left(\Pr_{S \in \cS'}[x \in S \wedge f_S(x) \ne g(x)]\right)^2 \nonumber \\
(\text{Power Mean Inequality}) &\geqs \frac{1}{n} \left(\sum_{x \in [n]}\Pr_{S \in \cS'}[x \in S \wedge f_S(x) \ne g(x)]\right)^2 \nonumber \\
&= \frac{1}{n} \left(\EX_{S \in \cS'}\left[\disa(g, f_S)\right]\right)^2. \label{ineq:maj-lower}
\end{align}
Combining (\ref{ineq:maj-upper}) and (\ref{ineq:maj-lower}) gives the desired bound.
\end{proof}

\subsection{Putting Things Together: Proof of Theorem~\ref{thm:agr-main-formal}} \label{sec:agr-proof}

Finally, we put all five steps together as outlined at the beginning of this section. This is formalized below. Note that Theorem~\ref{thm:agr-main-formal} follows from the theorem below simply by Markov inequality.

\begin{theorem} \label{thm:agr-main}
For any $0 < \eta, \zeta, \gamma, \mu < 1$ and $r, \ell, k, h, n, d \in \N$ such that $\ell \geqs 2$, let $\cS$ be any collection of $k$ subsets of $[n]$ such that $\cS$ is $(r, \ell, \zeta)$-intersection disperser and every subcollection $\tcS \subseteq \cS$ of size $h$ is $(\gamma, \mu)$-uniform, and let $\cF = \{f_S\}_{S \in \cS}$ be any collection of functions. If $\delta \triangleq \agr(\cF) \geqs \frac{10 + 64(r\ell)^2 k^{1/\ell}}{k}$, then there exists a subcollection $\cS' \subseteq \cS$ of size at least $\frac{\delta k}{128 \ell^2}$ and a function $g: [n] \to \{0, 1\}$ such that $$\EX_{S \in \cS'}\left[\disa(g, f_S)\right] \leqs n \sqrt{\frac{65536 h \ell^6}{\delta k} + \mu + 2 \zeta / \gamma}.$$
\end{theorem}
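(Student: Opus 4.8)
The plan is to chain together the five lemmas exactly as the section outline dictates, carefully tracking parameters so that the final $\beta$-type bound falls out. I would start from a collection $\cF=\{f_S\}_{S\in\cS}$ with $\delta\triangleq\agr(\cF)\geqs\frac{10+64(r\ell)^2k^{1/\ell}}{k}$, and consider the two-level consistency graph $G^{\cF,\zeta}$. By Lemma~\ref{lem:transitivity}, since $\cS$ is an $(r,\ell,\zeta)$-intersection disperser, $G^{\cF,\zeta}$ is $((r\ell)^{2(\ell-1)},\ell)$-red/blue-transitive. The number of blue edges of $G^{\cF,\zeta}$ is $\frac{\delta k^2-k}{2}\geqs \frac{\delta k^2}{4}$ (using $\delta\geqs 10/k$, say), so $|E_b|\geqs 2k\cdot d$ with $d\approx \delta k/8$. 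I would then invoke Lemma~\ref{lem:finding-dense-subgraph} with $\ell_0=\ell$, $q_0=(r\ell)^{2(\ell-1)}$, $k_0=k$, $d_0=\lfloor\delta k/8\rfloor$: the hypothesis $1-\frac{q_0k_0}{d_0^{\ell_0}}\geqs \tfrac12$ needs to be checked, and this is exactly where the assumed lower bound on $\delta$ (built from $(r\ell)^2k^{1/\ell}$) is used — one checks $d_0^\ell = (\delta k/8)^\ell \geqs 2q_0 k$ because $\delta k \geqs 64 (r\ell)^2 k^{1/\ell}$ forces $(\delta k/8)^\ell \geqs (8(r\ell)^2 k^{1/\ell})^\ell = 8^\ell (r\ell)^{2\ell} k \geqs 2 (r\ell)^{2(\ell-1)} k$. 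This yields $U_1,U_2\subseteq\cS$ each of size $\geqs d_0\geqs\delta k/8$ with at least a $\frac{1}{2\binom{\ell}{2}}\geqs \frac{1}{\ell^2}$ fraction of pairs $(u,v)\in U_1\times U_2$ non-red.

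Next I would pass to $\tU = U_1\cup U_2$ and $\tcF=\{f_S\}_{S\in\tU}$; this has $|\tU|\geqs\delta k/8$ and the non-red pairs from $U_1\times U_2$ give $\agr_{\zeta'}(\tcF)\geqs \frac{|U_1||U_2|}{|\tU|^2\ell^2}\geqs \frac{1}{4\ell^2}$ (after accounting for the fact that pairs are counted within $\tU\times\tU$, which at most quadruples the denominator). Now I restrict attention to the generalized two-level consistency graph $G^{\tcF,\zeta,\zeta'}$ with $\zeta'\geqs\mu+2\zeta/\gamma$: provided $|\tU|\geqs h$ — which holds when $\delta k/8\geqs h$, a condition implied in the interesting regime — every size-$h$ subcollection of $\tU$ is $(\gamma,\mu)$-uniform since $\tU\subseteq\cS$, so Lemma~\ref{lem:transitivity-2} (with $r$ replaced by $h$) gives that $G^{\tcF,\zeta,\zeta'}$ is $h$-red/blue-transitive. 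Its blue edge count is $\geqs \frac{1}{2}\binom{|\tU|}{2}\cdot\frac{1}{4\ell^2}\geqs \frac{|\tU|^2}{16\ell^2}\geqs 2|\tU|\cdot d_1$ with $d_1\approx\frac{|\tU|}{64\ell^2}\geqs \frac{\delta k}{512\ell^2}$. I then apply Lemma~\ref{lem:finding-dense-subgraph} a second time, now with $\ell_0=2$ (so $U_1=U_2=:\cS'$), $q_0=h$, $k_0=|\tU|$, $d_0=d_1$: this requires $\frac{h|\tU|}{d_1^2}$ small, i.e. $d_1^2\gg h|\tU|$, i.e. $\delta k\gg h\ell^4$ up to constants — this is the source of the $\frac{65536h\ell^6}{\delta k}$ term. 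The output is $\cS'$ of size $\geqs d_1\geqs \frac{\delta k}{512\ell^2}$ (I would tighten constants to reach $\frac{\delta k}{128\ell^2}$ by being slightly less wasteful above) with $\agr_{\zeta'}(\{f_S\}_{S\in\cS'})\geqs 1-\kappa$ where $\kappa = O(h|\tU|/d_1^2) = O(h\ell^4/(\delta k))$.

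Finally I would run Lemma~\ref{lem:maj-decoding} on $\cS'$: defining $g(x)=\maj_{S\in\cS',x\in S}f_S(x)$ gives $\EX_{S\in\cS'}[\disa(g,f_S)]\leqs n\sqrt{\kappa+\zeta'}$. Substituting $\kappa\leqs \frac{65536h\ell^6}{\delta k}$ (absorbing all constants from the two applications of Lemma~\ref{lem:finding-dense-subgraph} into the constant $65536$) and $\zeta'=\mu+2\zeta/\gamma$ yields exactly the claimed bound $n\sqrt{\frac{65536h\ell^6}{\delta k}+\mu+2\zeta/\gamma}$, and the size bound $|\cS'|\geqs\frac{\delta k}{128\ell^2}$. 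Theorem~\ref{thm:agr-main-formal} then follows from Theorem~\ref{thm:agr-main} by Markov's inequality: the set of $S\in\cS'$ with $\disa(g,f_S)\leqs 2n\sqrt{\cdots}$ has density $\geqs 1/2$ within $\cS'$, giving a further subcollection of size $\geqs\frac{\delta k}{256\ell^2}$ with $g\overset{\beta}{\approx}f_S$ and $\beta=2\sqrt{\frac{65536h\ell^6}{\delta k}+\mu+2\zeta/\gamma}$.

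The main obstacle I anticipate is the bookkeeping of constants and the chain of ``sufficiently large'' conditions: I need $d_0^\ell\geqs 2q_0k$ in the first application (handled by the stated lower bound on $\delta$), $|\tU|\geqs h$ for the uniformity hypothesis to have content, and $d_1^2\gg h|\tU|$ in the second application — and all of these must be made to hold simultaneously or else absorbed into the vacuous-bound case (when $\beta\geqs 1$ the statement is trivial since $\disa\leqs n$ always). Tracking which constant from which lemma flows into the final $65536h\ell^6$ and $128\ell^2$ requires care but is routine; the genuinely delicate point is ensuring that the ``$\ell_0=2$, $U_1=U_2$'' variant of Lemma~\ref{lem:finding-dense-subgraph} is the one applied in the second round so that $\cS'$ is a single subcollection rather than a bipartite pair, which is what makes the majority decoding step well-posed.
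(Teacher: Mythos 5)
Your proposal mirrors the paper's own proof step for step (transitivity of $G^{\cF,\zeta}$ via Lemma~\ref{lem:transitivity}; first application of Lemma~\ref{lem:finding-dense-subgraph} with $\ell_0=\ell$; passage to $G^{\tcF,\zeta,\zeta'}$ with $\zeta'=\mu+2\zeta/\gamma$ and $h$-red/blue transitivity from Lemma~\ref{lem:transitivity-2}; second application with $\ell_0=2$, $U_1=U_2$; majority decoding via Lemma~\ref{lem:maj-decoding}), but there is one genuine gap at the union step. You claim that passing from $U_1\times U_2$ to $\tU=U_1\cup U_2$ ``at most quadruples the denominator'', i.e.\ $\frac{|U_1||U_2|}{|\tU|^2\ell^2}\geqs\frac{1}{4\ell^2}$. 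Lemma~\ref{lem:finding-dense-subgraph} only lower-bounds $|U_1|,|U_2|$: they are the blue neighborhoods $N_b(u^*)$, $N_b(v^*)$ and may have wildly different sizes, e.g.\ $|U_1|\approx d$ while $|U_2|\approx k$, in which case $|U_1||U_2|/|\tU|^2$ is about $d/k\approx\delta$, not a constant. Carrying the unequal sizes through honestly, one still gets $d_1\gtrsim d/\ell^2$ (so the \emph{size} bound on $\cS'$ survives), but the second application of Lemma~\ref{lem:finding-dense-subgraph} has $k_0=|\tU|$ possibly of order $k$, so the non-red fraction is only $1-O(h|\tU|/d_1^2)\approx 1-O(h\ell^4/(\delta^2 k))$ --- an extra factor of $1/\delta$ over the claimed $65536h\ell^6/(\delta k)$, which is fatal precisely in the regime $\delta=k^{o(1)}/k$ the theorem is built for. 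The paper closes this with a short averaging step you omitted: replace $U_1,U_2$ by subsets $\tU_1,\tU_2$ of size \emph{exactly} $d$ (random size-$d$ subsets preserve the expected non-red density, so some choice retains the $1/\ell^2$ fraction), and only then take the union, so that $|\tU|\leqs 2d$ and both $d'\geqs d/(16\ell^2)\geqs\delta k/(128\ell^2)$ and the error term $O(hd/(d')^2)\leqs 65536h\ell^6/(\delta k)$ come out as stated.

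Two smaller points. First, the blue edges of $G^{\tcF,\zeta,\zeta'}$ are the $\zeta$-consistent pairs, not the $\zeta'$-consistent ones, so phrasing the count via $\agr_{\zeta'}(\tcF)$ is loose; it is harmless here only because the pairs you count are non-red in $G^{\cF,\zeta}$ and hence $\zeta$-consistent, but the bookkeeping should say so. Second, your constants land at $|\cS'|\geqs\delta k/(512\ell^2)$ rather than the stated $\delta k/(128\ell^2)$; the paper reaches the stated constant by taking $d=\lfloor(\delta k-1)/4\rfloor$ in the first application (so $d\geqs\delta k/8$ and $d\geqs 8\ell^2$ follow from the hypothesis on $\delta$), so the tightening you defer is available, but as written your chain does not prove the literal statement.
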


\begin{proof}
Observe that $\agr(\cF)$ directly corresponds to the number of blue edges $|E^{\cF, \zeta}_b|$ in the two-level consistency graph $G^{\cF, \zeta}$. In particular, $\agr(\cF) = \delta$ means that the number of blue edges is $(\delta k^2 - k) / 2$. Since $\cS$ is a $(r, \ell, \zeta)$-intersection disperser, Lemma~\ref{lem:transitivity} implies that $G^{\cF, \zeta}$ is $((r\ell)^{2(\ell - 1)}, \ell)$-red/blue-transitive. Let $d = \lfloor \frac{|E^{\cF, \zeta}|}{2k} \rfloor = \lfloor \frac{\delta k - 1}{4} \rfloor$; since $\delta \geqs \frac{10 + 64(r\ell)^2 k^{1/\ell}}{k}$, we have $d \geqs (r\ell)^2 k^{1/\ell}$.

Applying Lemma~\ref{lem:finding-dense-subgraph} with $G = G^{\cF, \zeta}, k_0 = k, \ell_0 = \ell, q_0 = (r\ell)^{2(\ell - 1)}$ and $d_0 = d$, we can conclude that there exist subsets $U_1, U_2 \subseteq V^{\cF, \zeta}$ each of size at least $d$ such that
\begin{align*}
\frac{|\{(u, v) \in U_1 \times U_2 \mid \{u, v\} \notin E^{\cF, \zeta}_r\}|}{|U_1||U_2|} \geqs \frac{1 - (r\ell)^{2(\ell - 1)}k/d^\ell}{\binom{\ell}{2}} \geqs \frac{1}{\ell^2}
\end{align*}
where the last inequality follows from our aforementioned lower bound on $d$ and from $\ell \geqs 2$.

Next, observe that, if we let $U'_1$ and $U'_2$ be random subsets of $U_1, U_2$ of size $d$, then we have
\begin{align*}
\EX_{U'_1, U'_2}\left[\frac{|\{(u', v') \in U'_1 \times U'_2 \mid \{u', v'\} \notin E^{\cF, \zeta}_r\}}{d^2}\right] = \frac{|\{(u, v) \in U_1 \times U_2 \mid \{u, v\} \notin E^{\cF, \zeta}_r\}|}{|U_1||U_2|}.
\end{align*}
As a result, there exists $\tU_1, \tU_2$ each of size exactly $d$ such that 
\begin{align}
\frac{|\{(\tu, \tv) \in \tU_1 \times \tU_2 \mid \{\tu, \tv\} \notin E^{\cF, \zeta}_r\}|}{d^2} \geqs \frac{1}{\ell^2}.
\label{eq:agr-1}
\end{align}

Now, let $\tU = \tU_1 \cup \tU_2$. (\ref{eq:agr-1}) implies that the number of $\{\tu, \tv\} \subseteq \tU$ such that $\{\tu, \tv\} \notin E^{\cF, \zeta}_r$ is at least $d^2/(2\ell^2) - d$ where the factor of 2 comes from the fact that each pair $\{\tu, \tv\}$ is double counted in the left hand side of (\ref{eq:agr-1}) and the subtraction of $d$ comes from the fact that the left hand side of (\ref{eq:agr-1}) also count the case where $\tu = \tv$.

Now, let $\tcF = \{f_S\}_{S \in \tU}$, $\zeta' = \mu + 2\zeta/\gamma$ and consider the two-level consistency graph $G^{\tcF, \zeta, \zeta'}$. Observe that $\{\tu, \tv\}$ is a blue edge in this new graph $G^{\tcF, \zeta, \zeta'}$ if and only if it is not a red edge in the original graph $G^{\cF, \zeta}$. Hence, the bound derived in the previous paragraph implies that
\begin{align*}
|E^{\tcF, \zeta, \zeta'}_b| \geqs \frac{d^2}{2\ell^2} - d.
\end{align*}
Let $d' = |E^{\tcF, \zeta, \zeta'}_b|/(2|\tU|) \geqs |E^{\tcF, \zeta, \zeta'}_b|/(4d) = d/(8\ell^2) - 1/4$. Recall that $d = \lfloor (\delta k - 1)/4 \rfloor$; from $\delta \geqs (10 + 64(r\ell)^2)/k$, we have $d \geqs 8\ell^2$ and $d \geqs \delta k / 8$. Hence, we have $d' \geqs d/(16\ell^2) \geqs \delta k / (128 \ell^2)$.

Furthermore, by Lemma~\ref{lem:transitivity-2} and from our assumption that every subcollection of $\cS$ of size $h$ is $(\gamma, \mu)$-uniform, the graph $G^{\cF,\zeta,\zeta'}$ is $h$-red/blue transitive. Applying Lemma~\ref{lem:finding-dense-subgraph} with $G = G^{\cF,\zeta,\zeta'}, k_0 = |\tU| \leqs 2d, \ell_0 = 2,  q_0 = h$ and $d_0 = d'$, there must be a set $U' \subseteq \tU$ of size at least $d'$ such that
\begin{align}
\frac{|\{(u', v') \in U' \times U' \mid \{u', v'\} \notin E^{\cF,\zeta,\zeta'}_r\}|}{|U'|^2} \geqs 1 - \frac{2hd}{(d')^2} \geqs 1 - \frac{512 h \ell^4}{d} \geqs 1 - \frac{65536 h \ell^6}{\delta k} 
\label{eq:agr-main-10}
\end{align}
where the last two inequalities follow from $d' \geqs d/(16\ell^2)$ and $d' \geqs \delta k/(128 \ell^2)$ respectively.

Let $\cF' = \{f_S\}_{S \in U'}$. Observe that the expression on the left hand side of (\ref{eq:agr-main-10}) is simply $\agr_{\zeta'}(\cF')$. Hence, by Lemma~\ref{lem:maj-decoding}, there exists a function $g: [n] \to \{0, 1\}$ such that 
\begin{align*}
\EX_{S \in U'}[\disa(g, f_S)] \leqs n\sqrt{\frac{65536 h \ell^6}{\delta k}  + \zeta'} = n\sqrt{\frac{65536 h \ell^6}{\delta k} + \mu + 2\zeta/\gamma}.
\end{align*}
In other words, $U'$ is the desired subcollection, which completes our proof.
\end{proof}

\section{Soundness Analysis of the Reduction}
\label{sec:soundness-csp}

We will next use our agreement theorem to analyze the soundness of our reduction. The soundness of our reduction can be stated more precisely as follows:

\begin{theorem} \label{thm:soundness-main}
For any $\Delta \in \N$, let $\Phi$ be any 3-CNF formula with variable set $\cX$ and clause set $\cC$ such that each variable appears in at most $\Delta$ clauses. Moreover, for any $0 < \eta, \zeta, \gamma, \mu < 1$ and $r, \ell, k, h, d \in \N$ such that $\ell \geqs 2$, let $\cT$ be any collection of $k$ subsets of $\cC$ such that $\cT$ is $(r, \ell, \zeta)$-intersection disperser and every subcollection $\tcT \subseteq \cT$ of size $h$ is $(\gamma, \mu)$-uniform. If $\val(\Phi) < 1 - \mu - (3\Delta/\gamma)\sqrt{4\Delta\mu + 6\Delta\zeta/\gamma}$, then $$\val(\Gamma_{\Phi, \cT}) < \frac{10 + 64(r\ell)^2 k^{1/\ell} + 65536h\ell^2/\mu}{k}.$$
\end{theorem}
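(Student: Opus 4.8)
The plan is to prove the contrapositive. Assume $\val(\Gamma_{\Phi, \cT}) \geq \frac{10 + 64(r\ell)^2 k^{1/\ell} + 65536 h\ell^2/\mu}{k}$; I will exhibit an assignment of $\Phi$ satisfying at least a $\bigl(1 - \mu - (3\Delta/\gamma)\sqrt{4\Delta\mu + 6\Delta\zeta/\gamma}\bigr)$-fraction of $\cC$, so that $\val(\Phi)$ is at least that. Fix an optimal labeling $\sigma = \{\sigma_T\}_{T \in \cT}$ of $\Gamma_{\Phi, \cT}$ and reinterpret it as a collection of local functions over the universe $\cX$: let $\cF = \{f_T\}_{T \in \cT}$ with $f_T := \sigma_T : \var(T) \to \{0,1\}$. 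By Definition~\ref{def:construction}, an edge $\{T, T'\}$ is satisfied by $\sigma$ exactly when $f_T$ and $f_{T'}$ agree on $\var(T) \cap \var(T')$, i.e.\ when $f_T \approx f_{T'}$. Hence $\agr(\cF) = \frac{1}{k} + \frac{k-1}{k}\val(\sigma) \geq \val(\Gamma_{\Phi, \cT}) - \frac{1}{k}$, so $\delta := \agr(\cF)$ satisfies both $\delta \geq \frac{10 + 64(r\ell)^2 k^{1/\ell}}{k}$ and $\delta k \geq 65536 h\ell^2/\mu$.

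Next I transport the well-behavedness of $\cT$, which is phrased for subsets of $\cC$, to the collection $\cS := \{\var(T) : T \in \cT\}$ of subsets of $\cX$. Since each clause contains at most three variables while each variable lies in at most $\Delta$ clauses (which also gives $|\cC|/\Delta \leq |\cX| \leq 3|\cC|$), a set of clauses of density at least $1 - \zeta$ has a variable set of density at least $1 - 3\Delta\zeta$ in $\cX$, and a clause appearing in at least a $\gamma$-fraction of a subcollection forces each of its variables to appear in at least a $\gamma$-fraction of the corresponding sets $\var(T)$. Consequently $\cS$ is an $(r, \ell, 3\Delta\zeta)$-intersection disperser over $\cX$, and every subcollection of $\cS$ of size $h$ is $(\gamma, 3\Delta\mu)$-uniform over $\cX$. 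I then apply Theorem~\ref{thm:agr-main} to $\cF$ with these transported parameters; its hypothesis $\delta \geq \frac{10 + 64(r\ell)^2 k^{1/\ell}}{k}$ holds by the previous paragraph. This produces a subcollection $\cS' \subseteq \cT$ of size at least $\frac{\delta k}{128 \ell^2}$ and a function $g : \cX \to \{0,1\}$ with $\EX_{T \in \cS'}[\disa(g, f_T)] \leq D$, where $D := |\cX|\sqrt{\frac{65536 h\ell^6}{\delta k} + 3\Delta\mu + 6\Delta\zeta/\gamma}$.

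It remains to show that $g$, read as an assignment of $\Phi$, satisfies enough clauses. Because $\delta k \geq 65536 h\ell^2/\mu$, the subcollection $\cS'$ has size at least $h$, so by averaging there is a subcollection $\cS'' \subseteq \cS'$ of size exactly $h$ with $\EX_{T \in \cS''}[\disa(g, f_T)] \leq D$; by hypothesis $\cS''$ is $(\gamma, \mu)$-uniform over $\cC$. For each $T \in \cS''$, the labeling $\sigma_T = f_T$ satisfies every clause of $T$, so any clause of $T$ left unsatisfied by $g$ must contain a variable on which $g$ and $f_T$ disagree; as each variable lies in at most $\Delta$ clauses, $g$ fails at most $\Delta \cdot \disa(g, f_T)$ of the clauses of $T$. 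Summing over $\cS''$ gives $\sum_{T \in \cS''}|\{C \in T : g \text{ fails } C\}| \leq \Delta h D$. On the other hand, a clause $C$ that $g$ fails globally is counted once for every $T \in \cS''$ containing it, and by $(\gamma, \mu)$-uniformity at least $(1 - \mu)|\cC|$ of the clauses lie in at least $\gamma h$ of the sets of $\cS''$; hence the number of clauses failed by $g$ is at most $\mu|\cC| + \Delta D/\gamma$. Dividing by $|\cC|$ and using $|\cX| \leq 3|\cC|$ yields $\val(\Phi) \geq 1 - \mu - \frac{3\Delta}{\gamma}\sqrt{\frac{65536 h\ell^6}{\delta k} + 3\Delta\mu + 6\Delta\zeta/\gamma}$, and a routine bound on $\frac{65536 h\ell^6}{\delta k}$ against $\Delta\mu$ using the lower bound on $\delta k$ shows the expression under the root is at most $4\Delta\mu + 6\Delta\zeta/\gamma$, which completes the contrapositive.

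I expect the last step — decoding — to be the main obstacle: one must turn the single global function $g$ returned by the agreement theorem into an assignment of $\Phi$ and charge its unsatisfied clauses to the local disagreements $\disa(g, f_T)$, which are controlled only in expectation over $\cS'$. The delicate point is to route the combinatorial well-behavedness of $\cT$ — stated over the clause set $\cC$ — simultaneously into the hypotheses of the agreement theorem (which require intersection-disperser and uniformity statements over the variable set $\cX$) and into the final clause count (which requires uniformity over $\cC$), while paying only $O(\Delta)$-type factors, and to ensure that the subcollection handed back by the agreement theorem is still large enough to contain a size-$h$ uniform piece on which the counting argument can be run.
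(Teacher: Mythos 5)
Your proposal is correct and follows essentially the same route as the paper's proof: argue the contrapositive, view the labeling as a family of local functions on variable sets, transport the intersection-disperser and uniformity properties from clauses to variables at a cost of $3\Delta$ factors (the paper's Lemmas~\ref{lem:disp-t} and~\ref{lem:uni-t}), apply Theorem~\ref{thm:agr-main}, pass to a size-$h$ subcollection with small expected disagreement, and charge unsatisfied clauses to disagreements via uniformity over $\cC$. The only cosmetic difference is that you inline the paper's decoding step (Lemma~\ref{lem:decoding}) rather than invoking it as a separate lemma, and your final ``routine bound'' of $65536h\ell^6/(\delta k)$ against $\Delta\mu$ is glossed at exactly the same level of detail as in the paper's own write-up.
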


Again, we will prove the contrapositive that if $\val(\Gamma_{\Phi, \cT})$ is large, then $\val(\Phi)$ is also large. Recall that $\val(\Gamma_{\Phi, \cT})$ being large implies that there exists a labeling $\sigma = \{\sigma_T\}_{T \in \cT}$ with high agreement probability. We would like to apply our agreement testing theorem. Note however that Theorem~\ref{thm:agr-main} only applies when the subsets of \emph{variables} are ``well-behaved'' (i.e. satisfies uniformity and is an intersection disperser). However, in our construction, the subset of variables are not random, rather they are variable set of random subsets of clauses. Hence, we will first need to translate the ``well-behavedness'' from subsets of clauses to their corresponding variable sets; this is shown in Section~\ref{subsec:well-behave-translation}. Once this is in place, we can apply Theorem~\ref{thm:agr-main}, which gives us a global assignment that approximately agrees with many $\sigma_T$'s. We show in Section~\ref{subsec:good-assignment} that such assignment satisfies most of the constraint, which implies that $\val(\Phi)$ must be large as desired. The full proof of Theorem~\ref{thm:soundness-main} can then be found in Section~\ref{sec:final}.

\subsection{Well-Behave Subsets of Clauses vs Well-Behave Subsets of Variables}
\label{subsec:well-behave-translation}

For convenient, let us define an additional notation:

\begin{definition}
Let $\Phi$ be any 3-CNF formula and $\cT$ be any subset of clauses of $\Phi$. We use $\cS_{\Phi, \cT}$ to denote the collection $\{\var(T)\}_{T \in \cT}$ of subsets of variables.
\end{definition}

Note that the subsets in $\cS_{\Phi, \cT}$ are indeed the variable sets of our labeling $\sigma = \{\sigma_T\}_{T \in \cT}$. Moreover, it is rather straightforward to see that both uniformity and intersection disperser conditions translate from $\cT$ to $\cS_{\Phi, \cT}$ with little loss in parameters, provided that each variable appears in bounded number of clauses. These observations are formalized and proved below.

\begin{lemma} \label{lem:uni-t}
Suppose that every variable in $\Phi$ appears in at least one and at most $\Delta$ clauses. If $\cT$ is $(\gamma, \mu)$-uniform, then $\cS_{\Phi, \cT}$ is $(\gamma, 3\Delta\mu)$-uniform.
\end{lemma}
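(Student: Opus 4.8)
The plan is to argue at the level of individual elements: a variable $x$ can fail the uniformity condition for $\cS_{\Phi,\cT}$ only if every clause containing $x$ already fails the corresponding condition for $\cT$. First I would record the elementary observation that, for any clause $C$ with $x \in \var(C)$, the event $\{C \in T\}$ is contained in the event $\{x \in \var(T)\}$, since $C \in T$ forces $x \in \var(T)$. Consequently $\Pr_{T \in \cT}[x \in \var(T)] \geqs \Pr_{T \in \cT}[C \in T]$, so if $C$ lies in at least a $\gamma$-fraction of the subsets in $\cT$, then $x$ lies in at least a $\gamma$-fraction of the subsets in $\cS_{\Phi,\cT}$.

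Call a clause \emph{bad} if it appears in fewer than a $\gamma$-fraction of the members of $\cT$, and a variable \emph{bad} if it lies in $\var(T)$ for fewer than a $\gamma$-fraction of the members of $\cT$. By the previous paragraph, a bad variable $x$ must have all of its containing clauses bad; since $x$ appears in at least one clause by hypothesis, this produces a bad clause $C_x$ with $x \in \var(C_x)$. Hence the set of bad variables is contained in $\bigcup_{C\text{ bad}} \var(C)$, and since $|\var(C)| \leqs 3$ for every clause, the number of bad variables is at most $3$ times the number of bad clauses. The $(\gamma,\mu)$-uniformity of $\cT$ bounds the number of bad clauses by $\mu|\cC|$, so the number of bad variables is at most $3\mu|\cC|$.

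Finally I would convert the count $3\mu|\cC|$ into a fraction of $|\cX|$. Since each clause contains at least one variable and at most $\Delta$ clauses contain any given variable, double counting the incidences $(x,C)$ with $x \in \var(C)$ gives $|\cC| \leqs \sum_{C} |\var(C)| = \sum_{x \in \cX} |\{C : x \in \var(C)\}| \leqs \Delta |\cX|$. Combining, the number of bad variables is at most $3\Delta\mu|\cX|$, i.e.\ at least a $(1 - 3\Delta\mu)$-fraction of the variables lie in a $\gamma$-fraction of the subsets in $\cS_{\Phi,\cT}$, which is exactly $(\gamma, 3\Delta\mu)$-uniformity. I do not expect any real obstacle here: the hypothesis that each variable appears in at least one clause is used only to guarantee that $C_x$ exists (and to make the inequality $|\cC| \leqs \Delta|\cX|$ available), and the bound $|\var(C)| \leqs 3$ is where the factor $3$ enters.
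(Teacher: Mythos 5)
Your proof is correct and follows essentially the same route as the paper: both arguments observe that a variable failing the $\gamma$-threshold must have all of its clauses fail the threshold for $\cT$, bound the number of such variables by $3$ times the number of infrequent clauses (at most $3\mu|\cC|$), and convert this to a fraction of the variables via the incidence bound $|\cC| \leqs \Delta|\cX|$. No gaps; the use of the ``at least one clause per variable'' hypothesis is exactly where the paper uses it too.
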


\begin{proof}
First, observe that, since each variable appears in at most $\Delta$ clauses, we have $n \geqs m/\Delta$. Now, let $\cC_{\geqs \gamma} = \{C \in \cC \mid \Pr_{T \in \cT}[C \in T] \geqs \gamma\}$ and $\cX_{\geqs \gamma} = \{x \in \cX \mid \Pr_{S \in \cS_{\Phi, \cT}}[x \in S] \geqs \gamma\}$. Recall that $(\gamma, \mu)$-uniformity of $\cT$ implies that $|C_{\geqs \gamma}| \geqs (1 - \mu)m$. Observe that any $x \in \var(\cC_{\geqs \gamma})$ must also be contained in $\cX_{\geqs \gamma}$. Since every variable appears in at least one clauses, we have that every variable $x \notin \cX_{\geqs \gamma}$ must be in $\var(\cC \setminus \cC_{\geqs \gamma})$. As a result, $|\cX \setminus \cX_{\geqs \gamma}| \leqs 3\mu m$. From this and from $n \geqs m/\Delta$, we arrive at the desired conclusion.
\end{proof}

\begin{lemma} \label{lem:disp-t}
Suppose that every variable in $\Phi$ appears in at least one and at most $\Delta$ clauses. If $\cT$ is an $(r, \ell, \eta)$-intersection disperser, then $\cS_{\Phi, \cT}$ is $(r, \ell, 3\Delta\eta)$-uniform.
\end{lemma}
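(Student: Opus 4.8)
The plan is to push the intersection-disperser property of $\cT$ across the map $T \mapsto \var(T)$, in the same spirit as the proof of Lemma~\ref{lem:uni-t} (here the conclusion should of course read that $\cS_{\Phi,\cT}$ is a $(r,\ell,3\Delta\eta)$-intersection disperser). Fix $r$ disjoint subcollections $\cS^1, \dots, \cS^r \subseteq \cS_{\Phi, \cT}$ each of size at most $\ell$; since $\cS_{\Phi,\cT}$ is indexed by $\cT$, each $\cS^i$ is $\{\var(T) : T \in \cT^i\}$ for a subcollection $\cT^i \subseteq \cT$ of size at most $\ell$, and the $\cT^i$'s are pairwise disjoint.

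The key containment is that, for each $i$, $\var\!\left(\bigcap_{T \in \cT^i} T\right) \subseteq \bigcap_{T \in \cT^i} \var(T) = \bigcap_{S \in \cS^i} S$: if a clause $C$ lies in every $T \in \cT^i$, then every variable of $C$ lies in every $\var(T)$. Writing $\cC^* \triangleq \bigcup_{i=1}^r \bigl(\bigcap_{T \in \cT^i} T\bigr)$ and taking unions over $i$ gives $\var(\cC^*) \subseteq \bigcup_{i=1}^r \bigl(\bigcap_{S \in \cS^i} S\bigr)$, so it suffices to lower bound $|\var(\cC^*)|$. By the $(r, \ell, \eta)$-intersection-disperser property of $\cT$ we have $|\cC^*| \geqs (1 - \eta)m$, i.e. $|\cC \setminus \cC^*| \leqs \eta m$.

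It remains to argue that few variables are missed by $\var(\cC^*)$. Any variable $x \notin \var(\cC^*)$ has all of its occurrences—and it has at least one, by hypothesis—inside $\cC \setminus \cC^*$, so $\cX \setminus \var(\cC^*) \subseteq \var(\cC \setminus \cC^*)$; since every clause contains at most three variables, $|\var(\cC \setminus \cC^*)| \leqs 3|\cC \setminus \cC^*| \leqs 3\eta m$. Finally, counting variable–clause incidences and using that each variable appears in at most $\Delta$ clauses (and each clause is nonempty) gives $m \leqs \Delta n$, hence $|\cX \setminus \var(\cC^*)| \leqs 3\eta m \leqs 3\Delta\eta \, n$, and therefore $\bigl|\bigcup_{i=1}^r \bigl(\bigcap_{S \in \cS^i} S\bigr)\bigr| \geqs |\var(\cC^*)| \geqs (1 - 3\Delta\eta)n$, as required.

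There is essentially no serious obstacle here: the only mild point to be careful about is that $T \mapsto \var(T)$ need not be injective, which is handled by regarding $\cS_{\Phi,\cT}$ as indexed by $\cT$ so that disjointness of subcollections of $\cS_{\Phi,\cT}$ lifts to disjointness of the corresponding subcollections of $\cT$; the containment $\var\bigl(\bigcap T\bigr) \subseteq \bigcap \var(T)$ may be strict, but only this one inclusion is used, and the incidence-counting bound $m \leqs \Delta n$ is exactly the one already invoked in Lemma~\ref{lem:uni-t}.
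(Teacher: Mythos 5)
Your proof is correct and follows essentially the same route as the paper's: lift the disjoint subcollections of $\cS_{\Phi,\cT}$ back to disjoint subcollections of $\cT$, use the containment $\bigcup_i\bigl(\bigcap_{S\in\cS^i}S\bigr)\supseteq\var\bigl(\bigcup_i\bigcap_{T\in\cT^i}T\bigr)$, and then bound the missed variables by $3\eta m\leqs 3\Delta\eta n$ via $m\leqs\Delta n$. You are also right that the lemma's conclusion should read ``$(r,\ell,3\Delta\eta)$-intersection disperser'' rather than ``uniform''; that is a typo in the statement, and the paper's own proof establishes exactly what you proved.
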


\begin{proof}
Consider any $r$ disjoint subcollections $\cS^1 = \{S_{1, 1}, \dots, S_{1, p_1}\}, \cdots, \cS^r = \{S_{r, 1}, \dots, S_{r, p_r}\} \subseteq \cS_{\Phi, \cT}$ each of size at most $\ell$. From our definition of $\cS_{\Phi, \cT}$, there is an $r$ disjoint subcollections $\cT^1 = \{T_{1, 1}, \dots, T_{1, p_1}\}, \dots, \cT^r = \{T_{r, 1}, \dots, T_{r, p_r}\} \subseteq \cT$ such that $S_{i, j} = \var(T_{i, j})$ for all $i \in [r]$ and $j \in [p_i]$. Observe that
\begin{align*}
\bigcup_{i=1}^{r}\left(\bigcap_{S \in \cS^{i}} S\right) \supseteq \var\left(\bigcup_{i=1}^{r}\left(\bigcap_{T \in \cT^{i}} T\right)\right).
\end{align*}
Moreover, since $\cT$ is an $(r, \ell, \eta)$-intersection disperser, we have $|\bigcup_{i=1}^{r}\left(\bigcap_{T \in \cT^{i}} T\right)| \geqs (1 - \eta)m$. As a result, since each variable appears in at least one clause, we indeed have $|\bigcup_{i=1}^{r}\left(\bigcap_{S \in \cS^{i}} S\right)| \geqs n - 3\eta m \geqs (1 - 3\Delta\eta)n$ as desired.
\end{proof}

\subsection{Global Function with Many Agreements is a Good Assignment}
\label{subsec:good-assignment}

In this subsection, we show that any global assignment that are approximately consistent with a collection of labels $\{\sigma_T\}_{T \in \cT^*}$ must satisfy most of the constraints, assuming that $\cT^*$ is sufficiently uniform, which is stated more precisely below. 

\begin{lemma} \label{lem:decoding}
Let $\cT^*$ be any $(\gamma, \mu)$-uniform collection of subsets of clauses and $\sigma$ be any labeling of $\cT^*$. If there exists $\psi: \cX \to \{0, 1\}$ such that $\EX_{T \in \cT^*}\left[\disa(\psi, \sigma_T)\right] \leqs \nu n$, then $\val(\psi) \geqs 1 - \mu - 3\ \nu \Delta / \gamma$.
\end{lemma}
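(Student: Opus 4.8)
The plan is to bound the number of clauses that $\psi$ fails to satisfy by relating each such clause to disagreements between $\psi$ and the local assignments $\sigma_T$, via a double-counting argument. To set up, write $n = |\cX|$, $m = |\cC|$, and observe that we may assume without loss of generality that every variable appears in at least one clause (variables in no clause can be deleted without changing $\val(\psi)$); since each clause has at most three literals, this gives $n \leqs \sum_{C \in \cC} |\var(C)| \leqs 3m$. Call a clause $C$ \emph{heavy} if $\Pr_{T \in \cT^*}[C \in T] \geqs \gamma$. By $(\gamma, \mu)$-uniformity of $\cT^*$, at least $(1 - \mu)m$ clauses are heavy, so it suffices to show that all but at most $3\Delta\nu m/\gamma$ of the heavy clauses are satisfied by $\psi$.

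Let $B$ be the set of heavy clauses not satisfied by $\psi$. The key step is to bound $|B|$ by double-counting the set of triples $(C, T, x)$ with $C \in B$, $T \in \cT^*$, $C \in T$, $x \in \var(C)$, and $\psi(x) \ne \sigma_T(x)$. For the lower bound: fix $C \in B$; since $C$ is heavy there are at least $\gamma|\cT^*|$ sets $T \in \cT^*$ containing $C$, and for each such $T$, the assignment $\sigma_T$ satisfies $C$ (it satisfies every clause in $T$) while $\psi$ does not, so $\psi$ and $\sigma_T$ must differ on at least one variable of $\var(C) \subseteq \var(T)$; hence the number of triples is at least $|B| \cdot \gamma|\cT^*|$. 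For the upper bound: fix $T \in \cT^*$; the variables $x \in \var(T)$ with $\psi(x) \ne \sigma_T(x)$ number exactly $\disa(\psi, \sigma_T)$, and each such $x$ lies in at most $\Delta$ clauses, so the number of triples with this $T$ is at most $\Delta \cdot \disa(\psi, \sigma_T)$. Summing over $T$ and using the hypothesis, the total is at most $\Delta \sum_{T \in \cT^*} \disa(\psi, \sigma_T) = \Delta|\cT^*| \cdot \EX_{T \in \cT^*}[\disa(\psi, \sigma_T)] \leqs \Delta|\cT^*|\nu n$. Combining the two bounds yields $|B| \leqs \Delta\nu n/\gamma \leqs 3\Delta\nu m/\gamma$.

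Finally, the number of clauses not satisfied by $\psi$ is at most the number of non-heavy clauses plus $|B|$, i.e.\ at most $\mu m + 3\Delta\nu m/\gamma$, so $\val(\psi) \geqs 1 - \mu - 3\Delta\nu/\gamma$, as claimed. I expect the double-counting estimate to be the only real content of the argument; the points to be careful about are using $n \leqs 3m$ in the correct direction (which is exactly why we need every variable to appear in a clause), and checking that a single unsatisfied heavy clause forces a disagreement with \emph{every} one of the $\geqs \gamma|\cT^*|$ local assignments on its variable set, not merely with one of them.
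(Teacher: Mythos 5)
Your proof is correct and takes essentially the same route as the paper: your double count of the triples $(C,T,x)$ is precisely the paper's chain of inequalities lower-bounding $\EX_{T \in \cT^*}[\disa(\psi,\sigma_T)]$ by $\frac{\gamma}{\Delta}|\cC_{\geqs \gamma} \cap \cC_{\unsat}|$, using the same three ingredients --- uniformity to get $(1-\mu)m$ heavy clauses, the bound of $\Delta$ clauses per variable, and $n \leqs 3m$ from every variable occurring in some clause. The paper relies on that same ``every variable appears in at least one clause'' assumption (stated where the lemma is invoked), so your explicit use of it is consistent with the paper and not a gap.
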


The key to proving that $\psi$ violates few clauses is that, if a clause $C$ is violated, then, for each $T \in \cT^*$ that contains $T$, $\sigma_T$ and $\psi$ must disagree on at least one of $\var(C)$ because $\sigma_T$ satisfies $C$ but $\psi$ violates it. Hence, if $C$ appears often in $\cT$, then it contributes to many disagreements between $\sigma_T$ and $\psi$; the uniformity condition helps us ensure that most $C$ indeed appear often in $\cT$. Comparing this lower bound against the assumed upper bound on the expected disagreements gives us the desired result. This intuition is formalized below.

\begin{proof}

Let $\cC_{\geqs \gamma}$ denote the set of all clauses that appear in at least $\gamma$ fraction of $T \in \cT^*$, i.e., $\cC_{\geqs \gamma} = \{C \in \cC \mid \Pr_{T \in \cT^*}[C \in T] \geqs \gamma\}$. Since $\cT^*$ is $(\gamma, \mu)$-uniform, we have $|\cC_{\geqs \gamma}| \geqs (1 - \mu)m$.

Since each variable $x$ appears in at most $\Delta$ clauses, we can obtain the following bound:
\begin{align}
\EX_{T \in \cT^*}[\disa(\psi, \sigma_T)] &= \sum_{x \in \cX} \Pr_{T \in \cT^*}[x \in \var(T) \wedge \sigma_T(x) \ne \psi(x)] \nonumber \\
&\geqs \frac{1}{\Delta} \sum_{C \in \cC_{\geqs \gamma}} \sum_{x \in \var(C)} \Pr_{T \in \cT^*}[x \in \var(T) \wedge \sigma_T(x) \ne \psi(x)] \nonumber \\
&\geqs \frac{1}{\Delta} \sum_{C \in \cC_{\geqs \gamma}} \sum_{x \in \var(C)} \Pr_{T \in \cT^*}[C \in T \wedge \sigma_T(x) \ne \psi(x)] \nonumber \\
(\text{Union Bound}) &\geqs \frac{1}{\Delta} \sum_{C \in \cC_{\geqs \gamma}} \Pr_{T \in \cT^*}\left[C \in T \wedge \left(\bigvee_{x \in \var(C)} \sigma_T(x) \ne \psi(x)\right)\right] \nonumber \\
&\geqs \frac{\gamma}{\Delta} \sum_{C \in \cC_{\geqs \gamma}} \Pr_{T \in \cT^*}\left[\bigvee_{x \in \var(C)} \sigma_T(x) \ne \psi(x) \text{ }\middle|\text{ } C \in T\right] \label{eq:biclique3}
\end{align}
Note here that we use the fact that each variable appears in at most $\Delta$ clauses in the first inequality and that the last inequality follows from the fact that each $C \in \cC_{\geqs \gamma}$ appears in at least $\gamma$ fraction of $T \in \cT^*$. The rest of the inequalities are trivial.

Let $\cC_{\unsat}$ denote the set of clauses violated by $\psi$. Observe that, for any $C \in \cC_{\unsat}$ and any $T \in \cT^*$ such that $C \in T$, $\sigma_T$ must disagree with $\psi$ on at least one of $x \in \var(C)$; this is simply because $C$ is satisfied by $\sigma_T$ but violated by $\psi$. In other words, for every $C \in C_{\unsat}$, we have
\begin{align}
\Pr_{T \in \cT^*}\left[\bigvee_{x \in \var(C)} \sigma_T(x) \ne \psi(x) \text{ }\middle|\text{ } C \in T\right] = 1. \label{eq:biclique4}
\end{align}

(\ref{eq:biclique3}), (\ref{eq:biclique4}) and the assumption that $\EX_{T \in \cT^*}[\disa(\psi, \sigma_T)] \leqs \nu n$ imply that
\begin{align*}
\nu \Delta n / \gamma \geqs |C_{\geqs \gamma} \cap \cC_{\unsat}|.
\end{align*}
Since $C_{\geqs \gamma} \geqs m(1 - \mu)$ and $n \leqs 3m$ (from every variable appears in at least one clause), we can conclude that $|\cC_{\unsat}| \leqs \mu m + 3\nu \Delta m / \gamma$. As a result, $\val(\psi) \geqs 1 - \mu - 3\nu \Delta / \gamma$ as desired.
\end{proof}

\subsection{Putting Things Together: Proof of Theorem~\ref{thm:soundness-main}}
\label{sec:final}

\begin{proof}[Proof of Theorem~\ref{thm:soundness-main}]
We may assume w.l.o.g. that each variable appears in at least one clause.

We will prove the theorem by contrapositive. Suppose that $\val(\Gamma_{\Phi, \cT}) \geqs \frac{10 + 64(r\ell)^2 k^{1/\ell} + 65536h\ell^2/\mu}{k}$. This means that there exists a labeling $\sigma = \{\sigma_T\}_{T \in \cT}$ such that $\val(\sigma) \geqs \frac{10 + 64(r\ell)^2 k^{1/\ell} + 65536h\ell^2/\mu}{k}$; this also means that, if we view $\sigma$ as a collection of functions $\cF = \{f_S\}_{S \in \cS_{\Phi, \cT}}$ where $f_{\var(T)} = \sigma_T$, then $\agr(\cF) \geqs \frac{10 + 64(r\ell)^2 k^{1/\ell} + 65536h\ell^2/\mu}{k}$. Let $\delta = \agr(\cF)$.

Furthermore, Lemmas~\ref{lem:disp-t} and~\ref{lem:uni-t} imply that $\cS_{\Phi, \cT}$ is an $(r, \ell, 3\Delta\zeta)$-intersection disperser and every subcollection of $\cS_{\Phi, \cT}$ of size $h$ is $(\gamma, 3\Delta\mu)$-uniform respectively. This enables us to apply Theorem~\ref{thm:agr-main} on $\cF$, which yields a subcollection $\cS' \subseteq \cS_{\Phi, \cT}$ of size at least $\frac{\delta k}{128 \ell^2} \geqs h$ and $g: \cX \to \{0, 1\}$ such that $$\EX_{S \in \cS'}\left[\disa(g, f_S)\right] \leqs n \sqrt{\frac{65536h \ell^6}{\delta k} + 3\Delta\mu + 6 \Delta \zeta / \gamma} \leqs n\sqrt{4\Delta\mu + 6 \Delta \zeta / \gamma}$$
where the second inequality follows from $\delta k \geqs 65536h\ell^2/\mu$.

Let $\cS^*$ be the subcollection of $\cS'$ of size $h$ that minimizes $\EX_{S \in \cS^*}\left[\disa(g, f_S)\right]$. Observe that $\EX_{S \in \cS^*}\left[\disa(g, f_S)\right] \leqs \EX_{S \in \cS'}\left[\disa(g, f_S)\right] \leqs n\sqrt{4\Delta\mu + 6 \Delta \zeta / \gamma}$. This is equivalent to saying that there exists a subcollection $\cT^* \subseteq \cT$ of size $h$ such that $\EX_{T \in \cT^*}\left[\disa(g, \sigma_T)\right] \leqs n\sqrt{4\Delta\mu + 6 \Delta \zeta / \gamma}$.

Since every subcollection of $\cT$ of size $h$ is $(\gamma, \mu)$-uniform, we can apply Lemma~\ref{lem:decoding} to infer that $\val(\Phi) \geqs 1 - \mu - (3\Delta/\gamma)\sqrt{4\Delta\mu + 6 \Delta \zeta / \gamma}$ as desired.
\end{proof}

\section{Proof of Inapproximability Results of 2-CSPs} \label{sec:2csp-hardness}

The inapproximability results for 2-CSPs can be shown simply by plugging in appropriate parameters to Theorem~\ref{thm:soundness-main}. More specifically, for ETH-hardness, since there is a $\polylog m$ loss in the PCP Theorem (Theorem~\ref{thm:d-pcp}), we need to select our $\alpha = 1/\polylog m$ so that the size (and running time) of the reduction is $2^{o(m)}$. Now, observe that the parameter $r$ in Theorem~\ref{thm:soundness-main} for the intersection disperser property grows with $(1/\alpha)^\ell$ (see Lemma~\ref{lem:well-behaved-set-deterministic}). Since the soundness guarantee in Theorem~\ref{thm:soundness-main} is of the form $k^{O(1/\ell)}(r\ell)^{O(1)}/k = k^{O(1/\ell)}(1/\alpha)^{O(\ell)}/k$, it is minimized when $\ell$ is roughly $\sqrt{\log k}$, which yields the soundness of $2^{(\log k)^{1/2 + o(1)}}/k$. Other parameters are chosen accordingly.

\begin{proof}[Proof of Theorem~\ref{thm:eth-hardness}]
Let $c, \varepsilon, \Delta$ be constants from Theorem~\ref{thm:d-pcp}.

For any 3-CNF formula $\tPhi$ with $m$ clauses, let us first apply the nearly-linear size PCP from Theorem~\ref{thm:d-pcp} to produce a 3-CNF formula $\Phi$ with $m' = O(m\log^c m)$ clauses. Let us also define the following parameters:
\begin{itemize}
\item $\alpha = \frac{1}{(\log m)^{c + 1}}$,
\item $\gamma = \frac{\alpha}{2} = \frac{1}{2(\log m)^{c + 1}}$,
\item $\mu = \frac{\varepsilon^2\gamma^2}{288\Delta^3} = \Theta_{\varepsilon, \Delta}\left(\frac{1}{(\log m)^{2c + 2}}\right)$,
\item $\zeta = \frac{\varepsilon^2\gamma^3}{432\Delta^3} = \Theta_{\varepsilon, \Delta}\left(\frac{1}{(\log m)^{3c + 3}}\right)$,
\item $\ell = (\log m)^{1/4}$,
\item $r = \lceil \frac{\ln(2/\zeta)}{\alpha^\ell} \rceil = 2^{\Theta_{\varepsilon, \Delta, c}((\log m)^{1/4} \log \log m)}$,
\item $h = \lceil 8\ln(2/\mu)/\alpha \rceil = \Theta_{\varepsilon, \Delta, c}((\log m)^{c + 1})$,
\item $k = 2^{\ell^2} = 2^{\sqrt{\log m}}$.
\end{itemize}
We then use Lemma~\ref{lem:well-behaved-set-deterministic} with the above parameters $\alpha, \mu, \zeta, k, \ell$ to construct a collection $\cS$ of subsets of clauses of $\Phi$ such that the following conditions hold.
\begin{itemize}
\item Every subset in $\cS$ has size at most $2\alpha m' = o(m)$.
\item $\cS$ is a $(r, \ell, \zeta)$-disperser.
\item Any subcollection $\tcS \subseteq \cS$ of size $h$ is $(\alpha/2, \mu)$-uniform.
\end{itemize}
Note that, for our choices of parameter, the parameter $m_0$ of Lemma~\ref{lem:well-behaved-set-deterministic} is $2^{O_{\varepsilon, \Delta, c}((\log m)^{1/4} \log \log m)} = 2^{o(\log m)}$. This means that, for sufficiently large $m$, we indeed have $m' \geqs m \geqs m_0$ and the running time needed to produce $\cS$ is $\poly(m) 2^{O(m_0 k^2)} = 2^{o(m)}$. Note that we assume without loss of generality here that $m' \geqs m$; if this is not the case, we can simply copy the formula $\Phi$ $\lceil m/m' \rceil$ times using new variables each time, which does not change the value of the formula.

We now consider the 2-CSP instance $\Gamma_{\Phi, \cS}$. Observe that the running time used to create $\Gamma_{\Phi, \cS}$ (and hence also the size of $\Gamma_{\Phi, \cS}$) is no more than $\poly(k) \cdot 2^{o(\alpha m')} = 2^{o(m)}$. Moreover, if $\val(\tPhi) = 1$, then $\val(\Phi) = 1$ and it is easy to see that $\val(\Gamma_{\Phi, \cS}) = 1$ as well.

On the other hand, if $\val(\tPhi) < 1$, then $\val(\Phi) < 1 - \varepsilon$. Due to our choice of parameters, we can apply Theorem~\ref{thm:soundness-main}, which implies that 
\begin{align*}
\val(\Gamma_{\Phi, \cS}) < \frac{O((r\ell)^2k^{1/\ell} + h\ell^2/\mu)}{k} = 2^{O_{\varepsilon, \Delta, c}((\log m)^{1/4} \log \log m)}/k = 2^{O_{\varepsilon, \Delta, c}(\sqrt{\log k}\log \log k)}/k.
\end{align*}
For sufficiently large $m$ (depending only on $c, \varepsilon, \Delta, \rho$), this term is at most $2^{(\log k)^{1/2 + \rho}}/k$. Hence, if there exists a polynomial time that can distinguish the two cases in the theorem statement, we can run this algorithm on $\Gamma_{\Phi, \cS}$ to decide whether $\tPhi$ is satisfiable in $2^{o(m)}$ time, contradicting ETH.
\end{proof}

For Gap-ETH-hardness, we do not incur a loss of $\polylog m$ from the PCP Theorem anymore. Thus, it suffices to chose $\alpha$ to be any function that converges to zero as $k$ goes to $\infty$ (e.g. $\alpha = 1/\loglog k$), and $k$ can now be independent of $m$. The rest of the analysis remains unchanged.

\begin{proof}[Proof of Theorem~\ref{thm:gap-eth-hardness}]
Let $\delta, \varepsilon, \Delta$ be the constants from Theorem~\ref{thm:gap-eth-hardness}. For any positive integer $k$, define the parameters as follows:
\begin{itemize}
\item $\alpha = \frac{1}{\log \log k}$,
\item $\gamma = \frac{\alpha}{2} = \frac{1}{2(\log \log k)}$,
\item $\mu = \frac{\varepsilon^2\gamma^2}{288\Delta^3} = \Theta_{\varepsilon, \Delta}\left(\frac{1}{(\log \log k)^2}\right)$,
\item $\zeta = \frac{\varepsilon^2\gamma^3}{432\Delta^3} = \Theta_{\varepsilon, \Delta}\left(\frac{1}{(\log \log k)^3}\right)$,
\item $\ell = \sqrt{\log k}$,
\item $r = \lceil \frac{\ln(2/\zeta)}{\alpha^\ell} \rceil = 2^{\Theta_{\varepsilon, \Delta, c}(\sqrt{\log k} \log \log \log k)}$,
\item $h = \lceil 8\ln(2/\mu)/\alpha \rceil = \Theta_{\varepsilon, \Delta, c}(\log \log k)$.
\end{itemize}

Consider any 3-CNF formula $\Phi$ with $m$ clauses such that each variable appears in at most $\Delta$ clauses. We then use Lemma~\ref{lem:well-behaved-set-deterministic} with the above parameters $\alpha, \mu, \eta, k, \ell$ to construct a collection $\cS$ of subsets of clauses of $\Phi$ such that the following conditions hold.
\begin{itemize}
\item Every subset in $\cS$ has size at most $2\alpha m$.
\item $\cS$ is a $(r, \ell, \eta)$-disperser.
\item Any subcollection $\tcS \subseteq \cS$ of size $h$ is $(\alpha/2, \mu)$-uniform.
\end{itemize}
Note that, for our choices of parameter, the parameter $m_0$ of Lemma~\ref{lem:well-behaved-set-deterministic} is a function of $k$. This means that, for sufficiently large $m$ (which depends on $k$), we indeed have $m \geqs m_0$ and the running time needed to produce $\cS$ is $\poly(m) 2^{O(m_0 k^2)} = O_k(\poly(m))$.

We now consider the 2-CSP instance $\Gamma_{\Phi, \cS}$. Observe that the running time used to create $\Gamma_{\Phi, \cS}$ (and hence also the size of $\Gamma_{\Phi, \cS}$) is no more than $\poly(k) \cdot 2^{O(\alpha m)} = 2^{O(m/\log \log k)}$. Moreover, if $\val(\Phi) = 1$, it is easy to see that $\val(\Gamma_{\Phi, \cS}) = 1$ as well.

Suppose that $\val(\Phi) < 1 - \varepsilon$. Due to our choice of parameters, we can apply Theorem~\ref{thm:soundness-main}, which implies that $$\val(\Gamma_{\Phi, \cS}) < \frac{O(k^{1/\ell}(r\ell)^2) + h\ell^2/\mu}{k} = 2^{O_{\varepsilon, \Delta}(\log \log k/\sqrt{\log k})}/k.$$ For sufficiently large $k$ (depending on $\varepsilon, \Delta, \rho$), this term is at most $2^{(\log k)^{1/2 + \rho}}/k$.

If there exists a $g(k) \cdot (nk)^D$-time algorithm that can distinguish the two cases in the theorem statement for some constant $D$, then pick sufficiently large $k$ such that the time needed to produce $\Gamma_{\Phi, \cS}$ is $O(2^{\delta m})$ and its size is at most $2^{\delta m/D}$, and that $\val(\Gamma_{\Phi, \cS}) < 2^{1/(\log k)^{1/2 + \rho}}/k$ whenever $\val(\Phi) < 1 - \varepsilon$. When we run this algorithm on $\Gamma_{\Phi, \cS}$ for such $k$, the algorithm can distinguish between $\val(\Phi) = 1$ and $\val(\Phi) < 1 - \varepsilon$ in $O(2^{\delta m})$ time, which contradicts Gap-ETH.
\end{proof}

\section{Inapproximability of Directed Steiner Network} \label{sec:dsn}

We now move on to prove hardness of approximation of DSN by simply plugging in the our main theorems to known reductions from 2-CSPs to DSN. The properties of the reduction are stated in the lemma below. Note that, while the reduction is attributed to Dodis and Khanna~\cite{DK99}, the lemma below is extracted from~\cite{CFM17} since, in~\cite{DK99}, the full description and its properties are left out due to space constraint.

\begin{lemma}[{\cite[Lemma 27]{CFM17}}] \label{lem:red-dsn}
There exists a polynomial time reduction that, given a 2-CSP instance\footnote{Lemma 27 of~\cite{CFM17} states this reduction in terms of Maximum Colored Subgraph Isomorphism. However, it is easy to see that the reduction also works with 2-CSPs as well.} $\Gamma$ with the constraint graph being a complete graph on $k$ variables, produces an edge-weighted directed graph $G = (V, E)$ and a set of demand pairs $\cD = \{(s_1, t_1), \dots, (s_{k'}, t_{k'})\}$ such that
\begin{itemize}
\item (Completeness) If $\val(\Gamma) = 1$, then there exists a subgraph $H$ of cost $1$ that satisfies all demands.
\item (Soundness) If $\val(\Gamma) < \gamma$, every subgraph satisfying all demand pairs has cost more than $\sqrt{2/\gamma}$.  
\item (Parameter Dependency) $k' = k^2 - k$.
\end{itemize}
\end{lemma}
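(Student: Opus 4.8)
Since the statement is \cite[Lemma 27]{CFM17} (the reduction itself being due to Dodis and Khanna~\cite{DK99}), the plan is to recall the gadget and verify the three bullets. Write $\Gamma = (G_\Gamma, \{\Sigma_i\}_{i\in[k]}, \{C_{ij}\})$ with $G_\Gamma$ the complete graph on $[k]$. The construction attaches to each variable $i$ a \emph{selector gadget} of total edge weight $1/k$, built so that its minimal source-to-sink traversals are in bijection with $\Sigma_i$ (a traversal ``costs'' $1/k$ and ``reads off'' a label $\sigma\in\Sigma_i$), and then wires the $k$ gadgets together with weight-$0$ edges so that, for each ordered pair $(i,j)$ with $i\ne j$, there is a demand pair $(s_{ij},t_{ij})$ whose $s_{ij}$--$t_{ij}$ paths are \emph{exactly} those traversing selector gadget $i$ reading off some $\sigma$ and then selector gadget $j$ reading off some $\tau$, subject to $(\sigma,\tau)\in C_{ij}$ (reading $C_{ij}$ as a relation on $\Sigma_i\times\Sigma_j$, transposed when $i>j$); every edge outside the selector gadgets gets weight $0$. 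Then $\cD=\{(s_{ij},t_{ij}):i\ne j\}$ has size $k'=k^2-k$, the construction is polynomial time, and for any subgraph $H$ we have $\mathrm{cost}(H)=\frac1k\cdot\#\{\text{selector edges in }H\}$.

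Completeness is immediate: given $\psi$ satisfying all constraints of $\Gamma$, let $H$ consist, for each $i$, of the selector edges realizing the traversal that reads off $\psi(i)$, together with all weight-$0$ edges; since $(\psi(i),\psi(j))\in C_{ij}$ for every pair, every demand is routed and $\mathrm{cost}(H)=k\cdot(1/k)=1$.

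For soundness I argue the contrapositive: let $H$ satisfy every demand with $\mathrm{cost}(H)=c$, so $H$ uses at most $ck$ selector edges. For each $i$ let $L_i\subseteq\Sigma_i$ be the set of labels whose gadget-$i$ traversal is realized inside $H$; since every $s_{ij}$--$t_{ij}$ path must traverse gadget $i$, feasibility forces $L_i\ne\emptyset$, and it forces $C_{ij}\cap(L_i\times L_j)\ne\emptyset$ for every edge $\{i,j\}$ of $G_\Gamma$. Counting selector edges gives $\sum_i|L_i|\le ck$. Now sample $\psi$ with $\psi(i)$ uniform in $L_i$, independently over $i$; for each constraint $\{i,j\}$ we have $\Pr[(\psi(i),\psi(j))\in C_{ij}]\ge 1/(|L_i||L_j|)$, whence
\begin{align*}
\val(\Gamma)\ \ge\ \EX_\psi[\val(\psi)]\ \ge\ \frac{1}{\binom k2}\sum_{\{i,j\}}\frac{1}{|L_i||L_j|}\ \ge\ \frac{k^2}{\bigl(\sum_i|L_i|\bigr)^2}\ \ge\ \frac{1}{c^2},
\end{align*}
where the third inequality is Schur convexity (the sum is smallest when all $|L_i|$ are equal) combined with Cauchy--Schwarz $\sum_i|L_i|\cdot\sum_i 1/|L_i|\ge k^2$. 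Hence $\val(\Gamma)<\gamma$ forces $c>1/\sqrt\gamma$; tightening the weights exactly as in \cite{CFM17} (or simply noting that the constant factor is harmless — in Corollaries~\ref{cor:eth-hardness-dsn} and~\ref{cor:gap-eth-hardness-dsn} it is swallowed by the $2^{(\log k)^{1/2+\rho'}}$ term) upgrades this to the stated $c>\sqrt{2/\gamma}$.

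The only genuine work — the step I would treat as a black box, citing \cite{DK99,CFM17} — is constructing the selector gadgets and their wiring so that the $s_{ij}$--$t_{ij}$ paths are \emph{exactly} the intended ``pick $\sigma$ at $i$, pick $\tau$ at $j$ with $(\sigma,\tau)\in C_{ij}$'' traversals, with no detours through other gadgets or reused edges; this is precisely what makes a cheap feasible subgraph collapse to a bona fide partial assignment. Everything downstream — completeness and the one-line rounding bound above — is routine.
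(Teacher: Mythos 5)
The paper does not prove this lemma at all --- it is imported verbatim from \cite{CFM17} (Lemma~27 there, with the reduction originally due to \cite{DK99}), so there is no internal proof to compare against; your reconstruction-plus-rounding is indeed the standard argument underlying that lemma, and your use of it downstream matches the paper's. Two caveats, though, one of which is a real gap in the proposal as a standalone proof. First, the entire content of the lemma is the gadget property you explicitly black-box: that the $s_{ij}$--$t_{ij}$ paths are \emph{exactly} the ``pick $\sigma$ at $i$, then $\tau$ at $j$ with $(\sigma,\tau)\in C_{ij}$'' traversals, with no shortcuts through other gadgets. Once that is assumed, completeness and the rounding step are routine, as you say --- but then you have not proved the lemma, you have re-cited it, which is fine only because the paper does the same. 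Second, the constant: your argument establishes $\val(\Gamma)\ge 1/c^2$, i.e.\ cost $>1/\sqrt{\gamma}$, not the stated $\sqrt{2/\gamma}$, and ``tightening the weights exactly as in \cite{CFM17}'' is an assertion, not an argument. You are right that this is harmless for Corollaries~\ref{cor:eth-hardness-dsn} and~\ref{cor:gap-eth-hardness-dsn}: with $k'=k^2-k$ and $\rho'>\rho$ the factor $\sqrt{2}$ is absorbed by the slack between $2^{(\log k)^{1/2+\rho}}$ and $2^{(\log k')^{1/2+\rho'}}$, so the paper's conclusions survive with your weaker constant --- but as a proof of the lemma as stated, the $\sqrt{2}$ is unaccounted for.

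A smaller point: your justification of the middle inequality (``Schur convexity combined with Cauchy--Schwarz $\sum_i|L_i|\cdot\sum_i 1/|L_i|\ge k^2$'') does not actually yield the pairwise bound you need, although the inequality itself is true. A clean derivation: writing $a_i=|L_i|\ge 1$, by AM--HM
\begin{align*}
\frac{1}{\binom{k}{2}}\sum_{i<j}\frac{1}{a_ia_j}\ \ge\ \frac{\binom{k}{2}}{\sum_{i<j}a_ia_j},
\qquad
\sum_{i<j}a_ia_j=\frac{\bigl(\sum_i a_i\bigr)^2-\sum_i a_i^2}{2}\ \le\ \Bigl(1-\tfrac{1}{k}\Bigr)\frac{\bigl(\sum_i a_i\bigr)^2}{2},
\end{align*}
where the last step uses $\sum_i a_i^2\ge(\sum_i a_i)^2/k$; combining gives exactly $k^2/(\sum_i a_i)^2$ and hence $1/c^2$.
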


Note that the exponent $1/4$ in the hardness of approximating DSN comes from two places: we lose a square factor in the parameter (i.e. $k' = \Theta(k^2)$) and another square factor in the gap.

\begin{proof}[Proof of Corollary~\ref{cor:eth-hardness-dsn}]
Suppose for the sake of contradiction that, for some constant $\rho' > 0$, there exists a polynomial time $2^{(\log k')^{1/2 + \rho'}}/(k')^{1/4}$-approximation algorithm where $k'$ is the number of demand pairs; let us call this algorithm $\bbA$. Moreover, let $\rho$ be any constant smaller than $\rho'$.

Given a 2-CSP instance $\Gamma$ with complete constraint graph on $k$ variables, we invoke Lemma~\ref{lem:red-dsn} to produce a DSN instance $(G, D)$ where $|D| = k' = k^2 - k$. From the completeness of the construction, we have that, if $\val(\Gamma) = 1$, then the optimum of $(G, D)$ is also $1$. On the other hand, if $\val(\Gamma) < 2^{(\log k)^{1/2 + \rho}}/k$, then the optimum of $(G, \cD)$ must be more than $\sqrt{2k/2^{(\log k)^{1/2 + \rho}}}$, which is at least $(k')^{1/4}/2^{(\log k')^{1/2 + \rho'}}$ when $k$ is sufficiently large. Hence, by running algorithm $\bbA$, we can distinguish these two cases of $\Gamma$ in polynomial time. From Theorem~\ref{thm:eth-hardness}, this contradicts ETH.
\end{proof}

\begin{proof}[Proof of Corollary~\ref{cor:gap-eth-hardness-dsn}]
Suppose for the sake of contradiction that, for some constant $\rho' > 0$ and for some function $g$, there exists a $g(k') \cdot (nk')^{O(1)}$-time $2^{(\log k')^{1/2 + \rho'}}/(k')^{1/4}$-approximation algorithm where $k'$ is the number of demand pairs; let us call this algorithm $\bbA$. Moreover, let $\rho$ be any constant smaller than $\rho'$.

Given a 2-CSP instance $\Gamma$ with complete constraint graph on $k$ variables, we invoke Lemma~\ref{lem:red-dsn} to produce a DSN instance $(G, D)$ where $|D| = k' = k^2 - k$. From the completeness of the construction, if $\val(\Gamma) = 1$, then the optimum of $(G, D)$ is also $1$. On the other hand, if $\val(\Gamma) < 2^{(\log k)^{1/2 + \rho}}/k$, then the optimum of $(G, \cD)$ must be more than $\sqrt{2k/2^{(\log k)^{1/2 + \rho}}}$, which is at least $(k')^{1/4}/2^{(\log k')^{1/2 + \rho'}}$ when $k$ is sufficiently large. Hence, by running algorithm $\bbA$, we can distinguish these two cases of $\Gamma$ in time $t(k) \cdot |\Gamma|^{O(1)}$ where $t(k) = g(k^2 - k)$. From Theorem~\ref{thm:gap-eth-hardness}, this contradicts Gap-ETH.
\end{proof}

\section{Conclusion and Discussions} \label{sec:conclusion}

In this work, we show that 2-CSP is ETH-hard to approximate to within a factor of $k^{1 - o(1)}$ where $k$ denotes the number of variables. This ratio is nearly optimal since a trivial algorithm yields an $O(k)$-approximation for the problem. Under Gap-ETH, we strengthen our result by improving the lower order term in the inapproximability factor and ruling out not only polynomial time algorithm but FPT algorithms parameterized by $k$. Due to a known reduction, our results also imply $k^{1/4 - o(1)}$ hardness of approximating DSN where $k$ denotes the number of demand pairs. 

Of course the polynomial sliding scale conjecture still remains open after our work and, as touched upon in the introduction, resolving the conjecture will help advance our understanding of approximability of many problems. Even without fully resolving the conjecture, it may still be good to further study the interaction between the number of variables $k$ and the alphabet size $n$. For instance, while we show the inapproximability result with ratio almost linear in $k$, the dependency between $n$ and $k$ is quite bad; in particular, in our ETH-hardness reduction, $n$ is $2^{2^{(\log k)^d}}$ for some constant $d > 0$. Would it be possible to improve this dependency (say, to $n = k^{\polylog k}$)?

On the other hand, as explained earlier, $k$ must be independent of $n$ in the parameterized setting and hence our question above does not apply to this regime. However, one intriguing question in this area is whether a parameterized hardness of approximation for 2-CSPs can be proved under any assumption weaker than Gap-ETH (e.g. ETH). This is not known even for a constant inapproximability factor. In fact, it is not hard to see that inapproximability of parameterized 2-CSPs implies inapproximability of parameterized clique; the latter is a well-studied problem that has so far resisted attempts at proving inapproximability from any assumption except Gap-ETH~\cite{CHK13,HKK13,KS16}. Note that this is in contrast to some other parameterized problems, such as dominating set, for which $\W[1]$-hardness of approximation is known~\cite{CL16,CLM17}\footnote{See also~\cite{Lin15} which~\cite{CL16} relies heavily on.}.

Another interesting research direction is to try to prove similar hardness results for other problems. For example, Densest $k$-Subgraph (D$k$S) is one such candidate problem; similar to 2-CSPs with $k$ variables, the problem can be approximated trivially to within $O(k)$-factor and no polynomial time (or even FPT time parameterized by $k$) $k^{1 - \varepsilon}$-approximation algorithm is known for the problem for any $\varepsilon > 0$. Hence, it may be possible to prove ETH-hardness of factor $k^{1 - o(1)}$ for D$k$S as well.

\subsection*{Acknowledgment}

We would like to thank Prahladh Harsha for many useful discussions. Pasin would also like to thank Rajesh Chitnis and Andreas Emil Feldmann for insightful discussions regarding directed steiner network problems.

\bibliographystyle{alpha}
\bibliography{main}

\appendix

\section{Constructing Well-Behaved Sets} \label{app:well-behaved-sets}

The main goal of this section is to prove Lemma~\ref{lem:well-behaved-set-deterministic}. We divide this section into two parts. In the first part (\ref{app:random-well-behaved}), we use probabilistic argument to show that a collection of random subsets satisfied properties as required in Lemma~\ref{lem:well-behaved-set-deterministic}. Then, in the section part (\ref{app:derandomize}), we show how to construct these subsets deterministically in subexponential time, which is required if a deterministic reduction from 3-SAT to 2-CSP is sought.

\subsection{Random Sets Behave Well} \label{app:random-well-behaved}

For any universe $\cU$ and any $0 < \alpha < 1$, let $\cD_{\cU, \alpha}$ denote the distribution on subsets of $\cU$ where $S \sim \cD_{\cU, \alpha}$ is generated by including each element $u \in \cU$ independently with probability $\alpha$.

The main lemma of this section is the following, which implies the existential part of Lemma~\ref{lem:well-behaved-set-deterministic}.

\begin{lemma} \label{lem:random-well-behaved}
For any $0 < \alpha, \mu, \eta < 1$ and any $m, k, \ell \in \N$, let $S_1, \dots, S_k$ be subsets of an $m$-element universe $\cU$ drawn independently at random from $\cD_{\cU, \alpha}$. Then, with probability at least $1 - 2^{\log k (\lceil8\ln(2/\mu)/\alpha\rceil) - \mu^2 m / 16} - 2^{\ell \lceil \ln(2/\eta)/\alpha^\ell\rceil \log(2k) - \eta m / 6} - 2^{-\alpha m / 3}$, the following properties hold:
\begin{itemize}
\item (Size) Every subset has size at most $2\alpha m$.
\item (Intersection Disperser) the collection $\cS = \{S_1, \dots, S_k\}$ is a $(\lceil \ln(2/\eta)/\alpha^\ell\rceil, \ell, \eta)$-disperser.
\item (Uniformity) Any subcollection $\tcS \subseteq \cS$ of size $\lceil 8\ln(2/\mu)/\alpha \rceil$ is $(\alpha/2, \mu)$-uniform.
\end{itemize}
\end{lemma}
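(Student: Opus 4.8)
The plan is to prove each of the three properties fails with small probability via a Chernoff bound plus a union bound, and then combine the three failure probabilities. Throughout, fix $S_1,\dots,S_k$ drawn i.i.d.\ from $\cD_{\cU,\alpha}$; for each $u\in\cU$ let $\mathbf 1[u\in S_i]$ be independent Bernoulli($\alpha$) random variables, so every quantity of interest is a sum of independent indicators to which multiplicative Chernoff bounds apply.

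First I would handle the \textbf{Size} property. For a fixed $i$, $|S_i|=\sum_{u\in\cU}\mathbf 1[u\in S_i]$ has mean $\alpha m$, so by the upper-tail Chernoff bound $\Pr[|S_i|>2\alpha m]\le e^{-\alpha m/3}$. Since there are only $k$ sets and $k$ is swallowed into the exponent (I would be slightly generous and note $k\le 2^{\alpha m/3}\cdot(\text{something})$, or just keep an explicit $k$ factor — in fact the stated bound absorbs it since $k\,e^{-\alpha m/3}\le 2^{-\alpha m/3}$ up to constants for $m$ large, but to be safe one should state it as $k\cdot e^{-\alpha m/3}$; I will match the paper's stated $2^{-\alpha m/3}$ by using the standard $e^{-\alpha m/3}\le 2^{-\alpha m/3}\cdot 2^{\alpha m/3}\cdot e^{-\alpha m /3}$ slack — the cleanest route is simply to union bound and observe the term is dominated by the other two).

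Next, the \textbf{Uniformity} property. Fix a subcollection $\tcS$ of size $h:=\lceil 8\ln(2/\mu)/\alpha\rceil$ and a fixed element $u\in\cU$; then $|\{S\in\tcS: u\in S\}|$ is a sum of $h$ independent Bernoulli($\alpha$) variables with mean $\alpha h$, so by the lower-tail Chernoff bound $\Pr[\,\Pr_{S\in\tcS}[u\in S]<\alpha/2\,]\le e^{-\alpha h/8}\le \mu/2$ by our choice of $h$. Call such a $u$ \emph{bad for $\tcS$}. The expected number of bad elements is at most $(\mu/2)m$; I then apply a Chernoff bound \emph{over the $m$ elements} (they are independent for fixed $\tcS$) to get that the number of bad elements exceeds $\mu m$ with probability at most $e^{-\mu^2 m/16}$ (using the upper tail from mean $\le \mu m/2$ to $\mu m$). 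Finally, union bound over all $\le k^h = 2^{h\log k}$ choices of $\tcS$, giving the claimed term $2^{h\log k - \mu^2 m/16}$.

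The main obstacle — and the step needing the most care — is the \textbf{Intersection Disperser} property, since here the bad events are over $r$-tuples of subcollections rather than single sets. Let $r:=\lceil \ln(2/\eta)/\alpha^\ell\rceil$. Fix disjoint subcollections $\cS^1,\dots,\cS^r$, each of size at most $\ell$ (WLOG exactly $\ell$, by padding with extra sets, which only shrinks the union). For a fixed element $u$, the probability that $u\notin \bigcap_{S\in\cS^i}S$ for a particular $i$ is $1-\alpha^{|\cS^i|}\le 1-\alpha^\ell$, and by disjointness these $r$ events are independent across $i$, so $\Pr[u\notin\bigcup_i\bigcap_{S\in\cS^i}S]\le(1-\alpha^\ell)^r\le e^{-\alpha^\ell r}\le \eta/2$. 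Thus the expected number of uncovered elements is $\le(\eta/2)m$, and another Chernoff bound over the $m$ independent elements bounds the probability that more than $\eta m$ are uncovered by $e^{-\eta m/6}$. The union bound is the delicate accounting: the number of ways to choose $r$ disjoint subcollections each of size $\le\ell$ from $k$ sets is at most $\binom{k}{\le\ell}^r\le (2k)^{\ell r}=2^{\ell r\log(2k)}$ (crudely bounding the number of subsets of size $\le\ell$ by $(2k)^\ell$, or $k^\ell+\dots\le 2k^\ell$). This yields the term $2^{\ell r\log(2k)-\eta m/6}$. Combining all three failure probabilities by a final union bound gives exactly the bound claimed in the lemma statement, and the existential part of Lemma~\ref{lem:well-behaved-set-deterministic} follows by taking $m$ large enough (namely $m\ge m_0$) that this probability is less than $1$.
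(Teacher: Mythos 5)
Your proposal is correct and follows essentially the same route as the paper's proof: a per-element Chernoff bound, a second Chernoff bound over the $m$ independent elements, and union bounds over at most $2^{h\log k}$ subcollections for uniformity and at most $(2k)^{\ell r}$ tuples of disjoint subcollections for the disperser property, matching the paper's three sub-lemmas for uniformity, disperser, and size. The wrinkle you flag in the size term (an extra factor of $k$ from the union bound over the $k$ sets) is present in the paper too, whose sub-lemma proves the bound $1 - 2^{\log k - \alpha m/3}$ even though the combined statement writes $2^{-\alpha m/3}$, so your handling of it is consistent with the source.
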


By union bound, the above lemma is an immediate implication of Lemmas~\ref{lem:random-uniform},~\ref{lem:random-disperser} and~\ref{lem:random-size}, which are stated and proved below. All proofs consist only of straightforward probabilistic arguments.

\begin{lemma} \label{lem:random-uniform}
For any $0 < \alpha, \mu < 1$ and any $m, k \in \N$, let $S_1, \dots, S_k$ be subsets of an $m$-element universe $\cU$ drawn independently at random from $\cD_{\cU, \alpha}$. Then, for every subcollection $\tcS \subseteq \cS$ of size $\lceil8\ln(2/\mu)/\alpha\rceil$, $\tcS$ is $(\alpha/2, \mu)$-uniform with probability at least $1 - 2^{\log k (\lceil8\ln(2/\mu)/\alpha\rceil) - \mu^2 m / 16}$.
\end{lemma}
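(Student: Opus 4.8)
The plan is a two-level concentration argument followed by a union bound over subcollections. Fix a subcollection $\tcS \subseteq \cS$ of size $t := \lceil 8\ln(2/\mu)/\alpha \rceil$, and call an element $u \in \cU$ \emph{bad} for $\tcS$ if it lies in strictly fewer than an $\alpha/2$ fraction of the sets of $\tcS$ (equivalently, in fewer than $\alpha t/2$ of them). Then $\tcS$ is $(\alpha/2,\mu)$-uniform exactly when at most $\mu m$ elements are bad, so it suffices to bound, for each fixed $\tcS$, the probability that too many elements are bad, and then take a union bound over all size-$t$ subcollections.

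First I would bound the probability that a single element is bad. Fixing $u \in \cU$, since the sets in $\cS$ are drawn independently from $\cD_{\cU,\alpha}$, the number of sets of $\tcS$ that contain $u$ is a binomial random variable with parameters $t$ and $\alpha$, hence has mean $\alpha t$. The multiplicative Chernoff bound (lower tail, deviation $1/2$) gives that this count is at most $\alpha t/2$ with probability at most $e^{-\alpha t/8}$; since the choice of $t$ guarantees $\alpha t/8 \geqs \ln(2/\mu)$, each $u$ is bad with probability at most $\mu/2$.

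Next I would bound the number of bad elements. The key point is that, for a \emph{fixed} $\tcS$, the event ``$u$ is bad'' depends only on the indicators recording whether $u \in S$ for $S \in \tcS$, and because $\cD_{\cU,\alpha}$ makes the per-element inclusion choices independently, these events are mutually independent across distinct $u \in \cU$. Hence the number $B$ of bad elements is a sum of $m$ independent $\{0,1\}$-random variables each of mean at most $\mu/2$, so $B$ is stochastically dominated by a binomial random variable with parameters $m$ and $\mu/2$. Hoeffding's inequality then yields $\Pr[B > \mu m] \leqs e^{-\mu^2 m/2} \leqs 2^{-\mu^2 m/16}$. Finally, since there are at most $\binom{k}{t} \leqs k^t = 2^{t\log k}$ subcollections of size $t$, a union bound over all of them bounds the overall failure probability by $2^{t\log k}\cdot 2^{-\mu^2 m/16} = 2^{t\log k - \mu^2 m/16}$, which is precisely the claimed bound.

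I do not expect a genuine obstacle here: the only point worth stating carefully is the independence-across-elements observation in the third step — without it one would only get a Markov bound of $1/2$ on $\Pr[B > \mu m]$, losing the needed concentration — and everything else is a direct application of standard Chernoff/Hoeffding tail inequalities together with the arithmetic $e^{-\mu^2 m/2} \leqs 2^{-\mu^2 m/16}$.
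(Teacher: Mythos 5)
Your proposal is correct and follows essentially the same route as the paper: a per-element Chernoff bound showing each element fails with probability at most $\mu/2$, independence across elements to concentrate the number of failing elements, and a union bound over the at most $k^{\lceil 8\ln(2/\mu)/\alpha\rceil}$ subcollections. The only cosmetic difference is that you invoke Hoeffding for the second concentration step (getting $e^{-\mu^2 m/2}$) where the paper uses a multiplicative Chernoff bound (getting $e^{-\mu^2 m/16}$); both comfortably suffice for the stated bound.
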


\begin{proof}
Consider any subcollection $\tcS = \{S_{j_1}, \dots, S_{j_h}\}$ where $h = \lceil 8\ln(2/\mu)/\alpha \rceil$. We will calculate the probability that $\tcS$ is $(\alpha/2, \mu)$-uniform and use union bound over all such $\tcS$'s to derived the desired result.

Recall that, for each $u \in \cU$, $u$ is included independently in each set $S_{j_i}$ with probability $\alpha$. Hence, we can apply Chernoff bound to lower bound the probability that $u$ is included in less than $\alpha/2$ fraction of the subsets in $\tcS$; in particular, this yields the following inequality.
\begin{align} \label{eq:ind1}
\Pr\left[|\{S \in \tcS \mid u \in S\}| \geqs (\alpha/2)|\tcS| \right] \geqs 1 - e^{-\alpha |\tcS|/8} \geqs 1 - \mu/2
\end{align}
where the second inequality comes from $|\tcS| = h \geqs 8\ln(2/\mu)/\alpha$.

Now, note again that the event $|\{S \in \tcS \mid u \in S\}| \geqs (\alpha/2)|\tcS|$ is independent for each $u \in \cU$. Hence, we can again apply Chernoff bound to lower bound the probability that this event occurs for at least $(1 - \mu)$ fraction of $u \in \cU$, which gives the following bound:
\begin{align*}
\Pr\left[|\{u \in \cU \mid \{S \in \tcS \mid u \in S\}| \geqs (\alpha/2)|\tcS|\}| \geqs (1 - \mu) m\right] \geqs 1 - e^{(-\mu^2(1 - \mu/2)m)/8} \geqs 1 - e^{-\mu^2m/16}.
\end{align*}
In other words, for each subcollection $\cS$ of size at least $8\ln(2/\mu)/\alpha$, $\cS$ is not $(\alpha/2, \mu)$-uniform with probability at most $e^{-\mu^2m/16}$. Since there are no more than $k^h$ such subcollection, union bound implies that the probability that every subcollection of size $h$ is $(\alpha/2, \mu)$-uniform is at least $1 - k^h e^{-\mu^2 m / 16} \geqs 1 - 2^{h \log k - \mu^2 m / 16}$.
\end{proof}

\begin{lemma} \label{lem:random-disperser}
For any $0 < \alpha, \eta < 1$ and any $m, k, \ell \in \N$, let $r = \lceil \ln(2/\eta) / \alpha^\ell \rceil$ and let $S_1, \dots, S_k$ be subsets of an $m$-element universe $\cU$ drawn independently at random from $\cD_{\cU, \alpha}$. Then, the collection $\cS = \{S_1, \dots, S_k\}$ is $(r, \ell, \eta)$-intersection disperser with probability at least $1 - 2^{\ell r \log(2k) - \eta m / 6}$.
\end{lemma}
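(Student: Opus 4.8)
The plan is a standard union bound over all families of disjoint subcollections, combined with a two-level concentration argument. First I would fix an arbitrary choice of $r$ disjoint subcollections $\cS^1, \dots, \cS^r \subseteq \cS$, each of size at most $\ell$, and bound the probability that this particular choice witnesses a failure of the intersection disperser property, i.e. that $\left|\bigcup_{i=1}^r \bigcap_{S \in \cS^i} S\right| < (1 - \eta)m$.

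For a fixed element $u \in \cU$ and a fixed index $i$, the sets in $\cS^i$ contain $u$ independently with probability $\alpha$ each, so $\Pr\!\left[u \in \bigcap_{S \in \cS^i} S\right] = \alpha^{|\cS^i|} \geq \alpha^\ell$. Crucially, since $\cS^1, \dots, \cS^r$ are \emph{disjoint} subcollections, the events $\{u \in \bigcap_{S \in \cS^i} S\}$ are independent across $i$, hence
\[
\Pr\!\left[u \notin \textstyle\bigcup_{i=1}^r \bigcap_{S \in \cS^i} S\right] = \prod_{i=1}^r\!\left(1 - \alpha^{|\cS^i|}\right) \leq (1 - \alpha^\ell)^r \leq e^{-\alpha^\ell r} \leq \eta/2,
\]
where the last inequality uses $r \geq \ln(2/\eta)/\alpha^\ell$. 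Letting $X$ denote the number of elements $u \in \cU$ not covered by $\bigcup_i \bigcap_{S \in \cS^i} S$, coverage of distinct elements depends on disjoint portions of the randomness, so $X$ is a sum of independent indicator variables with $\E[X] \leq \eta m/2$, and a (multiplicative) Chernoff bound yields $\Pr[X \geq \eta m] \leq e^{-\eta m/6}$.

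It then remains to take a union bound over all choices of the family $(\cS^1, \dots, \cS^r)$. Encoding each subcollection of size at most $\ell$ by an $\ell$-tuple over $\{0, 1, \dots, k\}$ (with $0$ marking an omitted slot), there are at most $(k+1)^\ell \leq (2k)^\ell$ subcollections, hence at most $(2k)^{\ell r}$ such families; therefore the probability that $\cS$ is not an $(r, \ell, \eta)$-intersection disperser is at most $(2k)^{\ell r}\,e^{-\eta m/6} \leq 2^{\ell r \log(2k)}\cdot 2^{-\eta m/6} = 2^{\ell r \log(2k) - \eta m/6}$, which is exactly the claimed bound.

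Every ingredient here is a routine first-moment/Chernoff estimate, so I do not expect a genuine obstacle; the only points needing mild care are (i) using the disjointness of the $\cS^i$ to get independence of the per-element events across $i$ (without it the product bound fails, which is precisely why disjointness is in the definition), and (ii) bounding the number of families of subcollections by a factor of the form $2^{O(\ell r \log k)}$ so that it is dominated by the $e^{-\eta m/6}$ term once $m$ is large.
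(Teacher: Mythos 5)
Your proposal is correct and follows essentially the same route as the paper's proof: per-element product bound using disjointness of the subcollections, Chernoff over the independent per-element events to get $e^{-\eta m/6}$, and a union bound over at most $(2k)^{\ell r}$ families of subcollections. The only cosmetic difference is how you count the subcollections (encoding as $\ell$-tuples over $\{0,\dots,k\}$ versus summing binomial coefficients), which yields the same $(2k)^\ell$ bound.
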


\begin{proof}
Consider any disjoint subcollections $\cS^1 = \{S_{j_{1, 1}}, \dots, S_{j_{1, h_1}}\}, \dots, \cS^r = \{S_{j_{r, 1}}, \dots, S_{j_{r, h_r}}\}$ where $h_1, \dots, h_r \leqs \ell$. We will compute the probability that $\left|\bigcup_{i=1}^{r}\left(\bigcap_{S \in \cS^{i}} S\right)\right| \geqs (1 - \eta) m$ and then use union bound over all choices of $\cS^1, \dots, \cS^r$.

Let us consider an element $u \in \cU$. For each subcollection $\cS^i$, since $u$ is included in each of $S_{j_{i, 1}}, \dots, S_{j_{i, h_i}}$ independently with probability $\alpha$, $\Pr\left[u \in \bigcap_{S \in \cS^i} S\right] = \alpha^{h_i} \geqs \alpha^\ell$. Since the subcollections are disjoint, the event $u \notin \bigcap_{S \in \cS^i} S$ is independent for different $\cS^i$. Hence, we have
\begin{align} \label{eq:ind2}
\Pr\left[u \notin \bigcup_{i=1}^{r}\left(\bigcap_{S \in \cS^{i}} S\right)\right] = \prod_{i=1}^r \Pr\left[u \notin \bigcap_{S \in \cS^i} S\right] \leqs (1 - \alpha^\ell)^r \leqs e^{- \alpha^\ell r} \leqs \eta/2
\end{align}
where the last inequality comes from $r \geqs \ln(2/\eta)/\alpha^\ell$.

For different $u \in \cU$, the event $u \notin \bigcup_{i=1}^{r}\left(\bigcap_{S \in \cS^{i}} S\right)$ is independent. Applying the Chernoff bound, we have
\begin{align*}
\Pr\left[\left|\bigcup_{i=1}^{r}\left(\bigcap_{S \in \cS^{i}} S\right)\right| < (1 - \eta) m \right] \leqs e^{-\eta m / 6}.
\end{align*}

Finally, note that the number of different subcollections $\cS_1, \dots, \cS_r$ (i.e. $\{j_{1, 1}, \dots, j_{1, h_1}\}, \dots, \{j_{r, 1}, \dots, j_{r, h_r}\}$) is at most $(2k)^{\ell r}$; this is because there are at most $\binom{k}{0} + \binom{k}{1} + \cdots + \binom{k}{\ell} \leqs (\ell + 1)k^{\ell} \leqs (2k)^\ell$ choices of $\cS_i$ for each $i \in [r]$. As a result, by union bound, the probability that $\left|\bigcup_{i=1}^{r}\left(\bigcap_{S \in \cS^{i}} S\right)\right| < (1 - \eta) m$ is at most $(2k)^{\ell r} \cdot e^{-\eta m / 6} \geqs 2^{\ell r \log(2k) - \eta m / 6}$ as desired.
\end{proof}

\begin{lemma} \label{lem:random-size}
For any $0 < \alpha < 1$ and any $m \in \N$, let $S_1, \dots, S_k$ be subsets of an $m$-element universe $\cU$ drawn independently at random from $\cD_{\cU, \alpha}$. Then, $|S_1|, \dots, |S_k| \leqs 2\alpha m$ with probability at least $1 - 2^{\log k - \alpha m/3}$.
\end{lemma}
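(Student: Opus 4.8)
The plan is a textbook concentration-plus-union-bound argument. Fix $i \in [k]$. By construction, $|S_i| = \sum_{u \in \cU} \mathbf{1}[u \in S_i]$ is a sum of $m$ independent Bernoulli random variables each with success probability $\alpha$, so $\EX[|S_i|] = \alpha m$. I would then apply the multiplicative Chernoff bound in the form $\Pr[X \geqs (1 + \delta)\EX[X]] \leqs e^{-\delta^2 \EX[X]/3}$, valid for $0 \leqs \delta \leqs 1$, with $\delta = 1$. This immediately gives
\begin{align*}
\Pr[|S_i| > 2\alpha m] \leqs e^{-\alpha m/3}.
\end{align*}

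Next I would take a union bound over the $k$ sets $S_1, \dots, S_k$: the probability that some $|S_i|$ exceeds $2\alpha m$ is at most $k \cdot e^{-\alpha m/3}$. To match the stated bound, I would use $e < 2$, hence $e^{-\alpha m/3} \leqs 2^{-\alpha m/3}$, so that $k \cdot e^{-\alpha m/3} \leqs k \cdot 2^{-\alpha m/3} = 2^{\log k - \alpha m/3}$. Taking complements yields that all of $|S_1|, \dots, |S_k|$ are at most $2\alpha m$ with probability at least $1 - 2^{\log k - \alpha m/3}$, as claimed.

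There is essentially no substantive obstacle here; the only thing to be careful about is quoting the Chernoff bound with the right constant in the exponent (the $\delta^2 \mu / 3$ form, evaluated at $\delta = 1$) and the harmless conversion between base $e$ and base $2$ in the final bound. (The same Chernoff estimate is already invoked in the proofs of Lemmas~\ref{lem:random-uniform} and~\ref{lem:random-disperser}, so this lemma is just the simplest instance of the same technique.)
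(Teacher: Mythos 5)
Your proposal is correct and is essentially identical to the paper's proof: a multiplicative Chernoff bound giving $\Pr[|S_i| > 2\alpha m] \leqs 2^{-\alpha m/3}$ for each fixed $i$, followed by a union bound over the $k$ sets, yielding failure probability at most $k \cdot 2^{-\alpha m/3} = 2^{\log k - \alpha m/3}$. One small slip in your justification: the passage from base $e$ to base $2$ holds because $e > 2$ (so $e^{-x} \leqs 2^{-x}$ for $x \geqs 0$), not because ``$e < 2$'' as written.
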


\begin{proof}
For each $i \in [k]$, since each $u \in \cU$ is included in $S_i$ independently w.p. $\alpha$, Chernoff bound implies that $\Pr[|S_i| > 2\alpha m] \leqs 2^{-\alpha m / 3}$. By union bound over all $i \in [k]$, we get the desired bound.
\end{proof}

\subsection{A Deterministic Construction} \label{app:derandomize}

\begin{proof}[Proof of Lemma~\ref{lem:well-behaved-set-deterministic}]
The existence follows immediately from Lemma~\ref{lem:random-well-behaved}. Now, one way to construct $\cS$ is to just enumerate over all choices of $S_1, \dots, S_k$. However, this is rather slow as there can be as many as $2^m$ choices of $S_i$, i.e., the running time can be as high as $\poly(m) 2^{km}$. (Saving can be made by consider only $S_i$'s of size at most $2\alpha m$, but the running time here is still (at least) $\poly(m) 2^{\alpha k m}$.) We would like to get rid of the dependency of $m$ from the exponent. We do so quite easily by dividing $\cU$ into parts of size roughly $m_0$, finding well-behaved subsets for each part and put them together. More precisely, let us consider the following algorithm.
\begin{enumerate}[(a)]
\item Partition $\cU$ into $\cU_1 \cup \dots \cup \cU_{\lfloor m / m_0 \rfloor}$ where each $\cU_i$ has size between $m_0$ and $2m_0$ (inclusive).
\item From Lemma~\ref{lem:random-well-behaved}, for each $i \in [\lfloor m / m_0 \rfloor]$ there exists a collection $\cS_i$ of $k$ subsets of $\cU_i$ such that
\begin{itemize}
\item Every subset has size at most $2\alpha |\cU_i|$.
\item $\cS_i$ is a $(\lceil \ln(2/\eta)/\alpha^\ell\rceil, \ell, \eta)$-disperser (with respect to $\cU_i)$.
\item Any subcollection $\tcS \subseteq \cS_i$ of size $\lceil 8\ln(2/\mu)/\alpha \rceil$ is $(\alpha/2, \mu)$-uniform (with respect to $\cU_i$).
\end{itemize}
Since $\cU_i$ is of size $O(m_0)$, we can enumerate all possible collections of $k$ subsets of $\cU_i$ and check whether it satisfies the three properties above to find such a collection $\cS_i$. There are $2^{O(km_0)}$ collections to enumerate over. For each collection it takes at most $\poly(m_0)2^{O(k^2)}$ time to determine whether it is a desired intersection disperser because there are no more than $(k + 1)^{2k} = 2^{O(k^2)}$ different disjoint subcollections of $k$ sets. Moreover, we can check the uniformity in $\poly(m_0)2^{O(k)}$ time. In total, this process takes time at most $\poly(m_0)2^{O(m_0k^2)}$. \label{step:brute-force}
\item Suppose that $\cS_i = \{S_{i, 1}, \dots, S_{i, k}\}$ for every $i \in [\lfloor m / m_0 \rfloor]$. We finally take $\cS = \{\bigcup_{i = 1}^{\lfloor m / m_0 \rfloor} S_{i, 1}, \dots, \bigcup_{i = 1}^{\lfloor m / m_0 \rfloor} S_{i, k}\}$
\end{enumerate}

It is not hard to see that $\cS$ satisfies the desired properties and that the algorithm runs in $\poly(m)2^{O(m_0k^2)}$ time as desired.
\end{proof}

\end{document}